\tikzstyle{trepp}=[circle,draw,minimum size=2.25mm,inner sep=0.1pt]
\newcommand{\etq}[1]{%
\draw (#1) node {\tiny $#1$};}
\newcommand{\NN}{\mathbb{N}}
\newcommand{\RR}{\mathbb{R}}
\newcommand{\QQ}{\mathbb{Q}}
\renewcommand{\leq}{\leqslant}
\renewcommand{\geq}{\geqslant}
\theoremstyle{plain}
\newtheorem{thm}{Theorem}
\newtheorem{lemma}[thm]{Lemma}
\newtheorem{proposition}[thm]{Proposition}
\newtheorem{cor}[thm]{Corollary}
\theoremstyle{definition}
\newtheorem{rem}[thm]{Remark}
\newtheorem{example}{Example}
\newcounter{exs}
\newcommand{\aadm}{\alpha^{(d,m)}}
\begin{document}
 
\title{Explicit solution of divide-and-conquer  dividing by a half recurrences with polynomial independent term}

\author{\textbf{Tom\'as M. Coronado, Arnau Mir, Francesc Rossell\'o}\\[1ex]
Department of Mathematics and Computer Science,\\
University of the Balearic Islands, E-07122 Palma, Spain\\
and\\
Balearic Islands Health Research Institute (IdISBa), E-07010 Palma, Spain}

\maketitle

\begin{abstract}
Divide-and-conquer   dividing by a half recurrences, of the form
$$
x_n =a\cdot x_{\left\lceil{n}/{2}\right\rceil}+a\cdot x_{\left\lfloor{n}/{2}\right\rfloor}+p(n),\quad n\geq 2,
$$
appear in many areas of applied mathematics, from the analysis of algorithms to the optimization of phylogenetic balance indices. The Master Theorems that solve these equations do  not provide the solution's explicit expression, only its big-$\Theta$ order of growth. In this paper we give an explicit expression ---in terms of the binary decomposition of $n$--- for the solution $x_n$ of a recurrence of this form, with given initial condition $x_1$, when the independent term $p(n)$ is a polynomial in  $\lceil{n}/{2}\rceil$ and $\lfloor{n}/{2}\rfloor$.
\end{abstract}

\section{Introduction}

\emph{Divide-and-conquer dividing by a half} recurrences, of the form
\begin{equation}
x_n =a\cdot x_{\left\lceil{n}/{2}\right\rceil}+a\cdot x_{\left\lfloor{n}/{2}\right\rfloor}+p(n),\quad n\geq 2,
\label{eqn:def}
\end{equation}
appear in many areas of applied mathematics. As evidence, for instance, the \emph{On-line Encyclopedia of Integer Sequences} (OEIS, \url{https://oeis.org}) contains more than 200 integer sequences with the keyword ``divide and conquer', many of which satisfy a recurrence like (\ref{eqn:def}) \cite{StephanOEIS}.

The most popular and best studied setting where such recurrences arise is in Computer Science, and more specifically in the analysis of balanced divide and conquer algorithms.  This type of algorithms solve a problem by splitting its input into  two or more parts of the same size, solving (recursively) the  problem on these parts, and finally combining these solutions into a solution for the global instance \cite[\S 2.6--7]{AHU}. Typical examples of this strategy are  the heapsort and mergesort algorithms and several fast integer and matrix multiplication methods. When the input is split into two parts of  the same size (up to a unit of difference, when the size of the original input is odd) and both parts contribute equally to the final solution, the algorithm's running time $t_n$ on an instance of size $n$ satisfies a recurrence of the form
$$
t_n=a\cdot t_{\left\lceil{n}/{2}\right\rceil}+a\cdot t_{\left\lfloor{n}/{2}\right\rfloor}+p(n)
$$
where the independent term $p(n)$, called in this context the \emph{toll function}, represents the cost of combining the solutions of the subproblems into a solution for the original problem.

Our interest in this type of recurrences stems from the study of phylogenetic balance indices. A \emph{phylogenetic tree}, the standard representation of the joint evolutionary history of a group of extant species (or other Operational Phylogenetic Units, OPU, like genes, languages, or myths), is, from the formal point of view, a leaf-labeled rooted tree. In a phylogenetic tree, its leaves represent  the species under study, its internal nodes represent their common ancestors, the root represents the most recent common ancestor of all of them, and the arcs represent direct descendance through mutations \cite{fel:04,semple}.  A phylogenetic tree is \emph{bifurcating}, or \emph{fully resolved}, when every internal node has two direct descendants, or \emph{children}. 

Biologists use the \textit{shape} of phylogenetic trees, that is, their raw branching structure, to deduce information on the  forces beneath the speciation and extinction processes that have taken place \cite{futuyma}. A popular set of tools used in the analysis of phylogenetic tree shapes are the \textit{shape indices} \cite{QGT}, and among them the \textit{balance indices}, which measure the propensity of the direct descendants of any given node in a tree to have the same number of descendant leaves. Many balance indices have been proposed in the literature: see, for instance,  \cite{colless82, fel:04, Fischer2015, Fusco95, kirk-slatk, cherries, cophenetic, collesslike, sackin, shao} and the recent survey \cite{surveybal} and the references therein.  Let us recall here the four indices that appear below in the Examples section:
\begin{itemize}
\item The \emph{Sackin index} $S(T)$ of a tree $T$ \cite{sackin, shao} is the sum of its leaves' \emph{depths} (that is, of the lengths of the paths from the root to the leaves).
\item The \emph{Colless index} $C(T)$ of a bifurcating tree $T$ \cite{colless82} is the sum, over all its internal nodes $v$, of the absolute value of the difference between the numbers of descendant leaves of the pair of children of $v$.
\item The  \emph{Total Cophenetic index} $\Phi(T)$ of a tree $T$ \cite{cophenetic} is the sum, over all its pairs of internal nodes, of the depth of their lowest common ancestor.
\item The \emph{rooted Quartet index} $\textit{rQI}(T)$  of a bifurcating tree $T$ \cite{rQI} is the number of its subtrees of 4 leaves that are fully symmetric.
\end{itemize}
Any sensible balance index should classify, for each number of leaves $n$, as most unbalanced trees the \emph{rooted caterpillars} with $n$ leaves $K_n$ ---the bifurcating trees such that all their internal nodes have a leaf child--- and as most balanced bifurcating  trees the \emph{maximally balanced trees} with $n$ leaves $B_n$ ---where, for each internal node, the numbers of descendant leaves of its pair of children differ at most in 1--- (cf. Fig. \ref{fig:intro}). The four indices described above satisfy this condition \cite{surveybal}.

\begin{figure}[htb]
\begin{center}
$$
\begin{array}{ccc}
\begin{tikzpicture}[thick,>=stealth,scale=0.5]
\draw(0,0) node [trepp] (1) {}; \etq{1}
\draw(1,0) node [trepp] (2) {}; \etq{2}
\draw(2,0) node [trepp] (3) {}; \etq{3}
\draw(3,0) node [trepp] (4) {};\etq{4}
\draw(4,0) node [trepp] (5) {}; \etq{5}
\draw(5,0) node [trepp] (6) {};\etq{6}
\draw(6,0) node [trepp] (7) {};\etq{7}
\draw(0.5,1/2) node [trepp] (v1) {};
\draw(1.5,2/2) node [trepp] (v2) {};
\draw(2.5,3/2) node [trepp] (v3) {};
\draw(3.5,4/2) node [trepp] (v4) {};
\draw(4.5,5/2) node [trepp] (v5) {};
\draw(5.5,6/2) node [trepp] (v6) {};
\draw  (v1)--(1);
\draw  (v1)--(2);
\draw  (v2)--(3);
\draw  (v3)--(4);
\draw  (v4)--(5);
\draw  (v5)--(6);
\draw  (v6)--(7);
\draw  (v6)--(v5);
\draw  (v5)--(v4);
\draw  (v4)--(v3);
\draw  (v3)--(v2);
\draw  (v2)--(v1);
\draw (3,-1) node {$K_7$};
\end{tikzpicture} 
&
\begin{tikzpicture}[thick,>=stealth,scale= 0.5]
\draw(0,0) node [trepp] (1) {}; \etq{1}
\draw(1,0) node [trepp] (2) {}; \etq{2}
\draw(2,0) node [trepp] (3) {}; \etq{3}
\draw(3,0) node [trepp] (4) {};\etq{4}
\draw(4,0) node [trepp] (5) {}; \etq{5}
\draw(5,0) node [trepp] (6) {};\etq{6}
\draw(6,0) node [trepp] (7) {};\etq{7}
\draw(0.5,1) node [trepp] (v1) {};
\draw(2.5,1) node [trepp] (v2) {};
\draw(4.5,1) node [trepp] (v3) {};
\draw(1.5,2) node [trepp] (w1) {};
\draw(5.5,2) node [trepp] (w2) {};
\draw(3.5,3) node [trepp] (r) {}; 
\draw  (v1)--(1);
\draw  (v1)--(2);
\draw  (v2)--(3);
\draw  (v2)--(4);
\draw  (v3)--(5);
\draw  (v3)--(6);
\draw  (w1)--(v1);
\draw  (w1)--(v2);
\draw  (w2)--(v3);
\draw  (w2)--(7);
\draw  (r)--(w1);
\draw  (r)--(w2);
\draw (3,-1) node {$B_7$};
\end{tikzpicture} 
\end{array}
$$
\end{center}
\caption{A rooted caterpillar with 7 leaves (left) and a maximally balanced with 7 leaves (right).\label{fig:intro}}
\end{figure}
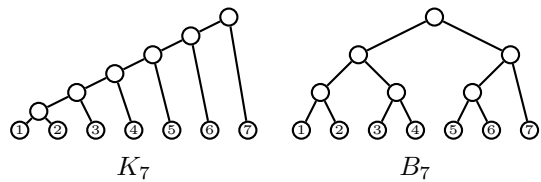

Now, given a balance index $I$, it is unreasonable  to expect it to allow for the comparison of the balance of trees with different numbers of leaves. A possible way to circumvent this difficulty is to normalize it  to $[0,1]$ for every number of leaves  $n$  by subtracting its minimum  value for this number of leaves and dividing by its range of values on the space of trees with $n$ leaves \cite{shao,Stam02}. To perform this normalization for bifurcating trees,  it is necessary to know the value of $I$ for the rooted caterpillars $K_n$ and the maximally balanced trees $B_n$. It turns out that, while computing $I(K_n)$ is usually easy, to compute $I(B_n)$ one is led in most cases to solve a recurrence of the form
$$
I(B_n)=I(B_{\left\lceil{n}/{2}\right\rceil})+I(B_{\left\lfloor{n}/{2}\right\rfloor})+p_I(n)
$$
with $p_I(n)$ a function that, in some sense, measures the contribution of the root to the value of $I$. For the four aforementioned indices, this independent term is a polynomial in  $\lceil{n}/{2}\rceil$ and $\lfloor{n}/{2}\rfloor$: for the Sackin index, $p_S(n)=n=\lceil{n}/{2}\rceil+\lfloor{n}/{2}\rfloor$; for the Colless index, $p_C(n)=\lceil{n}/{2}\rceil-\lfloor{n}/{2}\rfloor$; for the Total Cophenetic index, $p_\Phi(n)=\binom{n}{2}=\binom{\lceil{n}/{2}\rceil+\lfloor{n}/{2}\rfloor}{2}$; and for the rooted Quartet index, $p_{rQI}(n)=\binom{\lceil{n}/{2}\rceil}{2}\binom{\lfloor{n}/{2}\rfloor}{2}$  \cite{surveybal}. At the moment of writing this paper, the values of $S(B_n)$ and $C(B_n)$ were known (see Examples \ref{ex:Sackin} and \ref{ex:Colless} in Section \ref{sec:exm}), but not those of $\Phi(B_n)$ or $\textit{rQI}(B_n)$.

In the context of the analysis of algorithms, divide-and-conquer recurrences like (\ref{eqn:def}) and more general  ones are ``solved'' by finding the big-$\Theta$ order of growth of $x_n$ using some \emph{Master Theorem}. The original Master Theorem was obtained by Bentley, Haken, and Saxe \cite{Bentley} and extended in \cite[\S 4.5--6]{Cormen}, and since then several improved versions have been obtained by other researchers: see, for instance, \cite{Akra,Drmota, Guedj, Kao,Roura,Verma1,Verma2,Wang}. Tipically, a Master Theorem  deduces the asymptotic behaviour of a solution $x_n$ of  (\ref{eqn:def}) from that of the independent term of the recurrence (and, in more general recurrences, from the number of parts into which the input is divided, which we fix here to 2, and the contribution of each subproblem to the general problem, which we assume here to be equal and represented by the coefficient $a$). Thus, Master Theorems do not provide explicit solutions of recurrences, only their asymptotic behaviour.

In the analysis of algorithms, knowing the growing order of the computational cost of an algorithm  on an instance of size $n$ is usually enough. But in other applications, like for instance in order to normalize balance indices as we explained above, an explicit expression for the solution is needed. 
Many specific divide-and-conquer recurrences are explicitly solved when needed, like for instance the cost of the mergesort algorithm in the worst case, solved as Ex. 34 in \cite[Ch. 3]{Knuth2}, but no general solution is known. To our knowledge, the only attempt to find an explicit solution of Eqn. (\ref{eqn:def}) is made by Hwang, Janson and Tsai \cite{HJT17} by proving that, when $a=1$ and under very general conditions on the independent term $p(n)$, the solution $x_n$ has the form
$$
x_n=nP(\log_2(n))+F(n)-Q(n)
$$
with $P$ continuous and 1-periodic and $F,Q$ of precise growing orders, and giving explicit expressions for $P,F,Q$ in terms of series expansions. 

In this paper we consider the more restrictive case when $p(n)$ is a polynomial  in  $\lceil{n}/{2}\rceil$ and $\lfloor{n}/{2}\rfloor$. In this case, we are able to give an explicit finite formula for $x_n$ in terms of the binary decomposition of $n$. Although for our applications the case when  $a=1$ was enough, our formula works for the arbitrary $a$ case.

The rest of this paper is organized as follows. In Section \ref{sec:mr} we state our main result and in Section \ref{sec:exm} we use it to solve several examples. As a proof of concept,  in some of them  we obtain formulas that were already known, but we also include several examples for which no explicit formula was known so far. Finally, the bulk of the paper (Section \ref{sec:proof}) is devoted to prove our main result. 

We have implemented the formula provided by our main theorem in Python using SymPy, a Python library for symbolic mathematics. This implementation  is available at the GitHub repository \url{https://github.com/biocom-uib/divide_and_conquer}.\medskip 

\noindent\textbf{Acknowledgments.}
This research was partially supported by the Spanish Ministry of Science, Innovation and Universities and the European Regional Development Fund through project PGC2018-096956-B-C43 (FEDER/MICINN/AEI). We thank G. Valiente for some useful suggestions.

\section{Main result}\label{sec:mr}

\noindent For every $n\in \NN_{\geq 1}$, let its binary decomposition be $n=\sum_{j=1}^{s_n} 2^{q_j(n)}$, with $0\leq q_1(n)<\cdots<q_{s_{n}}(n)$,  and then, for every $i\geq 1$, let $M_i(n)=\sum_{j=i}^{s_n} 2^{q_j(n)}$. Notice that $M_1(n)=n$ and  $M_{s_n}=2^{q_{s_n}(n)}$.
In order to simplify the notations, we set $q_0(n) = 0$ and $M_0(n)=M_1(n)+2^{q_0(n)}=n+1$.

Additionally, for every $d,n\in \NN$ and $x\in \RR_{\neq 0}$ let
$$
T(d,n,x)=\sum_{k=0}^{n-1} k^dx^k
$$
(with the convention, which we use throughout this paper, that  $0^0=1$). In particular,
$$
T(0,n,x)=\left\{
\begin{array}{ll}
n & \text{if $x=1$}\\[2ex]
\displaystyle  \frac{x^n-1}{x-1}& \text{if $x\neq 1$}
\end{array}\right.
\qquad
T(1,n,x)=\left\{
\begin{array}{ll}
\displaystyle \binom{n}{2}& \text{if $x=1$}\\[2ex]
\displaystyle  \frac{nx^n(x-1)-x(x^n-1)}{(x-1)^2}& \text{if $x\neq 1$}
\end{array}\right.
$$
For every $m\in \NN$, let $B_m$ denote the $m$-th Bernoulli number of the first kind \cite{Berni}. In particular, $B_0=1$, $B_1=-1/2$, and $B_{2k+1}=0$ for every $k\geq 1$.

For $d=0,1$, for every $a\in \RR$, for every $m\in \NN$, and for every $n\in \NN_{\geq 1}$, let
\begin{align*}
& \aadm_{n}(a) \\ &\quad =\frac{1}{2(m+1)}\sum_{i=1}^{s_n}\sum_{j=0}^{m}\binom{m+1}{j}B_j 2^{j}M_{i}(n)^{m+1-j}\big(T(d,q_i(n),a2^{j-m})-T(d,q_{i-1}(n),a2^{j-m})\big)\nonumber\\
&\qquad\qquad +\sum_{i=1}^{s_n-1}q_i(n)^d(a2^{-m})^{q_i(n)} (n-M_{i}(n))M_{i+1}(n)^m -T(d,q_{s_n}(n),2a)\cdot \delta_{m=0}
\end{align*}
with $\delta_{m=0}=1$ if $m=0$ and $\delta_{m=0}=0$ if $m>0$. 

For every $a\in \RR_{>0}$ and for every $r,t\in \NN$, let $\ell_{a,t}=\log_2(a)-t$ and
$$
\delta_{\ell}(a,r,t)=\left\{\begin{array}{ll}
1 & \text{ if $r>0$, and $\ell_{a,t}\in \{0,\ldots,r-1\}$}\\
0 & \text{ otherwise}
\end{array}\right.
$$
If $a\leq 0$, then $\ell_{a,t}$ is undefined and $\delta_{\ell}(a,r,t)=0$.

The main result in this paper is the following:

\medskip\noindent\textbf{Theorem. } {\it Let $a\in \RR_{\neq 0}$ and let $P(x,y)=\sum_{r,t\geq 0} b_{r,t}x^ry^t\in \RR[x,y]$ be a bivariate polynomial.  Then, the solution of the recurrence equation
$$
x_n =a\cdot x_{\left\lceil{n}/{2}\right\rceil}+a\cdot x_{\left\lfloor{n}/{2}\right\rfloor}+P\big(\lceil{n}/{2}\rceil,\lfloor{n}/{2}\rfloor\big),\quad n\geq 2,
$$
with initial condition $x_1$ is
$$
x_n=\sum_{r,t}b_{r,t} x_n^{(r,t)}(a)+\big((2a)^{q_{s_n}(n)}+(2a-1)(na^{q_{s_n}(n)}-(2a)^{q_{s_n}(n)})\big)x_1
$$
with, for every $n\geq 2$,
\begin{align*}
&x^{(r,t)}_n(a) =   \sum_{k=1}^{r+t} \Bigg(\sum_{i=k\atop i\neq t+\ell_{a,t}+1}^{r+t}\frac{\binom{r}{i-t-1}\binom{i}{k}B_{i-k}}{i(2^{i-1}-a)} \Bigg)n^k
+\frac{1}{a-1}\cdot\delta_{r>0,t=0,a\neq 1}\nonumber \\
&\qquad +\Bigg(1-\sum_{l=0\atop l\neq \ell_{a,t}}^{r-1}\frac{\binom{r}{l}}{2^{t+l}-a} \Bigg)\big(T(0,q_{s_n}(n),2a)+na^{q_{s_n}(n)}-(2a)^{q_{s_n}(n)}\big)\nonumber\\
 &\qquad+\sum_{i=0}^{r+t-1} \Bigg(2^{-i}\binom{r+t}{i}-2^{-i+1}\binom{r}{i-t}-\sum_{l=i-t+1\atop l\neq \ell_{a,t}}^{r-1}\frac{\binom{r}{l}\binom{t+l}{i}}{2^{t+l}-a} \Bigg)\alpha^{(0,i)}_{n}(a)\\
&\qquad +\frac{\delta_{\ell}(a,r,t)}{a}\binom{r}{\ell_{a,t}}\Bigg(T(1,q_{s_n}(n),2a)+(na^{q_{s_n}(n)}-(2a)^{q_{s_n}(n)})q_{s_{n}}(n)\\
& \qquad\hphantom{+\frac{\delta_{\ell}(a,r,t)}{a}\binom{r}{\ell_{a,t}}\Bigg(T}+\sum_{i=0}^{t+\ell_{a,t}-1}\binom{t+\ell_{a,t}}{i}\alpha^{(1,i)}_{n}(a)\Bigg)
\end{align*}
where $\delta_{r>0,t=0,a\neq 1}=1$ if $r>0$, $t=0$, and $a\neq 1$, and $\delta_{r>0,t=0,a\neq 1}=0$ otherwise.}\medskip

In particular:
\begin{enumerate}[(a)]
\item If $r=0$,
\begin{align*}
x_n^{(0,t)}(a) &=\sum_{i=0}^{t-1} 2^{-i}\binom{t}{i} \alpha^{(0,i)}_{n}(a)+T(0,q_{s_n}(n),2a)+na^{q_{s_n}(n)}-(2a)^{q_{s_n}(n)}\\
&=
\left\{\begin{array}{ll}
\displaystyle\sum_{i=0}^{t-1} 2^{-i}\binom{t}{i} \alpha^{(0,i)}_{n}(1/2)+q_{s_n}(n)+n\cdot 2^{-q_{s_n}(n)}-1 & \text{ if $a=\frac{1}{2}$}\\[1ex]
\displaystyle\sum_{i=0}^{t-1} 2^{-i}\binom{t}{i} \alpha^{(0,i)}_{n}(a)+\dfrac{(2a)^{q_{s_n}(n)}-1}{2a-1}+n\cdot a^{q_{s_n}(n)}-(2a)^{q_{s_n}(n)}& \text{ if $a\neq \frac{1}{2}$}
\end{array}\right.
\end{align*}

\item If $r\geq 1$ and $t=0$:
\begin{enumerate}
\item[(b.1)] If  $a=1/2$,
\begin{align*}
& x_n^{(r,0)}(1/2) =2\sum_{k=1}^{r}\Big(\sum_{j=k}^{r}\frac{\binom{r}{j-1}\binom{j}{k}B_{j-k}}{j(2^{j}-1)}\Big)n^k-2\\
&\qquad +\Big(1-2\sum_{l=0}^{r-1}\frac{\binom{r}{l}}{2^{l+1}-1}\Big)\Big(q_{s_n}(n)+n\cdot 2^{-q_{s_n}(n)}-1\Big)\\
&\qquad -\sum_{i=0}^{r-1}\Big(2^{-i}\binom{r}{i}+2\sum_{l=i+1}^{r-1}\frac{\binom{r}{l}\binom{l}{i}}{2^{l+1}-1}\Big)\alpha^{(0,i)}_{n}(1/2)
\end{align*}

\item[(b.2)] If $a=1$,
\begin{align*}
&x_n^{(r,0)}(1) =\sum_{k=2}^{r}\Big(\sum_{j=k}^{r}\frac{\binom{r}{j-1}\binom{j}{k}B_{j-k}}{j(2^{j-1}-1)}\Big)n^k+\Big(q_{s_n}(n)+1+\sum_{j=1}^{r-1}\frac{\binom{r}{j}(B_{j}-1)}{2^{j}-1}\Big)n\\
 &\qquad  +1+\sum_{j=1}^{r-1}\frac{\binom{r}{j}}{2^{j}-1}-2^{q_{s_n}(n)+1}-\sum_{i=0}^{r-1}\Big(2^{-i}\binom{r}{i}+\sum_{l=i+1}^{r-1}\frac{\binom{r}{l}\binom{l}{i}}{2^{l}-1}\Big)\alpha^{(0,i)}_{n}(1)
\end{align*}

\item[(b.3)] If $a\notin\{1/2,1,\ldots,2^{r-1}\}$
\begin{align*}
&x_n^{(r,0)}(a) =\sum_{k=1}^{r}\Big(\sum_{j=k}^{r}\frac{\binom{r}{j-1}\binom{j}{k}B_{j-k}}{j(2^{j-1}-a)}\Big)n^k+\frac{1}{a-1}\\
&\qquad+\Big(1-\sum_{l=0}^{r-1}\frac{\binom{r}{l}}{2^{l}-a}\Big)\Big(\dfrac{(2a)^{q_{s_n}(n)}-1}{2a-1}+na^{q_{s_n}(n)}-(2a)^{q_{s_n}(n)}\Big)\\
&\qquad -\sum_{i=0}^{r-1}\Big(2^{-i}\binom{r}{i}+\sum_{l=i+1}^{r-1}\frac{\binom{r}{l}\binom{l}{i}}{2^{l}-a}\Big)\alpha^{(0,i)}_{n}(a)
\end{align*}

\item[(b.4)] If $a=2^\ell$ with $\ell\in\{1,\ldots,r-1\}$,
\begin{align*}
& x^{(r,0)}_n(a) = \sum_{k=1}^{r} \Big(\sum_{i=k\atop i\neq \ell+1}^{r}\frac{\binom{r}{i-1}\binom{i}{k}B_{i-k}}{i(2^{i-1}-a)} \Big)n^k
+\frac{1}{a-1}\\ 
&\quad
+\Big(1-\sum_{l=0\atop l\neq \ell}^{r-1}\frac{\binom{r}{l}}{2^{l}-a} \Big)\Big(\dfrac{(2a)^{q_{s_n}(n)}-1}{2a-1}+na^{q_{s_n}(n)}-(2a)^{q_{s_n}(n)}\Big)\nonumber\\
 &\quad   -\sum_{i=0}^{r-1} \Big(2^{-i}\binom{r}{i}+\sum_{l=i+1\atop l\neq \ell}^{r-1}\frac{\binom{r}{l}\binom{l}{i}}{2^{l}-a} \Big)\alpha^{(0,i)}_{n}(a)\\
&\quad +\frac{1}{a}\binom{r}{\ell}\Big(\dfrac{{((2a-1)q_{s_n}(n)-2a)(2a)^{q_{s_n}(n)}+2a}}{(2a-1)^2}+q_{s_n}(n)a^{q_{s_n}(n)}(n-2^{q_{s_n}(n)})\Big)\\
&\quad +\frac{1}{a}\binom{r}{\ell}\sum_{i=0}^{\ell-1}\binom{\ell}{i}\alpha^{(1,i)}_{n}(a)
\end{align*}
\end{enumerate}

\item If $r\geq 1$ and $t\geq 1$:
\begin{enumerate} 
\item[(c.1)] If $a=1/2$,
\begin{align*}
&x_n^{(r,t)}(1/2) =2\sum_{k=1}^{r+t}\Big(\sum_{j=k}^{r+t}\frac{\binom{r}{j-t-1}\binom{j}{k}B_{j-k}}{j(2^{j}-1)}\Big)n^k\\
& \qquad +\Big(1-2\sum_{l=0}^{r-1}\frac{\binom{r}{l}}{2^{t+l+1}-1}\Big)\Big(q_{s_n}(n)+n\cdot 2^{-q_{s_n}(n)}-1\Big)\\
&\qquad +\sum_{i=0}^{r+t-1}\Big(2^{-i}\binom{r+t}{i}-2^{-i+1}\binom{r}{i-t}-2\sum_{l=i-t+1}^{r-1}\frac{\binom{r}{l}\binom{t+l}{i}}{2^{t+l+1}-1}\Big)\alpha^{(0,i)}_{n}(1/2)
\end{align*}

\item[(c.2)] If $a = 1$,
\begin{align*}
&x_n^{(r,t)}(1) =\sum_{k=2}^{r+t}\Big(\sum_{j=k}^{r+t}\frac{\binom{r}{j-t-1}\binom{j}{k}B_{j-k}}{j(2^{j-1}-1)}\Big)n^k\\ &\qquad +{\Big(1+\sum_{j=1}^{r+t-1}\frac{\binom{r}{j-t}(B_{j}-1)}{2^{j}-1}\Big)n+\sum_{l=0}^{r-1}\frac{\binom{r}{l}}{2^{t+l}-1}-1}\\
&\qquad +\sum_{i=0}^{r+t-1}\Big(2^{-i}\binom{r+t}{i}-2^{-i+1}\binom{r}{i-t}-\sum_{l=i-t+1}^{r-1}\frac{\binom{r}{l}\binom{t+l}{i}}{2^{t+l}-1}\Big)\alpha^{(0,i)}_{n}(1)
\end{align*}

\item[(c.3)] If  $a\notin\{1/2,1,2^t,\ldots,2^{r+t-1}\}$,
\begin{align*}
&x_n^{(r,t)}(a) =\sum_{k=1}^{r+t}\Big(\sum_{j=k}^{r+t}\frac{\binom{r}{j-t-1}\binom{j}{k}B_{j-k}}{j(2^{j-1}-a)}\Big)n^k\\
&\qquad+\Big(1-\sum_{l=0}^{r-1}\frac{\binom{r}{l}}{2^{t+l}-a}\Big)\Big(\dfrac{(2a)^{q_{s_n}(n)}-1}{2a-1}+na^{q_{s_n}(n)}-(2a)^{q_{s_n}(n)}\Big)\\
&\qquad +\sum_{i=0}^{r+t-1}\Big(2^{-i}\binom{r+t}{i}-2^{-i+1}\binom{r}{i-t}-\sum_{l=i-t+1}^{r-1}\frac{\binom{r}{l}\binom{t+l}{i}}{2^{t+l}-a}\Big)\alpha^{(0,i)}_{n}(a)
\end{align*}

\item[(c.4)] If $a=2^{t+\ell}$, for some $\ell\in\{0,\ldots,r-1\}$,
\begin{align*}
& x_n^{(r,t)}(a) =\sum_{k=1}^{r+t}\Big(\sum_{j=k\atop j\neq t+\ell+1}^{r+t}\frac{\binom{r}{j-t-1}\binom{j}{k}B_{j-k}}{j(2^{j-1}-a)}\Big)n^k\\
&\qquad +\frac{1}{a}\binom{r}{\ell}\Big(\dfrac{q_{s_n}(n)(2a)^{q_{s_n}(n)}-(2a)^{q_{s_n}(n)+1}+2a}{(2a-1)^2}+q_{s_n}(n)a^{q_{s_n}(n)}(n-2^{q_{s_n}(n)})\Big)\\
&\qquad +\Big(1-\sum_{l=0\atop l\neq \ell}^{r-1}\frac{\binom{r}{l}}{2^{t+l}-a}\Big)\Big(\dfrac{(2a)^{q_{s_n}(n)}-1}{2a-1}+na^{q_{s_n}(n)}-(2a)^{q_{s_n}(n)}\Big)\nonumber\\
&\qquad +\sum_{i=0}^{r+t-1}\Big(2^{-i}\binom{r+t}{i}-2^{-i+1}\binom{r}{i-t}-\sum_{l=i-t+1\atop l\neq \ell}^{r-1}\frac{\binom{r}{l}\binom{t+l}{i}}{2^{t+l}-a}\Big)\alpha^{(0,i)}_{n}(a)\\
&\qquad+\frac{1}{a}\binom{r}{\ell}\sum_{i=0}^{t+\ell-1}\binom{t+\ell}{i}\alpha^{(1,i)}_{n}(a)
\end{align*}
\end{enumerate}
\end{enumerate}

\section{Examples}\label{sec:exm}

 In this section we gather several applications  of our main theorem. The sequences' identifiers, when available, refer to the  OEIS.

\begin{example}\label{ex:constant}
When the independent term in the divide-and-conquer recurrence is constant,
$$
x_n =a\cdot x_{\left\lceil{n}/{2}\right\rceil}+a\cdot x_{\left\lfloor{n}/{2}\right\rfloor}+c
$$
with $c\in \RR$, the solution with initial condition $x_1$ is
\begin{align*}
x_n & =c\cdot x_n^{(0,0)}(a) + \big((2a)^{q_{s_n}(n)}+(2a-1)(na^{q_{s_n}(n)}-(2a)^{q_{s_n}(n)})\big)x_1\\
& = 
\left\{\begin{array}{ll}
c(q_{s_n}(n)+n\cdot 2^{-q_{s_n}(n)}-1)+x_1 & \text{ if $a=\frac{1}{2}$}\\[1ex]
\displaystyle {((2a-1)x_1+c)\Big(na^{q_{s_n}(n)}-\frac{2a-2}{2a-1}(2a)^{q_{s_n}(n)}\Big)-\dfrac{c}{2a-1}}& \text{ if $a\neq \frac{1}{2}$}
\end{array}\right.
\end{align*}
In particular, if $a=1$,
$$
x_n=(x_1+c)n-c.
$$
\end{example}

\begin{example}\label{ex:Sackin}
The minimum total Sackin index $S_n$ of a rooted bifurcating tree with $n$ leaves (the binary entropy function: sequence A003314) satisfies the recurrence
$$
S_n = S_{\left\lceil{n}/{2}\right\rceil} + S_{\left\lfloor{n}/{2}\right\rfloor} + n =
S_{\left\lceil{n}/{2}\right\rceil} + S_{\left\lfloor{n}/{2}\right\rfloor} + \left\lceil{n}/{2}\right\rceil+\left\lfloor{n}/{2}\right\rfloor
$$
with $S_1=0$. Then, according to our main theorem,
$$
S_n = x_n^{(1,0)}(1)+x_n^{(0,1)}(1)
$$
where 
\begin{align*}
x_n^{(1,0)}(1)& =(q_{s_n}(n)+1)n+1-2^{q_{s_n}(n)+1}-\alpha^{(0,0)}_{n}(1)\\[1ex]
x_n^{(0,1)}(1) & = \alpha^{(0,0)}_{n}(1)+n-1
\end{align*}
and therefore
\begin{align*}
S_n & = (q_{s_n}(n)+1)n+1-2^{q_{s_n}(n)+1}-\alpha^{(0,0)}_{n}(1)+\alpha^{(0,0)}_{n}(1)+n-1\\ 
& = (q_{s_n}(n)+2)n-2^{q_{s_n}(n)+1}
\end{align*}
in agreement with previously published formulas: cf.  \cite{FisherS,HJT17}.
\end{example}

\begin{example}\label{ex:Colless}
The minimum Colless index $c_n$ of a rooted bifurcating tree with $n$ leaves (sequence A296062) satisfies the recurrence
$$
c_n  =c_{\lceil{n}/{2}\rceil}+c_{\lfloor{n}/{2}\rfloor}+\left\lceil\frac{n}{2}\right\rceil-\left\lfloor\frac{n}{2}\right\rfloor
$$
with $c_1=0$. Therefore, according to our main theorem
\begin{align*}
c_n &=x_n^{(1,0)}(1)-x_n^{(0,1)}(1)\\
& = (q_{s_n}(n)+1)n+1-2^{q_{s_n}(n)+1}-\alpha^{(0,0)}_{n}(1)-(\alpha^{(0,0)}_{n}(1)+n-1)\\
& = q_{s_n}(n)n-2^{q_{s_n}(n)+1}+2-2\alpha^{(0,0)}_{n}(1)
\end{align*}
Now,
\begin{align}
\alpha^{(0,0)}_{n}(1) & =\frac{1}{2}\sum_{i=1}^{s_n}M_{i}(n)(q_i(n)-q_{i-1}(n)) +\sum_{i=1}^{s_n-1}(n-M_{i}(n))-2^{q_{s_n}(n)}+1\label{eqn:prealpha00}\\
& =\sum_{i=1}^{s_n} 2^{q_i(n)-1}q_{i}(n) +(s_n-1)n-\sum_{j=1}^{s_n}\sum_{i=j}^{s_n} 2^{q_i(n)}+1\nonumber \\& 
=\sum_{i=1}^{s_n} 2^{q_i(n)-1}q_{i}(n) +(s_n-1)n-\sum_{i=1}^{s_n}i 2^{q_i(n)}+1\nonumber \\& = (s_n-1)n+\sum_{i=1}^{s_n}2^{q_i(n)-1}(q_i(n)-2i)+1
\label{eqn:alpha00}
\end{align}
and hence, finally,
\begin{align*}
c_n &= q_{s_n}(n)n-2^{q_{s_n}(n)+1}+2-2\Big((s_n-1)n+\sum_{i=1}^{s_n}2^{q_i(n)-1}(q_i(n)-2i)+1\Big)\nonumber\\
& =\sum_{i=1}^{s_n-1} 2^{q_i(n)} (q_{s_n}(n) - q_i(n) - 2(s_n - i-1)) 
\end{align*}
as it was proved in \cite{MinColless} by showing by induction that this sequence satisfies the recurrence above.
\end{example}

\begin{rem}
The solution of 
$$
x_n =a\cdot x_{\lceil{n}/{2}\rceil}+a\cdot x_{\lfloor{n}/{2}\rfloor}+(\lceil{n}/{2}\rceil-\lfloor{n}/{2}\rfloor)
$$
and of
$$
x_n =a\cdot x_{\lceil{n}/{2}\rceil}+a\cdot x_{\lfloor{n}/{2}\rfloor}+(\lceil{n}/{2}\rceil-\lfloor{n}/{2}\rfloor)^m
$$
for $m\geq 1$, is the same. Therefore, for every $m\geq 1$,
$$
x^{(1,0)}(a)-x^{(0,1)}(a)=\sum_{p=0}^m \binom{m}{p}(-1)^{m-p}x^{(p,m-p)}(a)
$$
\end{rem}

\begin{example}\label{ex:n2}
The sequence $x_n=n^2$ satisfies the equation
$$
x_n  =2x_{\left\lceil{n}/{2}\right\rceil}+2x_{\left\lfloor{n}/{2}\right\rfloor}+\left\lfloor{n}/{2}\right\rfloor-\left\lceil{n}/{2}\right\rceil
$$
with $x_1=1$. Therefore,
$$
x_n=x_n^{(0,1)}(2)-x_n^{(1,0)}(2)+3n2^{q_{s_n}(n)}-2\cdot 4^{q_{s_n}(n)}
$$
Now,
\begin{align*}
x_n^{(0,1)}(2) &=\alpha^{(0,0)}_{n}(2)+\frac{1}{3}\big(3n\cdot 2^{q_{s_n}(n)}-2\cdot 4^{q_{s_n}(n)}-1\big)\\[1ex]
x_n^{(1,0)}(2) & =-n+1 +\frac{2}{3}\big(3n\cdot 2^{q_{s_n}(n)}-2\cdot 4^{q_{s_n}(n)}-1\big)-\alpha^{(0,0)}_{n}(2)\\
\end{align*}
and
\begin{align*}
 &\alpha^{(0,0)}_{n}(2)  =\frac{1}{2}\sum_{i=1}^{s_n} M_{i}(n)\big((2^{q_i(n)}-1)-(2^{q_i(n)-1}-1)\big)+\sum_{i=1}^{s_n-1}2^{q_i(n)} (n-M_{i}(n)) -\frac{4^{q_i(n)}-1}{3}\\
 &\qquad = \frac{1}{2}\sum_{i=1}^{s_n} M_{i}(n)(2^{q_i(n)}-2^{q_{i-1}(n)})+\sum_{i=1}^{s_n}2^{q_i(n)} (n-M_{i}(n)) -\frac{1}{3}(3n2^{q_{s_n}(n)}-2\cdot 4^{q_{s_n}(n)}-1)\\
  &\qquad = n^2-\frac{1}{2}\sum_{i=1}^{s_n} M_{i}(n)2^{q_i(n)}-\frac{1}{2}\sum_{i=1}^{s_n} M_{i}(n)2^{q_{i-1}(n)}-\frac{1}{3}(3n2^{q_{s_n}(n)}-2\cdot 4^{q_{s_n}(n)}-1)\\
&\qquad = n^2-\frac{1}{2}\sum_{i=1}^{s_n} M_{i}(n)2^{q_i(n)}-\frac{1}{2}\sum_{i=0}^{s_n} M_{i+1}(n)2^{q_{i}(n)}-\frac{1}{3}(3n2^{q_{s_n}(n)}-2\cdot 4^{q_{s_n}(n)}-1)\\
 &\qquad = n^2-\frac{1}{2}\sum_{i=1}^{s_n} 2^{q_i(n)}(M_{i}(n)+M_{i+1}(n))-\frac{1}{2}n-\frac{1}{3}(3n2^{q_{s_n}(n)}-2\cdot 4^{q_{s_n}(n)}-1)\\
  &\qquad = n^2-\frac{1}{2}\sum_{i=1}^{s_n} (M_{i}(n)-M_{i+1}(n))(M_{i}(n)+M_{i+1}(n))-\frac{1}{2}n-\frac{1}{3}(3n2^{q_{s_n}(n)}-2\cdot 4^{q_{s_n}(n)}-1)\\
  &\qquad = n^2-\frac{1}{2}\sum_{i=1}^{s_n} (M_{i}(n)^2-M_{i+1}(n)^2)-\frac{1}{2}n-\frac{1}{3}(3n2^{q_{s_n}(n)}-2\cdot 4^{q_{s_n}(n)}-1)\\
 &\qquad = \frac{1}{2}n^2-\frac{1}{2}n-\frac{1}{3}(3n2^{q_{s_n}(n)}-2\cdot 4^{q_{s_n}(n)}-1)
\end{align*}
and therefore
\begin{align*}
x_n & = \alpha^{(0,0)}_{n}(2)+\frac{1}{3}\big(3n\cdot 2^{q_{s_n}(n)}-2\cdot 4^{q_{s_n}(n)}-1\big)\\
& \qquad -\Big(-n+1 +\frac{2}{3}\big(3n\cdot 2^{q_{s_n}(n)}-2\cdot 4^{q_{s_n}(n)}-1\big)-\alpha^{(0,0)}_{n}(2)\Big)+3n2^{q_{s_n}(n)}-2\cdot 4^{q_{s_n}(n)}\\
& = 2\alpha^{(0,0)}_{n}(2)+\frac{2}{3}\big(3n\cdot 2^{q_{s_n}(n)}-2\cdot 4^{q_{s_n}(n)}-1\big)+n\\
& = n^2-n-\frac{2}{3}(3n2^{q_{s_n}(n)}-2\cdot 4^{q_{s_n}(n)}-1)+\frac{2}{3}\big(3n\cdot 2^{q_{s_n}(n)}-2\cdot 4^{q_{s_n}(n)}-1\big)+n\\
& = n^2
\end{align*}
indeed.
\end{example}

\begin{example}
The sequence $\lambda_n$ of Lebesgue constants of the Walsh system
satisfies the recurrence 
$$
\lambda_n =\frac{1}{2}\lambda_{\left\lceil{n}/{2}\right\rceil}+\frac{1}{2}\lambda_{\left\lfloor{n}/{2}\right\rfloor}+\frac{1}{2}\left\lceil\frac{n}{2}\right\rceil-\frac{1}{2}\left\lfloor\frac{n}{2}\right\rfloor
$$
with $\lambda_1=1$ \cite{Fine,Hwang}. Therefore,
$$
\lambda_n =\frac{1}{2}x_n^{(1,0)}(1/2)-\frac{1}{2}x_n^{(0,1)}(1/2)+1
$$
where
\begin{align*}
& x_n^{(1,0)}(1/2) =2n-q_{s_n}(n)-n\cdot 2^{-q_{s_n}(n)}-1 -\alpha^{(0,0)}_{n}(1/2)\\[1ex]
& x_n^{(0,1)}(1/2) = \alpha^{(0,0)}_{n}(1/2)+q_{s_n}(n)+n\cdot 2^{-q_{s_n}(n)}-1\\[1ex]
& \alpha_{n}^{(0,0)}(1/2)\\
&\quad = \frac{1}{2}\sum_{i=1}^{s_n}M_i(n)\big(T(0,q_i(n),1/2)-T(0,q_{i-1}(n),1/2)\big)+\sum_{i=1}^{s_n-1}2^{-q_i(n)} (n-M_{i}(n)) -q_{s_n}(n)\\
&\quad = \sum_{i=1}^{s_n}M_i(n)(2^{-q_{i-1}(n)}-2^{-q_{i}(n)})
+\sum_{i=1}^{s_n-1}2^{-q_i(n)} (n-M_{i}(n)) -q_{s_n}(n)\\
&\quad = \sum_{i=1}^{s_n}M_{i}(n)2^{-q_{i-1}(n)}-\sum_{i=1}^{s_n}M_i(n)2^{-q_{i}(n)}+\sum_{i=1}^{s_n-1}2^{-q_i(n)} (n-M_{i}(n)) -q_{s_n}(n)\\
&\quad = \sum_{i=0}^{s_n}M_{i+1}(n)2^{-q_{i}(n)}-\sum_{i=1}^{s_n}M_i(n)2^{-q_{i}(n)}+\sum_{i=1}^{s_n-1}2^{-q_i(n)} (n-M_{i}(n)) -q_{s_n}(n)\\
&\quad = n+\sum_{i=1}^{s_n}(M_{i+1}(n)-M_i(n))2^{-q_{i}(n)}
+\sum_{i=1}^{s_n-1}2^{-q_i(n)} (n-M_{i}(n)) -q_{s_n}(n)\\
&\quad = n-\sum_{i=1}^{s_n}2^{q_i(n)}2^{-q_{i}(n)}
+\sum_{i=1}^{s_n-1}2^{-q_i(n)} (n-M_{i}(n)) -q_{s_n}(n)\\
&\quad = n-s_n+\sum_{i=1}^{s_n-1}2^{-q_i(n)} (n-M_{i}(n)) -q_{s_n}(n)
\end{align*}
Therefore,
\begin{align*}
\lambda_n & =\frac{1}{2}\big(2n-q_{s_n}(n)-n\cdot 2^{-q_{s_n}(n)}-1 -\alpha^{(0,0)}_{n}(1/2)\big)\\
&\qquad -\frac{1}{2}\big(\alpha^{(0,0)}_{n}(1/2)+q_{s_n}(n)+n\cdot 2^{-q_{s_n}(n)}-1\big)+1 \\
& = -\alpha^{(0,0)}_{n}(1/2)+n-q_{s_n}(n)-n\cdot 2^{-q_{s_n}(n)}+1\\
& = -n+s_n-\sum_{i=1}^{s_n-1}2^{-q_i(n)} (n-M_{i}(n)) +q_{s_n}(n)
+n-q_{s_n}(n)-n\cdot 2^{-q_{s_n}(n)}+1\\
& = s_n-\sum_{i=1}^{s_n}2^{-q_i(n)} (n-M_{i}(n))
\end{align*}
in agreement with \cite[Eqn. (5.5)]{Fine}.
\end{example} 

\begin{rem}\label{rem:Stephan}
As R. Stephan points out in \cite{StephanOEIS}, the OEIS contains many sequences defined by recurrences of the form
$$
\left\{\begin{array}{l}
a_{2n}=C\cdot a_n+C\cdot a_{n-1}+P(n)\\
a_{2n+1}=2C\cdot a_n+Q(n)
\end{array}\right.
$$
for some real number $C$ and functions $P,Q$.  It is straightforward to check then that the sequence $x_n\coloneqq a_{n-1}$  satisfies, for $n\geq 2$, the recurrence  (cf. \cite{HJT17})
$$
x_n=C\cdot x_{\lceil n/2\rceil}+C\cdot x_{\lfloor n/2\rfloor}+Q(\lfloor n/2\rfloor-1)+\big(\lceil n/2\rceil-\lfloor n/2\rfloor\big)\big(P(\lfloor n/2\rfloor)-Q(\lfloor n/2\rfloor-1)\big).
$$
 If $P$ and $Q$ are polynomials, these sequences are covered by our main theorem.
\end{rem}

\begin{example}\label{ex:A005536}
Let $a_n$ denote the sequence A005536. By the second last entry in the last table of \cite{StephanOEIS} and the last remark, the sequence $x_n=a_{n-1}$, for $n\geq 1$,  satisfies the recurrence
$$
x_n=-x_{\left\lceil{n}/{2}\right\rceil}-x_{\left\lfloor{n}/{2}\right\rfloor}+\left\lfloor{n}/{2}\right\rfloor
$$
with $x_1=0$; cf.  also \cite[Ex 7.1]{HJT17}. 
By our main theorem, this sequence is
$$
x_n=x_n^{(0,1)}(-1)
$$
where
\begin{align*}
& x_n^{(0,1)}(-1)  =  \alpha^{(0,0)}_{n}(-1)-\frac{1}{3}((-2)^{q_{s_n}(n)}-1)+ (-1)^{q_{s_n}(n)}n-(-2)^{q_{s_n}(n)}\\[1ex]
&\quad = \alpha^{(0,0)}_{n}(-1)-\frac{4}{3}\cdot (-2)^{q_{s_n}(n)}+\frac{1}{3}+ (-1)^{q_{s_n}(n)}n\\
& \alpha^{(0,0)}_{n}(-1)  =\frac{1}{2}\sum_{i=1}^{s_n} M_{i}(n)\big(T(0,q_i(n),-1)-T(0,q_{i-1}(n),-1)\big)\\
&\hphantom{\alpha^{(0,0)}_{n}(-1)  =}\quad +\sum_{i=1}^{s_n-1}(-1)^{q_i(n)} (n-M_{i}(n)) -T(0,q_{s_n}(n),-2)\\
&\quad =\frac{1}{4}\sum_{i=1}^{s_n} M_{i}(n)\big((-1)^{q_{i-1}(n)}-(-1)^{q_i(n)})\big) +\sum_{i=1}^{s_n-1}(-1)^{q_i(n)} (n-M_{i}(n)) +\frac{1}{3}((-2)^{q_{s_n}(n)}-1)\\
&\quad =\frac{1}{4}\sum_{i=1}^{s_n} M_{i}(n)(-1)^{q_{i-1}(n)}-\frac{1}{4}\sum_{i=1}^{s_n} M_{i}(n)(-1)^{q_i(n)}) +\sum_{i=1}^{s_n-1}(-1)^{q_i(n)} (n-M_{i}(n))\\
&\qquad\quad+\frac{1}{3}((-2)^{q_{s_n}(n)}-1)\\
&\quad =\frac{1}{4}\sum_{i=0}^{s_n} M_{i+1}(n)(-1)^{q_{i}(n)}-\frac{1}{4}\sum_{i=1}^{s_n} M_{i}(n)(-1)^{q_i(n)}+\sum_{i=1}^{s_n-1}(-1)^{q_i(n)} (n-M_{i}(n))\\
&\qquad\quad+\frac{1}{3}((-2)^{q_{s_n}(n)}-1)\\
&\quad =\frac{1}{4}n-\frac{1}{4}\sum_{i=1}^{s_n} (-2)^{q_{i}(n)}+\sum_{i=1}^{s_n-1}(-1)^{q_i(n)} (n-M_{i}(n))+\frac{1}{3}((-2)^{q_{s_n}(n)}-1)
\end{align*}
and hence
\begin{align*}
x_n& =\frac{1}{4}n-\frac{1}{4}\sum_{i=1}^{s_n} (-2)^{q_{i}(n)}+\sum_{i=1}^{s_n-1}(-1)^{q_i(n)} (n-M_{i}(n))+\frac{1}{3}((-2)^{q_{s_n}(n)}-1)\\
&\qquad -\frac{4}{3}\cdot (-2)^{q_{s_n}(n)}+\frac{1}{3}+ (-1)^{q_{s_n}(n)}n\\
&= \frac{1}{4}n-\frac{1}{4}\sum_{i=1}^{s_n}(-2)^{q_i(n)}+\sum_{i=1}^{s_n}(-1)^{q_i(n)} (n-M_{i}(n))\\
&= \frac{1}{4}n-\frac{1}{4}\sum_{i=1}^{s_n}(-2)^{q_i(n)}+\sum_{i=1}^{s_n}\Big((-1)^{q_i(n)}\sum_{j=1}^{i-1}2^{q_j(n)}\Big)\\
&= \frac{1}{4}\sum_{i=1}^{s_n}2^{q_i(n)}\Big(1-(-1)^{q_i(n)}+4\sum_{j=i+1}^{s_n}(-1)^{q_j(n)}\Big)
\end{align*}
which is a simpler expression than the one given in \cite[Ex 7.1]{HJT17}.
\end{example}

\begin{example} 
Let $a_n$ denote the sequence A087733. By Remark \ref{rem:Stephan}, the sequence $x_n=a_{n-1}$, for $n\geq 1$,  satisfies the recurrence
$$
x_n=-x_{\left\lceil{n}/{2}\right\rceil}-x_{\left\lfloor{n}/{2}\right\rfloor}+\left\lfloor{n}/{2}\right\rfloor\left\lceil{n}/{2}\right\rceil
$$
with $x_1=0$.
By our main theorem, this sequence is
\begin{align*}
x_n & =x_n^{(1,1)}(-1)\\
& =\sum_{k=1}^{2}\Big(\sum_{j=k}^{2}\frac{\binom{1}{j-2}\binom{j}{k}B_{j-k}}{j(2^{j-1}+1)}\Big)n^k\\
&\quad +\Big(1-\sum_{l=0}^{0}\frac{\binom{1}{l}}{2^{1+l}+1}\Big)\Big(\frac{(-2)^{q_{s_n}(n)}-1}{-3}+n(-1)^{q_{s_n}(n)}-(-2)^{q_{s_n}(n)}\Big)\\
&\quad +\sum_{i=0}^{1}\Big(2^{-i}\binom{2}{i}-2^{-i+1}\binom{1}{i-1}-\sum_{l=i}^{0}\frac{\binom{1}{l}\binom{1+l}{i}}{2^{1+l}+1}\Big)\alpha^{(0,i)}_{n}(-1)
\\
& =\frac{1}{6}n^2-\frac{1}{6}n+\frac{2}{9}(1+3(-1)^{q_{s_n}(n)}n-4(-2)^{q_{s_n}(n)})+\frac{2}{3}\alpha^{(0,0)}_{n}(-1)\\
& =\frac{1}{6}n^2-\frac{1}{6}n+\frac{2}{9}(1+3(-1)^{q_{s_n}(n)}n-4(-2)^{q_{s_n}(n)})\\
&\quad +\frac{2}{3}\Big[\frac{1}{4}n-\frac{1}{4}\sum_{i=1}^{s_n} (-2)^{q_{i}(n)}+\sum_{i=1}^{s_n-1}(-1)^{q_i(n)} (n-M_{i}(n))+\frac{1}{3}((-2)^{q_{s_n}(n)}-1)\Big]\\
& \text{(cf. Example \ref{ex:A005536})}\\
& = \frac{1}{6}n^2-\frac{1}{6}\sum_{i=1}^{s_n}(-2)^{q_i(n)}+\frac{2}{3}\sum_{i=1}^{s_n}(-1)^{q_i(n)}(n-M_i(n))\\
& = \frac{1}{6}n^2-\frac{1}{6}\sum_{i=1}^{s_n}(-2)^{q_i(n)}+\frac{2}{3}\sum_{i=1}^{s_n-1}\Big(2^{q_i(n)}\sum_{j=i+1}^{s_n}(-1)^{q_j(n)}\Big)
\end{align*}
To our knowledge, no closed expression had been published so far for this sequence.
\end{example}

\begin{example}\label{ex:cophenetic}
The minimum total cophenetic index $\Phi_n$ of a rooted bifurcating tree with $n$ leaves (sequence A174605) satisfies the recurrence 
\begin{align*}
\Phi_n & = \Phi_{\left\lceil{n}/{2}\right\rceil} +\Phi_{\left\lfloor{n}/{2}\right\rfloor}+\binom{\left\lceil{n}/{2}\right\rceil}{2}+\binom{\left\lfloor{n}/{2}\right\rfloor}{2}\\
& = \Phi_{\left\lceil{n}/{2}\right\rceil} +\Phi_{\left\lfloor{n}/{2}\right\rfloor}+\frac{1}{2}\left\lceil\frac{n}{2}\right\rceil^2-\frac{1}{2}\left\lceil\frac{n}{2}\right\rceil+
\frac{1}{2}\left\lfloor\frac{n}{2}\right\rfloor^2-\frac{1}{2}\left \lfloor\frac{n}{2}\right\rfloor
\end{align*}
with initial condition $\Phi_1=0$ \cite{cophenetic}. Therefore
$$
\Phi_n=\frac{1}{2} (x_n^{(2,0)}(1)+x_n^{(0,2)}(1)-x_n^{(1,0)}(1)-x_n^{(0,1)}(1))=
\frac{1}{2} (x_n^{(2,0)}(1)+x_n^{(0,2)}(1)-S_n)
$$
where 
$$
S_n=x_n^{(1,0)}(1)+x_n^{(0,1)}(1)=(q_{s_n}(n)+2)n-2^{q_{s_n}(n)+1}
$$
was already computed in Example \ref{ex:Sackin} and
\begin{align*}
x_n^{(2,0)}(1)&  =n^2+(q_{s_n}(n)-2)n+3-2^{q_{s_n}(n)+1}-3\alpha^{(0,0)}_{n}(1)-\alpha^{(0,1)}_{n}(1)
\\[1ex]
x_n^{(0,2)}(1)& = \alpha^{(0,0)}_{n}(1)+\alpha^{(0,1)}_{n}(1)+n-1 
\end{align*}
Therefore
\begin{align*}
\Phi_n& =\frac{1}{2} \Big(n^2+(q_{s_n}(n)-2)n+3-2^{q_{s_n}(n)+1}-3\alpha^{(0,0)}_{n}(1)-\alpha^{(0,1)}_{n}(1) \\
&\qquad +\alpha^{(0,0)}_{n}(1)+\alpha^{(0,1)}_{n}(1)+n-1-\big((q_{s_n}(n)+2)n-2^{q_{s_n}(n)+1}\big)\Big)\\
&=\frac{1}{2}(n^2-3n+2-2\alpha^{(0,0)}_{n}(1))\\
&=\binom{n-1}{2}-\Big(s_nn+\sum_{i=1}^{s_n}2^{q_i(n)-1}(q_i(n)-2i)-n+1\Big)\\
&\text{(by Eqn.\ (\ref{eqn:alpha00}))}\\
& =\binom{n}{2}{-}s_nn-\sum_{i=1}^{s_n}2^{q_i(n)-1}(q_i(n)-2i)
\end{align*}
No closed expression had been published so far for this sequence.
\end{example}

\begin{example}\label{ex:rqi}
The maximum rooted quartet index $\rho_n$ of a rooted bifurcating tree with $n$ leaves (sequence A300445) satisfies the recurrence 
\begin{align*}
\rho_n & = \rho_{\lceil n/2\rceil} + \rho_{\lfloor n/2\rfloor} + \binom{\lceil n/2\rceil}{2}\binom{\lfloor n/2\rfloor}{2}\\
& =\rho_{\lceil n/2\rceil} + \rho_{\lfloor n/2\rfloor} + \frac{1}{4}\Big(\left\lceil\frac{n}{2}\right\rceil\left\lfloor\frac{n}{2}\right\rfloor-\left\lceil\frac{n}{2}\right\rceil^2\left\lfloor\frac{n}{2}\right\rfloor-\left\lceil\frac{n}{2}\right\rceil\left\lfloor\frac{n}{2}\right\rfloor^2+\left\lceil\frac{n}{2}\right\rceil^2\left\lfloor\frac{n}{2}\right\rfloor^2\Big)
\end{align*}
with initial condition $\rho_1=0$ \cite{rQI}. Therefore
$$
\rho_n=\frac{1}{4} (x_n^{(1,1)}(1)-x_n^{(2,1)}(1)-x_n^{(1,2)}(1)+x_n^{(2,2)}(1))
$$
where 
\begin{align*}
x_n^{(1,1)}(1)&=\frac{1}{2}n^2+(1+B_1-1)n +(1-0-1)\alpha^{(0,0)}_{n}(1)+(1-1-0)\alpha^{(0,1)}_{n}(1) = \binom{n}{2}
\\[1ex]
x_n^{(2,1)}(1)&=\sum_{k=2}^{3}\Big(\sum_{j=k}^{3}\frac{\binom{2}{j-2}\binom{j}{k}B_{j-k}}{j(2^{j-1}-1)}\Big)n^k +\Big(1+\sum_{l=1}^{2}\frac{\binom{2}{l-1}(B_l-1)}{2^{l}-1}\Big)n+\sum_{l=0}^{1}\frac{\binom{2}{l}}{2^{1+l}-1}-1\nonumber\\
&\quad +\sum_{i=0}^{2}\Big(2^{-i}\binom{3}{i}-2^{-i+1}\binom{2}{i-1}-\sum_{l=i}^{1}\frac{\binom{2}{l}\binom{1+l}{i}}{2^{1+l}-1}\Big)\alpha^{(0,i)}_{n}(1)\nonumber\\
&=\frac{2}{9}n^3+\frac{1}{6}n^2-\frac{19}{18}n+\frac{2}{3} -\frac{2}{3}\alpha^{(0,0)}_{n}(1) -\frac{5}{6}\alpha^{(0,1)}_{n}(1)-\frac{1}{4}\alpha^{(0,2)}_{n}(1)\nonumber\\[1ex]
x_n^{(1,2)}& =\sum_{k=2}^{3}\Big(\sum_{j=k}^{3}\frac{\binom{1}{j-3}\binom{j}{k}B_{j-k}}{j(2^{j-1}-1)}\Big)n^k+\Big(1+\sum_{l=1}^{2}\frac{\binom{1}{l-2}(B_l-1)}{2^{l}-1}\Big)n+\sum_{l=0}^{0}\frac{\binom{1}{l}}{2^{2+l}-1}-1\nonumber\\
&\quad +\sum_{i=0}^{2}\Big(2^{-i}\binom{3}{i}-2^{-i+1}\binom{1}{i-2}-\sum_{l=i-1}^{0}\frac{\binom{1}{l}\binom{2+l}{i}}{2^{2+l}-1}\Big)\alpha^{(0,i)}_{n}(1)\nonumber\\
& = \frac{1}{9}n^3-\frac{1}{6}n^2+\frac{13}{18}n-\frac{2}{3}+\frac{2}{3}\alpha^{(0,0)}_{n}(1)
+\frac{5}{6}\alpha^{(0,1)}_{n}(1)+\frac{1}{4}\alpha^{(0,2)}_{n}(1)
\\[1ex]
x_n^{(2,2)}& =\sum_{k=2}^{4}\Big(\sum_{j=k}^{4}\frac{\binom{2}{j-3}\binom{j}{k}B_{j-k}}{j(2^{j-1}-1)}\Big)n^k +\Big(1+\sum_{l=1}^{3}\frac{\binom{2}{l-2}(B_l-1)}{2^{l}-1}\Big)n+\sum_{l=0}^{1}\frac{\binom{2}{l}}{2^{2+l}-1}-1\nonumber\\
&\quad +\sum_{i=0}^{3}\Big(2^{-i}\binom{4}{i}-2^{-i+1}\binom{2}{i-2}-\sum_{l=i-1}^{1}\frac{\binom{2}{l}\binom{2+l}{i}}{2^{2+l}-1}\Big)\alpha^{(0,i)}_{n}(1)\nonumber\\
& = \frac{1}{14} n^4-\frac{2}{63}n^3-\frac{2}{21}n^2+\frac{55}{126}n-\frac{8}{21}+\frac{8}{21}\alpha^{(0,0)}_{n}(1)+\frac{10}{21}\alpha^{(0,1)}_{n}(1)+\frac{1}{7}\alpha^{(0,2)}_{n}(1)
\end{align*}
Therefore,
\begin{align}
\rho_n & =\frac{1}{4}\Big[\binom{n}{2}-\Big(\frac{2}{9}n^3+\frac{1}{6}n^2-\frac{19}{18}n+\frac{2}{3} -\frac{2}{3}\alpha^{(0,0)}_{n}(1) -\frac{5}{6}\alpha^{(0,1)}_{n}(1)-\frac{1}{4}\alpha^{(0,2)}_{n}(1)\Big)\nonumber\\
&\qquad -\Big(\frac{1}{9}n^3-\frac{1}{6}n^2+\frac{13}{18}n-\frac{2}{3}+\frac{2}{3}\alpha^{(0,0)}_{n}(1)
+\frac{5}{6}\alpha^{(0,1)}_{n}(1)+\frac{1}{4}\alpha^{(0,2)}_{n}(1)\Big)\nonumber\\
&\qquad+\Big(\frac{1}{14} n^4-\frac{2}{63}n^3-\frac{2}{21}n^2+\frac{55}{126}n-\frac{8}{21}+\frac{8}{21}\alpha^{(0,0)}_{n}(1)+\frac{10}{21}\alpha^{(0,1)}_{n}(1)+\frac{1}{7}\alpha^{(0,2)}_{n}(1)\Big)\Big]\nonumber
\\
& = \frac{1}{504} \big[(n-3) (n-2) (n-1) (9 n+8)+48 \alpha^{(0,0)}_{n}(1)+60 \alpha^{(0,1)}_{n}(1)+18\alpha^{(0,2)}_{n}(1)\big]
\label{eqn:rho1}
\end{align}
Now,
\begin{align*}
& \alpha^{(0,0)}_{n}(1) = \frac{1}{2}\sum_{i=1}^{s_n}M_{i}(n)\big(q_i(n)-q_{i-1}(n)\big) +\sum_{i=1}^{s_n-1}(n-M_{i}(n)) -2^{q_{s_n}(n)}+1\quad \text{(by Eqn. (\ref{eqn:prealpha00}))}\\
&\quad = \frac{1}{2}\sum_{i=1}^{s_n}M_{i}(n)\big(q_i(n)-q_{i-1}(n)\big) +\sum_{i=1}^{s_n}(n-M_{i}(n)) -n+1
\\[1ex]
& \alpha_n^{(0,1)}(1)  =\frac{1}{4}\sum_{i=1}^{s_n}\Big(M_i(n)^2\big(T(0,q_i(n),1/2)-T(0,q_{i-1}(n),1/2)\big)+4B_1M_i(n)(q_i(n)-q_{i-1}(n))\Big)\\ &\qquad  +\sum_{i=1}^{s_n-1}2^{-q_i(n)}(n-M_i(n))M_{i+1}(n)\\
&\quad = \frac{1}{2}\sum_{i=1}^{s_n}M_i(n)^2\left(2^{-q_{i-1}(n)}-2^{-q_{i}(n)}\right)-\frac{1}{2}\sum_{i=1}^{s_n}M_i(n)(q_i(n)-q_{i-1}(n))+\sum_{i=1}^{s_n}2^{-q_i(n)}M_{i+1}(n)\\[1ex]
&\alpha_n^{(0,2)}(1)  =\frac{1}{6}\sum_{i=1}^{s_n}\Big(M_i(n)^3\big(T(0,q_i(n),1/4)-T(0,q_{i-1}(n),1/4)\big)\\ &\qquad +6B_1 M_i(n)^2\big(T(0,q_i(n),1/2)-T(0,q_{i-1}(n),1/2)\big)+12B_2 M_i(n) (q_i(n)-q_{i-1}(n))\Big)\\ & \qquad  +\sum_{i=1}^{s_n-1}4^{-q_i(n)}(n-M_i(n))M_{i+1}(n)^2\\
&\quad = \frac{2}{9}\sum_{i=1}^{s_n}M_i(n)^3 (4^{-q_{i-1}(n)}-4^{-q_{i}(n)})-\sum_{i=1}^{s_n}M_i(n)^2 (2^{-q_{i-1}(n)}-2^{-q_{i}(n)})\\ & \qquad +\frac{1}{3}\sum_{i=1}^{s_n}M_i(n)(q_i-q_{i-1})+\sum_{i=1}^{s_n}4^{-q_i(n)}M_{i+1}(n)^2(n-M_i(n))
\end{align*}
and then, substituting these expressions in Eqn.~(\ref{eqn:rho1}),  
we finally obtain
\begin{align*}
\rho_n & =
\frac{1}{252} \Big[3(9 n+8)\binom{n-1}{3} -24n+24\\
& \qquad\qquad +2 \sum_{i=1}^{s_n}M_i(n)^3 (4^{-q_{i-1}(n)}-4^{-q_{i}(n)})+6\sum_{i=1}^{s_n}M_i(n)^2 (2^{-q_{i-1}(n)}-2^{-q_{i}(n)})\\
& \qquad\qquad +\sum_{i=1}^{s_n}(n-M_i(n))(24+30\cdot 2^{-q_i(n)}M_{i+1}(n)+9\cdot 4^{-q_i(n)}M_{i+1}(n)^2)\Big]
\end{align*}
This expression can be simplified as follows. Notice that
\begin{align*}
& 2 \sum_{i=1}^{s_n}M_i(n)^3 (4^{-q_{i-1}(n)}-4^{-q_{i}(n)})+6\sum_{i=1}^{s_n}M_i(n)^2 (2^{-q_{i-1}(n)}-2^{-q_{i}(n)})\\
&\quad= 2\Big( \sum_{i=0}^{s_n}M_{i+1}(n)^34^{-q_{i}(n)}- \sum_{i=1}^{s_n}M_i(n)^34^{-q_{i}(n)}\Big)\\
&\qquad\quad+6\Big(\sum_{i=0}^{s_n-1}M_{i+1}(n)^2 2^{-q_{i}(n)}-\sum_{i=1}^{s_n}M_i(n)^22^{-q_{i}(n)}\Big)\\
&\quad= 2n^3+2 \sum_{i=1}^{s_n}4^{-q_{i}(n)}(M_{i+1}(n)^3- M_i(n)^3)\\
&\qquad\quad +6n^2+6\sum_{i=1}^{s_n}2^{-q_{i}(n)}(M_{i+1}(n)^2-M_i(n)^2)\\
&\quad= 2n^3+6n^2+2 \sum_{i=1}^{s_n}4^{-q_{i}(n)}(-3 M_i(n)^22^{q_i(n)}+3 M_i(n)4^{q_i(n)}-8^{q_i(n)})\\
&\qquad\quad +6\sum_{i=1}^{s_n}2^{-q_{i}(n)}(-2M_i(n)2^{q_i(n)}+4^{q_i(n)})\\
&\quad= 2n^3+6n^2-2 \sum_{i=1}^{s_n}(3 M_i(n)^22^{-q_i(n)}+3 M_i(n)-2\cdot 2^{{q_i(n)}})\\
&\quad= 2n^3+6n^2+4n-6 \sum_{i=1}^{s_n}M_i(n)(M_i(n)2^{-q_i(n)}+1)
\end{align*}
and
\begin{align*}
&\sum_{i=1}^{s_n}(n-M_i(n))(24+30\cdot 2^{-q_i(n)}M_{i+1}(n)+9\cdot 4^{-q_i(n)}M_{i+1}(n)^2)\\
&\quad=\sum_{i=1}^{s_n}(n-M_i(n))\big(24+30\cdot 2^{-q_i(n)}(M_{i}(n)-2^{q_i(n)})+9\cdot 4^{-q_i(n)}(M_{i}(n)-2^{q_i(n)})^2\big)\\
&\quad=\sum_{i=1}^{s_n}(n-M_i(n))(3+12\cdot 2^{-q_i(n)}M_{i}(n)+9\cdot 4^{-q_i(n)}M_{i}(n)^2)\\
&\quad=3n\sum_{i=1}^{s_n}(1+4\cdot 2^{-q_i(n)}M_{i}(n)+3\cdot 4^{-q_i(n)}M_{i}(n)^2)\\
&\qquad\quad
-3\sum_{i=1}^{s_n}M_{i}(n)\big(1+4\cdot 2^{-q_i(n)}M_{i}(n)+3\cdot 4^{-q_i(n)}M_{i}(n)^2)
\end{align*}
and hence, finally
\begin{align*}
\rho_n & =
\frac{1}{252} \Big[3(9 n+8)\binom{n-1}{3}-24n+24 +2n^3+6n^2+4n\\
&\quad-6 \sum_{i=1}^{s_n}M_i(n)(M_i(n)2^{-q_i(n)}+ 1)  +3n\sum_{i=1}^{s_n}(1+4\cdot 2^{-q_i(n)}M_{i}(n)+3\cdot 2^{-2q_i(n)}M_{i}(n)^2)\\
&\quad
-3\sum_{i=1}^{s_n}M_{i}(n)\big(1+4\cdot 2^{-q_i(n)}M_{i}(n)+3\cdot 2^{-2q_i(n)}M_{i}(n)^2)\Big]\\
& =
\frac{1}{504} \Big[9n^4-42n^3+63n^2-6n+6n\sum_{i=1}^{s_n}(1+M_{i}(n)2^{-q_i(n)})(1+3M_{i}(n)2^{-q_i(n)})\\
&\qquad -18\sum_{i=1}^{s_n}{M_{i}(n)}(1+M_{i}(n)2^{-q_i(n)})^2\Big]
\end{align*}
where notice that $M_i(n)2^{-q_i(n)}=\lfloor n/2^{q_i(n)}\rfloor$.
Again, no closed expression  was known so far for this sequence, either.
\end{example}

\begin{example}
Let $a_{1,n}$ and $a_{3,n}$ denote the sequences A006581 and A006583, respectively.  By Remark \ref{rem:Stephan}, the sequences $\sigma_{i,n}=a_{i,n-1}$, for $i=1,3$, are the solutions with $\sigma_{1,1}=\sigma_{3,1}=0$ of the recurrences
$$
\sigma_{i,n} =2\sigma_{i,\left\lceil{n}/{2}\right\rceil}+2\sigma_{i,\left\lfloor{n}/{2}\right\rfloor}+g_i(n)
$$
where
\begin{align*}
g_1(n) & =\left\lfloor\frac{n}{2}\right\rfloor\Big(\left\lceil\frac{n}{2}\right\rceil-\left\lfloor\frac{n}{2}\right\rfloor\Big)=\left\lfloor\frac{n}{2}\right\rfloor\left\lceil\frac{n}{2}\right\rceil-\left\lfloor\frac{n}{2}\right\rfloor^2\\[1ex]
g_3(n) & =6\Big(\left\lfloor\frac{n}{2}\right\rfloor-1\Big)+\Big(\left\lceil\frac{n}{2}\right\rceil-\left\lfloor\frac{n}{2}\right\rfloor\Big)\Big(5\left\lfloor\frac{n}{2}\right\rfloor-4-6\left\lfloor\frac{n}{2}\right\rfloor+6\Big)\\
& = 4\left\lfloor\frac{n}{2}\right\rfloor+2\left\lceil\frac{n}{2}\right\rceil-6-g_1(n)
\end{align*}
Therefore,
\begin{align*}
\sigma_{1,n}& =x_n^{(1,1)}(2)- x_n^{(0,2)}(2)\\
\sigma_{3,n}& =4x_n^{(0,1)}(2)+2 x_n^{(1,0)}(2)-6x_n^{(0,0)}(2)-\sigma_{1,n}
\end{align*}
where
\begin{align*}
x_n^{(0,0)}(2) & =\frac{1}{3}\big(3n\cdot 2^{q_{s_n}(n)}-2\cdot 4^{q_{s_n}(n)}-1\big)\quad \text{(cf. Example \ref{ex:constant})}\\[1ex]
x_n^{(0,1)}(2) &=\alpha^{(0,0)}_{n}(2)+x_n^{(0,0)}(2)\\[1ex]
x_n^{(0,2)}(2) & =\alpha^{(0,0)}_{n}(2)+ \alpha^{(0,1)}_{n}(2)+x_n^{(0,0)}(2) \\[1ex]
x_n^{(1,0)}(2) & =-n +\frac{2}{3}\big(3n\cdot 2^{q_{s_n}(n)}-2\cdot 4^{q_{s_n}(n)}-1\big)+1-\alpha^{(0,0)}_{n}(2)\\
&=2x_n^{(0,0)}(2)-n+1-\alpha^{(0,0)}_{n}(2)\\
 x_n^{(1,1)}(2) & =\sum_{k=1}^{2}\Big(\sum_{j=k\atop j\neq 2}^{2}\frac{\binom{1}{j-2}\binom{j}{k}B_{j-k}}{j(2^{j-1}-2)}\Big)n^k\\
&\quad +\frac{1}{2}\Big(\dfrac{(q_{s_n}(n)-1)4^{q_{s_n}(n)+1}-q_{s_n}(n)4^{q_{s_n}(n)}+4}{9}+nq_{s_n}(n)2^{q_{s_n}(n)}-q_{s_n}(n)4^{q_{s_n}(n)}\Big)\\
&\quad +\Big(1-\sum_{l=0\atop l\neq 0}^{0}\frac{\binom{1}{l}}{2^{1+l}-2}\Big)\frac{1}{3}\big(3n\cdot 2^{q_{s_n}(n)}-2\cdot 4^{q_{s_n}(n)}-1\big)\nonumber\\
&\quad +\sum_{i=0}^{1}\Big(2^{-i}\binom{2}{i}-2^{-i+1}\binom{1}{i-1}-\sum_{l=i\atop l\neq 0}^{0}\frac{\binom{1}{l}\binom{1+l}{i}}{2^{1+l}-2}\Big)\alpha^{(0,i)}_{n}(2)+\frac{1}{2}\sum_{i=0}^{0}\binom{1}{i}\alpha^{(1,i)}_{n}(a)\\
& =\frac{1}{18}(9nq_{s_n}(n)2^{q_{s_n}(n)}-6q_{s_n}(n)4^{q_{s_n}(n)}-4\cdot 4^{q_{s_n}(n)}+4)\\
&\quad +\frac{1}{3}\big(3n\cdot 2^{q_{s_n}(n)}-2\cdot 4^{q_{s_n}(n)}-1\big)+\alpha^{(0,0)}_{n}(2)+\frac{1}{2}\alpha^{(1,0)}_{n}(2)
\end{align*}
and
\begin{align*}
 \alpha^{(0,0)}_{n}(2) & =\frac{1}{2}\sum_{i=1}^{s_n} M_{i}(n)\big(T(0,q_i(n),2)-T(0,q_{i-1}(n),2)\big)+\sum_{i=1}^{s_n-1}2^{q_i(n)} (n-M_{i}(n)) -T(0,q_{s_n}(n),4)\\
 & = \frac{1}{2}\sum_{i=1}^{s_n} M_{i}(n)(2^{q_i(n)}-2^{q_{i-1}(n)})+\sum_{i=1}^{s_n-1}2^{q_i(n)} (n-M_{i}(n)) -\frac{1}{3}(4^{q_{s_n}(n)}-1)
\\[1ex]
 \alpha^{(0,1)}_{n}(2) & = \frac{1}{4}\sum_{i=1}^{s_n}M_{i}(n)^{2}(q_i(n)-q_{i-1}(n))-
 \frac{1}{2}\sum_{i=1}^{s_n}M_{i}(n)(2^{q_i(n)}-2^{q_{i-1}(n)})\\
 &\quad
+\sum_{i=1}^{s_n-1} (n-M_{i}(n))M_{i+1}(n) 
 \\[1ex]
 \alpha^{(1,0)}_{n}(2) & 
 =\frac{1}{2}\sum_{i=1}^{s_n}M_{i}(n)\big(T(1,q_i(n),2)-T(1,q_{i-1}(n),2)\big)
+\sum_{i=1}^{s_n-1}q_i(n)2^{q_i(n)} (n-M_{i}(n))\\
&\quad -T(1,q_{s_n}(n),4)\\
& =\frac{1}{2}\sum_{i=1}^{s_n}M_{i}(n)((q_{i}(n)-2)2^{q_{i}(n)} -(q_{i-1}(n)-2)2^{q_{i-1}(n)})\\
&\quad
+\sum_{i=1}^{s_n-1}q_i(n)2^{q_i(n)} (n-M_{i}(n)) -\frac{1}{9}\big((3q_{s_n}(n)-4)4^{q_{s_n}(n)}+4\big) 
\end{align*}
Therefore, finally,
\begin{align*}
\sigma_{1,n}& =x_n^{(1,1)}(2)- x_n^{(0,2)}(2)\\ & =\frac{1}{18}(9nq_{s_n}(n)2^{q_{s_n}(n)}-6q_{s_n}(n)4^{q_{s_n}(n)}-4\cdot 4^{q_{s_n}(n)}+4)\\
&\quad +\frac{1}{3}\big(3n\cdot 2^{q_{s_n}(n)}-2\cdot 4^{q_{s_n}(n)}-1\big)+\alpha^{(0,0)}_{n}(2)+\frac{1}{2}\alpha^{(1,0)}_{n}(2)\\
&\quad -\alpha^{(0,0)}_{n}(2)- \alpha^{(0,1)}_{n}(2)-\frac{1}{3}\big(3n\cdot 2^{q_{s_n}(n)}-2\cdot 4^{q_{s_n}(n)}-1\big)\\
& =\frac{1}{18}(9nq_{s_n}(n)2^{q_{s_n}(n)}-6q_{s_n}(n)4^{q_{s_n}(n)}-4\cdot 4^{q_{s_n}(n)}+4) +\frac{1}{2}\alpha^{(1,0)}_{n}(2) - \alpha^{(0,1)}_{n}(2)\\
& =\frac{1}{18}(9nq_{s_n}(n)2^{q_{s_n}(n)}-6q_{s_n}(n)4^{q_{s_n}(n)}-4\cdot 4^{q_{s_n}(n)}+4)\\
&\quad +\frac{1}{4}\sum_{i=1}^{s_n}M_{i}(n)((q_{i}(n)-2)2^{q_{i}(n)} -(q_{i-1}(n)-2)2^{q_{i-1}(n)})\\
&\quad
+\frac{1}{2}\sum_{i=1}^{s_n-1}q_i(n)2^{q_i(n)} (n-M_{i}(n)) -\frac{1}{18}(3q_{s_n}(n)4^{q_{s_n}(n)}-4\cdot 4^{q_{s_n}(n)}+4)\\
&\quad - \frac{1}{4}\sum_{i=1}^{s_n}M_{i}(n)^{2}(q_i(n)-q_{i-1}(n))+
 \frac{1}{2}\sum_{i=1}^{s_n}M_{i}(n)(2^{q_i(n)}-2^{q_{i-1}(n)})\\
 &\quad 
-\sum_{i=1}^{s_n-1} (n-M_{i}(n))M_{i+1}(n) \\
& =\frac{1}{2}(nq_{s_n}(n)2^{q_{s_n}(n)}-q_{s_n}(n)4^{q_{s_n}(n)})+\frac{1}{4}\sum_{i=1}^{s_n}M_{i}(n)(q_{i}(n)2^{q_{i}(n)} -q_{i-1}(n)2^{q_{i-1}(n)})\\
&\quad - \frac{1}{4}\sum_{i=1}^{s_n}M_{i}(n)^{2}(q_i(n)-q_{i-1}(n))
+\frac{1}{2}\sum_{i=1}^{s_n-1}q_i(n)2^{q_i(n)}(n-M_i(n))\\
 &\quad 
-\sum_{i=1}^{s_n-1} (n-M_{i}(n))M_{i+1}(n)\\
& =\frac{1}{4}\sum_{i=1}^{s_n}M_{i}(n)(q_{i}(n)2^{q_{i}(n)} -q_{i-1}(n)2^{q_{i-1}(n)}) - \frac{1}{4}\sum_{i=1}^{s_n}M_{i}(n)^{2}(q_i(n)-q_{i-1}(n))\\
&\quad
+\frac{1}{2}\sum_{i=1}^{s_n}q_i(n)2^{q_i(n)}(n-M_i(n))
-\sum_{i=1}^{s_n} (n-M_{i}(n))M_{i+1}(n)\\
& =\frac{1}{4}\sum_{i=1}^{s_n}M_{i}(n)q_{i}(n)(2^{q_{i}(n)}-M_i(n))
+\frac{1}{4}\sum_{i=1}^{s_n}M_{i}(n)q_{i-1}(n)(M_i(n)+2^{q_{i-1}(n)})
\\
&\quad
-\frac{1}{2}\sum_{i=1}^{s_n}M_{i}(n)q_{i-1}(n)2^{q_{i-1}(n)}
+\frac{1}{2}\sum_{i=1}^{s_n}q_i(n)2^{q_i(n)}(n-M_{i+1}(n)-2^{q_{i}(n)})\\
 &\quad 
-\sum_{i=1}^{s_n} (n-M_{i}(n))M_{i+1}(n)\\
& =-\frac{1}{4}\sum_{i=1}^{s_n}M_{i}(n)q_{i}(n)M_{i+1}(n)
+\frac{1}{4}\sum_{i=1}^{s_n}M_{i}(n)q_{i-1}(n)M_{i-1}(n)
\\
&\quad
-\frac{1}{2}\sum_{i=0}^{s_n}M_{i+1}(n)q_{i}(n)2^{q_{i}(n)}
-\frac{1}{2}\sum_{i=1}^{s_n}q_i(n)2^{q_i(n)}M_{i+1}(n)\\
&\quad 
+\frac{1}{2}\sum_{i=1}^{s_n}q_i(n)2^{q_i(n)}(n-2^{q_{i}(n)})
-\sum_{i=1}^{s_n} (n-M_{i}(n))M_{i+1}(n)
\end{align*}
\begin{align*}
& = 
-\frac{1}{4}\sum_{i=1}^{s_n}M_{i}(n)q_{i}(n)M_{i+1}(n)
+\frac{1}{4}\sum_{i=0}^{s_n-1}M_{i+1}(n)q_{i}(n)M_{i}(n)
\\
&\quad -\sum_{i=1}^{s_n}q_i(n)2^{q_i(n)}M_{i+1}(n)+\frac{1}{2}\sum_{i=1}^{s_n}q_i(n)2^{q_i(n)}(n-2^{q_{i}(n)})
-\sum_{i=1}^{s_n} (n-M_{i}(n))M_{i+1}(n)\\
& = \frac{1}{2}\sum_{i=1}^{s_n}q_i(n)2^{q_i(n)}(n-2^{q_{i}(n)}-2M_{i+1}(n))
-\sum_{i=1}^{s_n} (n-M_{i}(n))M_{i+1}(n)
\end{align*}
and
\begin{align*}
\sigma_{3,n}&=4x_n^{(0,1)}(2)+2 x_n^{(1,0)}(2)-6x_n^{(0,0)}(2)-\sigma_{1,n}\\
&=4(\alpha^{(0,0)}_{n}(2)+x_n^{(0,0)}(2))+2(2x_n^{(0,0)}(2)-n+1-\alpha^{(0,0)}_{n}(2))-6x_n^{(0,0)}(2)-\sigma_{1,n}\\
& =2\alpha^{(0,0)}_{n}(2)+2x_n^{(0,0)}(2))-2n+2-\sigma_{1,n}
\\
& = \sum_{i=1}^{s_n} M_{i}(n)(2^{q_i(n)}-2^{q_{i-1}(n)})+2\sum_{i=1}^{s_n-1}2^{q_i(n)} (n-M_{i}(n)) -\frac{2}{3}(4^{q_{s_n}(n)}-1)\\
&\quad +\frac{2}{3}\big(3n\cdot 2^{q_{s_n}(n)}-2\cdot 4^{q_{s_n}(n)}-1\big)-2n+2-\sigma_{1,n}
\\
&=\sum_{i=1}^{s_n} M_{i}(n)(2^{q_i(n)}-2^{q_{i-1}(n)})+2\sum_{i=1}^{s_n}2^{q_i(n)} (n-M_{i}(n)) -2n+2-\sigma_{1,n}\\
&=-\sum_{i=1}^{s_n} M_{i}(n)2^{q_{i-1}(n)}+n^2+\sum_{i=1}^{s_n}2^{q_i(n)} (n-M_{i}(n)) -2n+2-\sigma_{1,n}\\
&=2\binom{n-1}{2}  -\sigma_{1,n}
\end{align*}
Again, no closed formulas for these sequences had been published so far.
\end{example}

\begin{rem} 
Notice that $x_{2^m}^{(r,t)}=2ax_{2^{m-1}}^{(r,t)}+2^{(m-1)(r+t)}$ and therefore, since $x_1^{(r,t)}=0$,
\begin{align*}
x_{2^m}^{(r,t)}(a) & =\sum_{k=0}^{m-1} (2a)^k2^{(r+t)(m-k-1)}=
2^{(r+t)(m-1)}\sum_{k=0}^{m-1} (2^{-r-t+1}a)^k\\
& =2^{(r+t)(m-1)}T(0,m,2^{-r-t+1}a)=
\left\{\begin{array}{ll}
2^{(r+t)(m-1)}m & \text{if $a=2^{r+t-1}$, i.e., if $\ell=r-1$}\\[1ex]
\dfrac{(2a)^m-2^{m(r+t)}}{2a-2^{r+t}}& \text{if $a\neq 2^{r+t-1}$}
\end{array}\right.
\end{align*}
We have checked that our general formula for $x_{n}^{(r,t)}(a)$ satisfies this equality when $n=2^m$ with \textsl{Mathematica}.
\end{rem}

\section{Proof of the main result}\label{sec:proof}

\subsection{Some notations}

\noindent\textbf{1.} Throughout this paper we shall use the following notations related  to binary decompositions of natural numbers. For every $n\in \NN$, we shall write its binary decomposition as 
$$
n=\sum_{j=1}^{s_n} 2^{q_j(n)},\quad \text{ with }0\leq q_1(n)<\cdots<q_{s_{n}}(n);
$$
if $n=0$, we set $s_0=0$. With these notations, $s_n$ is the \emph{binary weight} of $n$, that is, the number of $1$'s in the binary representation of $n$, and, if $n\geq 1$, $q_{s_n}(n)=\lfloor \log_2(n)\rfloor$. In order to simplify the notations, we shall set $q_0(n)=0$.

Notice that, for every $M\in \NN_{\geq 2}$ and for every natural number $1\leq p<2^{q_1(M)}$, 
$$
M+p=\sum_{i=1}^{s_p} 2^{q_i(p)}+\sum_{i=1}^{s_M} 2^{q_i(M)}
$$
is the binary decomposition of $M+p$, with
 $0\leq q_1(p)<\cdots<q_{s_p}(p)<q_1(M)<\cdots<q_{s_M}(M)$, 
and hence $s_{M+p}=s_p+s_M$ and
\begin{equation}
q_j(M+p)=\left\{\begin{array}{ll} q_j(p) & \text{for } j=1,\ldots,s_p\\
q_{j-s_p}(M)& \text{for } j=s_p+1,\ldots,s_{M+p}
\end{array}\right.
\label{eqn:q_j(M+p)}
\end{equation}

For every $n\in \NN$ and $i=1,\ldots,s_n$, let
$$
M_i(n)=\sum_{j=i}^{s_n} 2^{q_j(n)}=2^{q_i(n)}\left\lfloor \frac{n}{2^{q_i(n)}}\right\rfloor.
$$
\smallskip

\noindent\textbf{2.} For every $n\in \NN$, let
$\varphi_0(n)=\lfloor{n}/{2}\rfloor$ and 
$\varphi_1(n)=\lceil{n}/{2}\rceil$ and,
for every $m\geq 1$ and for every sequence $b_m\ldots b_0\in\{0,1\}^{m+1}$, let
$$
\varphi_{b_m\ldots b_0}(n)=\varphi_{b_m}\big(\varphi_{b_{m-1}\ldots b_0}(n)).
$$
By Thompson's Rounding Lemma \cite{thompson}, for every sequence $b_m\ldots b_0\in\{0,1\}^{m+1}$, 
\begin{equation}
\varphi_{b_m\ldots b_0}(n)=\left\lfloor \frac{n+\sum_{i=0}^m b_i2^i}{2^{m+1}}\right\rfloor.
\label{eqn:thompson}
\end{equation}
\smallskip

\noindent\textbf{3.} For every $m\in \NN$, let $B_m$ denote the $m$-th Bernoulli number of the first kind. Recall that these Bernoulli numbers can be defined, starting with $B_0=1$, by means of the recurrence
\begin{equation}
\sum_{k=0}^m \binom{m+1}{k} B_k=0,\quad  m\geq 1. 
\label{eqn:recurrenceB}
\end{equation}
Let moreover $B_m(x)\in \QQ[x]$ be the Bernoulli polynomial of degree $m$, defined by
$$
B_m(x)=\sum_{k=0}^m \binom{m}{k} B_k x^{m-k}.
$$
We list below several well-known properties of the Bernoulli numbers  and polynomials that we shall use in our proofs, frequently without any further notice. For these and other properties of the Bernoulli numbers  and polynomials, see for instance \cite[Ch. 23]{abram}.
\begin{align}
&B_1=-1/2\text{ and } B_{2i+1}=0\text{ for every $i\geq 1$} \label{eqn:Bimp}\\
& B_m=B_m(0)=(-1)^m B_m(1) \label{eqn:Bn(0)}\\
& B_m(x+y)=\sum_{k=0}^m \binom{m}{k} B_k(x)y^{m-k} \label{eqn:Bn(x+y)}
\end{align}
In particular, 
\begin{align}
B_m(2)& =B_m(1+1)=\sum_{k=0}^m \binom{m}{k} B_k(1)& & \text{(by (\ref{eqn:Bn(x+y)}))} \nonumber\\
&=\sum_{k=0}^m \binom{m}{k} (-1)^k B_k& &\text{(by (\ref{eqn:Bn(0)}))} \nonumber\\
& =\sum_{k=0}^{m} \binom{m}{k} B_k-2mB_1=\sum_{k=0}^m \binom{m}{k} B_k+m& &\text{(by (\ref{eqn:Bimp}))}\nonumber\\
&=B_m(1)+m=(-1)^mB_m+m& & \text{(again by (\ref{eqn:Bn(0)}))}
\label{eqn:B_n(2)}
\end{align}

\noindent\textbf{4.} For every $d\in \NN$, $n\in \NN_{\geq 1}$, and $x\in \RR\setminus\{0\}$, let
$$
T(d,n,x)=\sum_{k=0}^{n-1} k^dx^k
$$
The value of $T(d,n,1)$ is given by \emph{Faulhaber's formula} \cite[(6.78)]{Knuth2}: for every $d\in \NN$ and $n\in\NN_{\geq 1}$,
$$
\sum_{k=0}^{n-1} k^d=\frac{1}{d+1}\sum_{j=0}^d \binom{d+1}{j}B_jn^{d+1-j}
$$
We shall also use this formula in the following, equivalent way:
\begin{equation}
\sum_{k=1}^{n-1} k^d=\frac{1}{d+1}\Big(\sum_{j=0}^{d+1} \binom{d+1}{j}B_jn^{d+1-j}+(-1)^{d}B_{d+1}\Big)
\label{eqn:sumpots}
\end{equation}
The equivalence stems from the fact that if $d\geq 1$, $(-1)^{d}B_{d+1}=-B_{d+1}$, while when $d=0$,
$(-1)^{d}B_{d+1}=B_1=-1/2$.

When $x\neq 1$, the double sequence $T$ satisfies the recurrence
$$
T(d,n,x)=\frac{x}{1-x}\sum_{p=0}^{d-1}\binom{d}{p} T(p,n,x)-\frac{x^n}{1-x}n^d,\quad d\geq 1.
$$
In particular, when $x\neq 1$,
\begin{equation}
T(1,n,x)=\frac{x}{1-x}\big(T(0,n,x)-nx^{n-1}\big)
\label{eqn:recT}
\end{equation}
We shall use these double sequences $T(d,n,x)$ in order to unify some notations and proofs, but, as we have already encountered in the statement of our main result, we actually only need to know closed formulas for them when $d=0,1$:
\begin{equation}
\begin{array}{ll}
\displaystyle T(0,n,1)=n,\qquad &\displaystyle  T(0,n,x)=\frac{x^n-1}{x-1}\quad \text{for $x\neq 1$} \\[2ex]
\displaystyle T(1,n,1)=\binom{n}{2}, &\displaystyle  T(1,n,x)=\frac{nx^n(x-1)-x(x^n-1)}{(x-1)^2}\quad \text{for $x\neq 1$} 
\end{array}
\label{eqn:T}
\end{equation}

\subsection{Some technical lemmas}

\noindent We begin by providing closed formulas for several sums that we shall often encounter in our computations.

\begin{lemma}\label{lem:sum_a^q}
For every $d\in \NN$, $n\in \NN_{\geq 1}$, and $x\in \RR\setminus\{0\}$:
$$
 \sum_{k=1}^{n-1} q_{s_{k}}(k)^dx^{q_{s_{k}}(k)}= T(d,q_{s_n}(n),2x)+nq_{s_{n}}(n)^dx^{q_{s_n}(n)}-q_{s_{n}}(n)^d(2x)^{q_{s_n}(n)}
$$
\end{lemma}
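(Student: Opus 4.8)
The plan is to evaluate the left-hand sum directly by grouping its terms according to the common value of $q_{s_k}(k)=\lfloor\log_2 k\rfloor$. The key observation is that $q_{s_k}(k)=m$ precisely when $2^m\leq k\leq 2^{m+1}-1$, so as $k$ runs over this block of $2^m$ consecutive integers the summand $q_{s_k}(k)^dx^{q_{s_k}(k)}$ takes the constant value $m^dx^m$.

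First I would set $m_n:=q_{s_n}(n)=\lfloor\log_2 n\rfloor$ and split the index range $\{1,\dots,n-1\}$ into the maximal blocks on which $q_{s_k}(k)$ is constant. For each $m\in\{0,\dots,m_n-1\}$ the whole block $\{2^m,\dots,2^{m+1}-1\}$ lies inside $\{1,\dots,n-1\}$, since $2^{m+1}\leq 2^{m_n}\leq n$; it contributes exactly $2^m$ equal terms. The last block, corresponding to $m=m_n$, is in general incomplete and consists of the integers $\{2^{m_n},\dots,n-1\}$, i.e.\ of $n-2^{m_n}$ terms. Hence the left-hand side equals
$$
\sum_{m=0}^{m_n-1}2^m\,m^dx^m+(n-2^{m_n})\,m_n^dx^{m_n}.
$$

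Next I would recognize the two pieces. The complete blocks give $\sum_{m=0}^{m_n-1}m^d(2x)^m=T(d,m_n,2x)$ straight from the definition of $T$, while the incomplete block rewrites as $(n-2^{m_n})\,m_n^dx^{m_n}=n\,m_n^dx^{m_n}-m_n^d(2x)^{m_n}$. Substituting back $m_n=q_{s_n}(n)$ yields exactly the claimed right-hand side.

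I do not anticipate a genuine obstacle here; the only points requiring care are the boundary bookkeeping at the top (possibly empty or incomplete) block and the degenerate case $n=1$, where the sum is empty and one must invoke the conventions $0^0=1$ and $T(d,0,x)=0$ to check that both sides vanish, noting $q_{s_1}(1)=0$. An alternative route would be induction on $n$, whose inductive step splits into the cases ``$n+1$ is a power of $2$'' and ``$n+1$ is not a power of $2$'' according to whether $q_{s_{n+1}}(n+1)$ jumps by one or stays fixed; I would favour the direct grouping argument, since it avoids this case distinction and exposes the origin of the $T(d,\cdot,2x)$ term transparently.
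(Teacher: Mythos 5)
Your proof is correct and follows essentially the same route as the paper's: the paper likewise splits the sum at $k=2^{q_{s_n}(n)}$, evaluates the complete dyadic blocks as $\sum_{j=0}^{q_{s_n}(n)-1}j^d2^jx^j=T(d,q_{s_n}(n),2x)$, and counts the $n-2^{q_{s_n}(n)}$ constant terms in the incomplete top block. Your treatment of the boundary cases (empty top block, $n=1$) is a harmless addition of detail the paper leaves implicit.
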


\begin{proof}
\begin{align*}
 \sum_{k=1}^{n-1} q_{s_{k}}(k)^dx^{q_{s_{k}}(k)}&= \sum_{k=1}^{2^{q_{s_n}(n)}-1}
q_{s_{k}}(k)^dx^{q_{s_{k}}(k)}+\sum_{k=2^{q_{s_n}(n)}}^{n-1} q_{s_{k}}(k)^dx^{q_{s_{k}}(k)}\nonumber\\
& = \sum_{j=0}^{q_{s_n}(n)-1}j^d2^j x^j+(n-2^{q_{s_n}(n)})q_{s_{n}}(n)^dx^{q_{s_n}(n)}
\end{align*}
\end{proof}

In particular:
\begin{itemize}
\item If $x\neq 1/2$,
\begin{align*}
& \sum_{k=1}^{n-1} x^{q_{s_{k}}(k)} = \dfrac{(2x)^{q_{s_n}(n)}-1}{2x-1}+n\cdot x^{q_{s_n}(n)}-(2x)^{q_{s_n}(n)}\\
& \sum_{k=1}^{n-1} q_{s_{k}}(k)x^{q_{s_{k}}(k)}=
\dfrac{q_{s_n}(n)(2x)^{q_{s_n}(n)}}{2x-1}-\frac{2x((2x)^{q_{s_n}(n)}-1)}{(2x-1)^2} +nq_{s_n}(n)x^{q_{s_n}(n)}-q_{s_n}(n)(2x)^{q_{s_n}(n)}
\end{align*}
\item If $x= 1/2$,
\begin{align*}
& \sum_{k=1}^{n-1} x^{q_{s_{k}}(k)} = q_{s_n}(n)+n\cdot 2^{-q_{s_n}(n)}-1\\
& \sum_{k=1}^{n-1} q_{s_{k}}(k)x^{q_{s_{k}}(k)}=\dfrac{q_{s_n}(n)(q_{s_n}(n)-3)}{2}+nq_{s_n}(n)2^{-q_{s_n(n)}}
\end{align*}
\end{itemize}

\begin{lemma}\label{lem:sumx^t}
For every $d\in \NN$, $n\in \NN_{\geq 1}$, and $x\in \RR\setminus\{0\}$,
$$
 \sum_{k=0}^{n-1}\sum_{i=1}^{s_{k}}q_{i}(k)^dx^{q_i(k)}   =\sum_{i=1}^{s_n} 2^{q_i(n)-1}T(d,q_{i}(n),x) +\sum_{i=1}^{s_n}q_i(n)^dx^{q_i(n)}(n-M_i(n))
$$
\end{lemma}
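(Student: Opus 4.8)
The plan is to prove the identity by Fubini (interchanging the order of summation) and reducing it to a purely numerical counting identity, which is then matched coefficientwise in $x$.

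First I would rewrite the left-hand side by exchanging the two sums. For $k=0$ the inner sum is empty (since $s_0=0$), while for each $k\geq 1$ the sum $\sum_{i=1}^{s_k}q_i(k)^d x^{q_i(k)}$ runs exactly over the positions $q$ at which the binary expansion of $k$ has a $1$. Collecting the contribution of each fixed bit position $q$,
\[
\sum_{k=0}^{n-1}\sum_{i=1}^{s_k}q_i(k)^d x^{q_i(k)}=\sum_{q\geq 0}N_q(n)\,q^d x^q,
\]
where $N_q(n)=\#\{k:0\leq k<n,\ \text{the $q$-th binary digit of $k$ equals }1\}$. Both sides of the lemma are finite linear combinations of the nonnegative powers $x^0,x^1,x^2,\dots$ (note that $T(d,q_i(n),x)$ only involves $x^0,\dots,x^{q_i(n)-1}$), hence polynomials in $x$; since they are claimed to agree for all $x\in\RR\setminus\{0\}$, i.e.\ on infinitely many points, it suffices to match the coefficient of each $x^q$. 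On the right-hand side the coefficient of $x^q$ is $q^d\sum_{i:\,q_i(n)>q}2^{q_i(n)-1}$ from the first sum, plus $q^d(n-M_i(n))$ from the second sum when $q=q_i(n)$ for some $i$. Thus the lemma reduces to the numerical identity
\[
N_q(n)=\sum_{i:\,q_i(n)>q}2^{q_i(n)-1}+\big(n-M_{i}(n)\big)\,\mathbf{1}[q=q_{i}(n)\ \text{for some }i],\qquad q\geq 0,
\]
to be proved for every $q$ ($\mathbf 1[\cdot]$ the indicator; the convention $0^0=1$ handles the factor $q^d$ at $q=0$).

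To establish this identity I would use the standard periodicity of the $q$-th bit over $[0,n)$. Writing $n=2^{q+1}\lfloor n/2^{q+1}\rfloor+b_q2^q+L$ with $b_q\in\{0,1\}$ the $q$-th digit of $n$ and $0\leq L=n\bmod 2^q<2^q$, a direct block count gives $N_q(n)=2^q\lfloor n/2^{q+1}\rfloor$ when $b_q=0$, and $N_q(n)=2^q\lfloor n/2^{q+1}\rfloor+L$ when $b_q=1$. On the right-hand side, the bits of $n$ strictly above position $q$ sum to $\sum_{i:\,q_i(n)>q}2^{q_i(n)}=2^{q+1}\lfloor n/2^{q+1}\rfloor$, so the first term equals $2^q\lfloor n/2^{q+1}\rfloor$ in all cases. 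The second term is nonzero exactly when $b_q=1$, say $q=q_i(n)$, and then $M_i(n)=\sum_{j\geq i}2^{q_j(n)}$ is the part of $n$ at positions $\geq q$, namely $2^q+2^{q+1}\lfloor n/2^{q+1}\rfloor$, whence $n-M_i(n)=L$. Comparing the cases $b_q=0$ and $b_q=1$ term by term closes the identity, and with it the lemma.

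The computation is essentially bookkeeping, so the only real obstacle is keeping the carry structure straight: one must verify that $\sum_{i:\,q_i(n)>q}2^{q_i(n)-1}$ is exactly the "high part" $2^q\lfloor n/2^{q+1}\rfloor$ and that $n-M_i(n)$ is exactly the "low part" $L$ at a set bit $q=q_i(n)$. An alternative route is induction on $n$, using that passing from $n$ to $n+1$ increases the left-hand side by $\sum_{i=1}^{s_n}q_i(n)^d x^{q_i(n)}$; however, this forces one to track how the binary decomposition and the quantities $M_i$ transform under the carries produced by adding $1$, which is messier than the direct double count above, so I would prefer the counting argument.
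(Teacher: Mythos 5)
Your proof is correct, and it takes a genuinely different route from the paper's. The paper argues recursively: setting $a^{(d)}_n=\sum_{k=0}^{n-1}\sum_{i=1}^{s_k}q_i(k)^dx^{q_i(k)}$, it peels off the block of indices $k$ between $M_2(n)$ and $M_2(n)+2^{q_1(n)}-1$, uses the compatibility of binary decompositions under adding a small number to a larger one to obtain $a^{(d)}_n=a^{(d)}_{M_2(n)}+a^{(d)}_{2^{q_1(n)}}+2^{q_1(n)}\sum_{i=2}^{s_n}q_i(n)^dx^{q_i(n)}$, iterates this by induction down to the terms $a^{(d)}_{2^{q_j(n)}}$, and only then counts bits, where the count is uniform: each position $t<l$ is set in exactly $2^{l-1}$ of the integers $0\le k<2^l$, giving $a^{(d)}_{2^l}=2^{l-1}T(d,l,x)$. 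You instead interchange the sums at the outset and count, for arbitrary $n$, the number $N_q(n)$ of $k<n$ whose bit $q$ is set, then match coefficients of $x^q$; your periodicity formula $N_q(n)=2^q\lfloor n/2^{q+1}\rfloor+L\cdot\mathbf{1}[b_q=1]$ is precisely the nonuniform generalization of the paper's power-of-two count, and your identifications of $\sum_{i:\,q_i(n)>q}2^{q_i(n)-1}$ with the high part $2^q\lfloor n/2^{q+1}\rfloor$ and of $n-M_i(n)$ with the low part $L$ at a set bit are correct (including the boundary case $q=q_{s_n}(n)$, where the high part vanishes). For this lemma your argument is shorter and dispenses with the induction entirely. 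What the paper's block decomposition buys is reusability: the same splitting is the engine of the later, harder computations of $\gamma^{(d,p,m)}_l$ and $\alpha^{(d,m)}_n$, whose summands contain the factor $M_{i+1}(k)^m$ depending jointly on all the higher bits of $k$, so a bitwise Fubini/coefficient count of your kind does not carry over there, while the recursive peeling does.
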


\begin{proof}
Let  $x\in \RR\setminus\{0\}$. For every $d\in \NN$ and $n\in \NN_{\geq 1}$, set 
$$
a^{(d)}_n\coloneqq\sum_{k=0}^{n-1}\sum_{i=1}^{s_{k}}q_{i}(k)^dx^{q_i(k)}.
$$
Then,
\begin{align*}
a^{(d)}_{n}& =\sum_{k=0}^{M_2(n)-1} \sum_{i=1}^{s_{k}} q_i(k)^dx^{q_i(k)}+\sum_{k=M_2(n)}^{M_2(n)+2^{q_1(n)}-1} \sum_{i=1}^{s_{k}}q_i(k)^dx^{q_i(k)}\\
& = a^{(d)}_{M_2(n)}+\sum_{p=0}^{2^{q_1(n)}-1}\sum_{i=1}^{s_{M_2(n)+p}} q_i(M_2(n)+p)^dx^{q_i(M_2(n)+p)}\\
& = a^{(d)}_{M_2(n)}+\sum_{p=0}^{2^{q_1(n)}-1}\Big(\sum_{i=1}^{s_p} q_i(p)^dx^{q_i(p)}+\sum_{i=1}^{s_{M_2(n)}} q_i(M_2(n))^dx^{q_i(M_2(n))}\Big) &&\text{(by (\ref{eqn:q_j(M+p)}))} \\
& = a^{(d)}_{M_2(n)}+\sum_{p=0}^{2^{q_1(n)}-1}\sum_{i=1}^{s_p} q_i(p)^dx^{q_i(p)}+2^{q_1(n)}\sum_{i=2}^{s_n} q_i(n)^dx^{q_i(n)}\\
& = a^{(d)}_{M_2(n)}+a^{(d)}_{2^{q_1(n)}}+2^{q_1(n)}\sum_{i=2}^{s_n} q_i(n)^dx^{q_i(n)}.
\end{align*}
So, a simple argument by induction shows that, for every $2\leq r\leq s_n$
$$
a^{(d)}_n=a^{(d)}_{M_r(n)}+\sum_{j=1}^{r-1} a^{(d)}_{2^{q_j(n)}}+\sum_{j=1}^{r-1} \Big(2^{q_j(n)}\sum_{i=j+1}^{s_n} q_i(n)^dx^{q_i(n)}\Big).
$$
Taking $r=s_n$, we obtain
\begin{align}
a^{(d)}_n & =\sum_{j=1}^{s_n} a^{(d)}_{2^{q_j(n)}}+\sum_{j=1}^{s_n-1} \Big(2^{q_j(n)}\sum_{i=j+1}^{s_n} q_i(n)^dx^{q_i(n)}\Big)\nonumber\\
& =\sum_{j=1}^{s_n} a^{(d)}_{2^{q_j(n)}}+\sum_{i=2}^{s_n}\Big(q_i(n)^dx^{q_i(n)}\sum_{j=1}^{i-1} 2^{q_j(n)}\Big)\nonumber
\\ &
=\sum_{j=1}^{s_n} a^{(d)}_{2^{q_j(n)}}+\sum_{i=2}^{s_n}q_i(n)^dx^{q_i(n)}(n-M_i(n))
\label{eqn:lemprel1}
\end{align}
Now, for every $l\in\NN$, we have that
\begin{align}
a^{(d)}_{2^l} & = \sum_{k=0}^{2^l-1}\sum_{i=1}^{s_{k}} q_i(k)^dx^{q_i(k)}  =
\sum_{t=0}^{l-1} t^d\cdot x^{t}\cdot \#\{(k,i)\mid 0\leq k\leq 2^l-1,\ q_i(k)=t\} \nonumber\\
& = 2^{l-1}\sum_{t=0}^{l-1} t^d\cdot x^{t}=2^{l-1}T(d,l,x)
\label{eqn:bdosl}
\end{align}
because each 
$\#\{(k,i)\mid 0\leq k\leq 2^l-1,\ q_i(k)=t\}$ is simply the number of binary words $w\in \{0,1\}^l$ with an 1 in the position $t+1$ starting from the right, which is $2^{l-1}$.

Using this expression in Eqn.~(\ref{eqn:lemprel1}), and taking into account that $M_1(n)=n$, we obtain the formula in the statement.
\end{proof}

In particular:
\begin{itemize}
\item If $x\neq 1$,
\begin{align}
& \hspace*{-2ex}\sum_{k=0}^{n-1}\sum_{i=1}^{s_{k}} x^{q_i(k)}=\sum_{i=1}^{s_n} x^{q_i(n)}(n-M_i(n))+\sum_{i=1}^{s_n} 2^{q_i(n)}\cdot \frac{x^{q_i(n)}-1}{2(x-1)} \label{eqn:sumx^t}\\
& \hspace*{-2ex}\sum_{k=0}^{n-1}\sum_{j=1}^{s_{k}}q_{j}(k)x^{q_j(k)}=\sum_{i=1}^{s_n}q_i(n)x^{q_i(n)}(n-M_i(n))+
\sum_{i=1}^{s_n}2^{q_i(n)}\cdot \dfrac{(x-1)q_i(n)x^{q_i(n)} - x^{q_i(n)+1}+x}{2(x - 1)^2} \label{lem:beta0}
\end{align}

\item When $x=1$,
\begin{align}
& \hspace*{-2ex} \sum_{k=0}^{n-1} s_k=\sum_{k=1}^{n-1}\sum_{i=1}^{s_{k}} 1^{q_i(k)}=\sum_{i=1}^{s_n}2^{q_i(n)-1}(q_i(n)+2(s_n-i))
 \label{eqn:sumx^t1}
\\
& \hspace*{-2ex} \sum_{k=1}^{n-1}\sum_{j=1}^{s_{k}}q_{j}(k)=\sum_{i=1}^{s_n} 2^{q_i(n)-1}\Big(\binom{q_{i}(n)}{2}+{2}\sum_{j=i+1}^{s_n} q_j(n)\Big)\label{eqn:sumqx^t1}
\end{align}
\end{itemize}

\begin{lemma}\label{lem:z}
Let $(z_{l,p})_{(l,p)\in \NN^2}$ be a double sequence satisfying:
\begin{enumerate}[(a)]
\item For every $p\in \NN$, $z_{0,p}=0$ 
\item For every $l,p>0$,
$$
z_{l,p}=2z_{l-1,p}+\sum_{q=0}^{p-1} \binom{p}{q}2^{(p-q)(l-1)}z_{l-1,q}
$$
\end{enumerate}
Then, for every $l\geq 0$ and for every $p>0$,
$$
z_{l,p}=-\frac{2^{l}}{p+1}\sum_{j=1}^{l-1}\Big(\sum_{i=1}^{p} \binom{p+1}{i}(2^{i}-1) 2^{(p-i)l+(i-1)j}B_{i}\Big)z_{j,0}.
$$
\end{lemma}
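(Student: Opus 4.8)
The plan is to prove the closed formula by induction on $l$, regarding the right-hand side as a function of $p>0$ and of the \emph{free} boundary data $z_{1,0},z_{2,0},\ldots$, which the recurrence (b) never constrains. Write $F(l,p)$ for the claimed expression. The base case $l=0$ is immediate: the sum $\sum_{j=1}^{l-1}$ is empty, so $F(0,p)=0=z_{0,p}$ by (a); the same empty-sum observation also gives $F(1,p)=0$, while (b) forces $z_{1,p}=2z_{0,p}+\sum_{q=0}^{p-1}\binom{p}{q}z_{0,q}=0$, so the formula holds at $l=1$ as well.

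For the inductive step I would assume $z_{l-1,q}=F(l-1,q)$ for all $q>0$ and substitute into (b). The key structural observation is that the $q=0$ summand $2^{p(l-1)}z_{l-1,0}$ must be peeled off first, since its coefficient $z_{l-1,0}$ (index $j=l-1$) is precisely the new term distinguishing the outer sum $\sum_{j=1}^{l-1}$ in $F(l,p)$ from the sum $\sum_{j=1}^{l-2}$ occurring inside every $F(l-1,\cdot)$. I would then match the coefficient of each $z_{j,0}$ separately. Since $F(l-1,p)$ and all $F(l-1,q)$ involve only $z_{j,0}$ with $j\le l-2$, the total coefficient of $z_{l-1,0}$ is exactly $2^{p(l-1)}$, and for $j\in\{1,\ldots,l-2\}$ the coefficient comes from $2\,F(l-1,p)$ together with $\sum_{q=1}^{p-1}\binom{p}{q}2^{(p-q)(l-1)}F(l-1,q)$.

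The decisive computation for a fixed $j\le l-2$ is to swap the order of summation in the $q$-sum (sum over $i$ first, then $q$), whereupon the $q$-dependent powers of $2$ collapse to $2^{(p-i)(l-1)}$, independent of $q$, and the combinatorial kernel reduces, via $\binom{q+1}{i}/(q+1)=\tfrac1i\binom{q}{i-1}$ and the subset-of-a-subset identity $\binom{p}{q}\binom{q}{i-1}=\binom{p}{i-1}\binom{p-i+1}{q-i+1}$, to a truncated binomial sum equal to $2^{p-i+1}-2$. After simplifying $\tfrac1i\binom{p}{i-1}=\tfrac{1}{p+1}\binom{p+1}{i}$, this contributes a factor $2^{p-i}-1$ relative to the $2\,F(l-1,p)$ term; adding $1+(2^{p-i}-1)=2^{p-i}$ promotes $2^{(p-i)(l-1)}$ to $2^{(p-i)l}$, which is exactly the coefficient in $F(l,p)$. (The $p=1$ case, where the $q$-sum is empty, is a harmless degenerate instance of this.)

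The main obstacle, deserving the most care, is the boundary coefficient $j=l-1$: I must verify that the isolated term $2^{p(l-1)}z_{l-1,0}$ equals the $z_{l-1,0}$-coefficient predicted by $F(l,p)$. After clearing the powers of $2$, this reduces to the purely Bernoulli identity $\sum_{i=1}^{p}\binom{p+1}{i}(2^{i}-1)B_i 2^{-i}=-(p+1)2^{-(p+1)}$, which I would establish by splitting $(2^i-1)2^{-i}=1-2^{-i}$, using $\sum_{i=1}^{p}\binom{p+1}{i}B_i=-1$ from the defining recurrence (\ref{eqn:recurrenceB}), and evaluating $\sum_{i=0}^{p+1}\binom{p+1}{i}B_i 2^{-i}=2^{-(p+1)}B_{p+1}(2)$ directly from the definition of the Bernoulli polynomial together with (\ref{eqn:B_n(2)}); a short parity argument then kills the leftover term, since $B_{p+1}\big((-1)^{p+1}-1\big)=0$ (the bracket vanishes when $p$ is odd, and $B_{p+1}=0$ when $p$ is even). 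Assembling the matched coefficients over $j=1,\ldots,l-1$ gives $z_{l,p}=F(l,p)$, completing the induction.
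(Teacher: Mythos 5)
Your proposal is correct, but it takes a genuinely different route from the paper's. The paper first unrolls the recurrence (b) into the closed form $z_{l,p}=2^{l-1}\sum_{q=0}^{p-1}\binom{p}{q}\big(\sum_{j=1}^{l-1}2^{(p-q-1)j}z_{j,q}\big)$ (its Eqn.~(\ref{eqn:recz^p})), computes $z_{l,p}$ explicitly for $p=2,\ldots,12$ to \emph{guess} the Bernoulli pattern, and then verifies the guess by induction on $l$ inside this unrolled identity; that verification manipulates a triple sum over $q$, the level index $k$, and $i$, evaluates geometric sums $\sum_{k=j+1}^{l-1}2^{(p-i)k}$, and invokes $B_{p+1}(2)-B_{p+1}=p+1$ (Eqn.~(\ref{eqn:B_n(2)})) at the very end to cancel the leftover terms simultaneously for all $j$. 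You instead induct directly on the one-step recurrence (b), which only requires the hypothesis at level $l-1$, and you split the coefficient matching into two regimes: for $j\leq l-2$ the same two combinatorial identities the paper uses ($\binom{q+1}{i}/(q+1)=\binom{q}{i-1}/i$ and subset-of-a-subset) collapse the $q$-sum to $2^{p-i+1}-2$, and the bookkeeping $1+(2^{p-i}-1)=2^{p-i}$ promotes $2^{(p-i)(l-1)}$ to $2^{(p-i)l}$ with no Bernoulli input at all, while the Bernoulli identity $\sum_{i=1}^{p}\binom{p+1}{i}(2^i-1)2^{-i}B_i=-(p+1)2^{-(p+1)}$ (equivalent to the paper's use of Eqns.~(\ref{eqn:recurrenceB}) and (\ref{eqn:B_n(2)}), including your parity argument for $B_{p+1}((-1)^{p+1}-1)=0$, which is valid for all $p>0$) enters exactly once, at the boundary coefficient $j=l-1$. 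I checked both computations and they are sound, including the degenerate cases $i=p$ and $p=1$. What each approach buys: the paper's unroll-and-conjecture presentation documents where the formula comes from, whereas your argument is shorter and more self-contained as a verification --- it avoids the unrolled recurrence and the geometric sums over the level index entirely, and it localizes the Bernoulli computation in a single boundary check.
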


\begin{proof}
Let us iterate several times the recurrence defining our double sequence:
\begin{align*}
& z_{l,p}  = 2z_{l-1,p}+ \sum_{q=0}^{p-1}\binom{p}{q}2^{(p-q)(l-1)}z_{l-1,q}\\
 &\quad = 2\Big(2z_{l-2,p}+ \sum_{q=0}^{p-1}\binom{p}{q}2^{(p-q)(l-2)}z_{l-2,q}\Big)+ \sum_{q=0}^{p-1}\binom{p}{q}2^{(p-q)(l-1)}z_{l-1,q}\\
 &\quad = 2^2z_{l-2,p}+ \sum_{q=0}^{p-1}\binom{p}{q}\Big(2^{(p-q)(l-2)+1}z_{l-2,q}+2^{(p-q)(l-1)}z_{l-1,q}\Big)\\
 &\quad = 2^2\Big(2z_{l-3,p}+ \sum_{q=0}^{p-1}\binom{p}{q}2^{(p-q)(l-3)}z_{l-3,q}\Big) + \sum_{q=0}^{p-1}\binom{p}{q}\Big(2^{(p-q)(l-2)+1}z_{l-2,q}+2^{(p-q)(l-1)}z_{l-1,q}\Big)\\
 &\quad = 2^3z_{l-3,p} + \sum_{q=0}^{p-1}\binom{p}{q}\Big(2^{(p-q)(l-3)+2}z_{l-3,q}+2^{(p-q)(l-2)+1}z_{l-2,q}+2^{(p-q)(l-1)}z_{l-1,q}\Big)
\end{align*}
Reasoning in this way, and recalling that $z_{0,q}=0$ for every $q$, it is easy to prove by induction on $l$ that
\begin{equation}
z_{l,p}  =\sum_{q=0}^{p-1}\binom{p}{q}\Big(\sum_{j=1}^{l-1} 2^{(p-q)j+(l-1-j)}z_{j,q}\Big)= 2^{l-1}\sum_{q=0}^{p-1}\binom{p}{q}\Big(\sum_{j=1}^{l-1} 2^{(p-q-1)j}z_{j,q}\Big)
\label{eqn:recz^p}
\end{equation}
We have computed explicitly $z_{l,p}$ for several $p\geq 1$ using this recurrence, in order to look for a pattern. The results have been:
\begin{align*}
z_{l,2} = & 2^{l-1}\sum_{j=1}^{l-1}(2^l-2\cdot 2^{j-1})z_{j,0}\\
z_{l,3} = &2^{l-1}\sum_{j=1}^{l-1}\Big(
2^{2l} -3\cdot 2^{l+j - 1}\Big)z_{j,0}\\
z_{l,4} = &2^{l-1}\sum_{j=1}^{l-1}\Big(2^{3l}-4\cdot 2^{2 l+j- 1} +8\cdot 2^{3(j-1)}\Big)z_{j,0}\\
z_{l,5} = & 2^{l-1} \sum_{j=1}^{l-1}(2^{4l}-5\cdot 2^{3 l+j-1}+20\cdot 2^{l+3(j-1)})z_{j,0} \\
z_{l,6} = & 2^{l-1} \sum_{j=1}^{l-1} (2^{5l}-6\cdot 2^{4 l+j-1}+40\cdot 2^{2 l+3(j-1)}-96\cdot 2^{5(j-1)})z_{j,0}\\
z_{l,7} = & 2^{l-1} \sum_{j=1}^{l-1} (2^{6l}-7\cdot 2^{5 l+j-1}+70\cdot 2^{3 l+3(j-1)}-336\cdot 2^{l+5(j-1)})z_{j,0}\\
z_{l,8} = & 2^{l-1} \sum_{j=1}^{l-1} (2^{7l}-8\cdot 2^{6l+j-1}+112\cdot 2^{4l+3(j-1)}-896\cdot 2^{2l+5(j-1)}+2176\cdot 2^{7(j-1)})z_{j,0}\\
z_{l,9} = & 2^{l-1} \sum_{j=1}^{l-1}(2^{8l}-9\cdot 2^{7 l+j-1}+168\cdot 2^{5l+3(j-1)}-2016\cdot 2^{3 l+5(j-1)}+9792\cdot 2^{l+7(j-1)})z_{j,0} \\
z_{l,10} = & 2^{l-1} \sum_{j=1}^{l-1} (2^{9l}-10\cdot 2^{8 l+j-1}+240\cdot 2^{6l+3(j-1)}-4032\cdot 2^{4l+5(j-1)} +32640\cdot 2^{2l+7(j-1)}\\[-2ex] & \qquad\qquad\qquad\qquad-79360\cdot 2^{9(j-1)})z_{j,0}\\
z_{l,11} = & 2^{l-1} \sum_{j=1}^{l-1} (2^{10 l}-11\cdot 2^{9 l+j-1}+330\cdot 2^{7l+3(j-1)}-7392\cdot 2^{5l+5(j-1)} +89760\cdot 2^{3l+7(j-1)}\\[-2ex] & \qquad\qquad\qquad\qquad-436480\cdot 2^{l+9(j-1)})z_{j,0}
,\\
z_{l,12} = & 2^{l-1} \sum_{j=1}^{l-1} (2^{11 l}-12\cdot 2^{10 l+j-1}+440\cdot 2^{8l+3(j-1)}-12672\cdot 2^{6l+5(j-1)}+215424\cdot 2^{4l+7(j-1)}\\[-2ex] & \qquad\qquad\qquad\qquad -1745920\cdot 2^{2l+9(j-1)} +4245504\cdot 2^{11(j-1)})z_{j,0}.
\end{align*}

These results hint that
$$
z_{l,p}=2^{l-1}\sum_{j=1}^{l-1}\Big(2^{(p-1)l}+\sum_{i=1}^{p-1} a^{(p)}_{i}\binom{p}{i} 2^{(p-1-i)l+i(j-1)}\Big)z_{j,0}.
$$
with $a_{2i}^{(p)}=0$ if $i>0$, 
$$
a_1^{(p)}=-1,\ 
a_3^{(p)}=2,\
a_5^{(p)}=-16,\ a_7^{(p)}=272,\
a_9^{(p)}=-7936,\ldots
$$
These values are consistent with
$$
a_i^{(p)}=-\frac{1}{i+1}\cdot 2^{i+1}(2^{i+1}-1)B_{i+1}
,\quad \text{for $1\leq i\leq p-1$}
$$
So, this leads us to conjecture that, if $p>0$,
\begin{align}
z_{l,p} & =2^{l-1}\sum_{j=1}^{l-1}\Big(2^{(p-1)l}-\sum_{i=1}^{p-1} \frac{1}{i+1}\cdot 2^{i+1}(2^{i+1}-1)B_{i+1}\binom{p}{i} 2^{(p-1-i)l+i(j-1)}\Big)z_{j,0}\nonumber\\
& =2^{l-1}\sum_{j=1}^{l-1}\Big(2^{(p-1)l}-\sum_{i=1}^{p-1} \frac{1}{p+1}\cdot 2^{i+1}(2^{i+1}-1)B_{i+1}\binom{p+1}{i+1} 2^{(p-1-i)l+i(j-1)}\Big)z_{j,0}\nonumber\\
& =2^{l-1}\sum_{j=1}^{l-1}\Big(2^{(p-1)l}-\sum_{i=2}^{p} \frac{1}{p+1}\cdot 2^{i}(2^{i}-1)B_{i}\binom{p+1}{i} 2^{(p-i)l+(i-1)(j-1)}\Big)z_{j,0}\nonumber\\
& =2^{l-1}\sum_{j=1}^{l-1}\Big(2^{(p-1)l}-\sum_{i=1}^{p} \frac{1}{p+1}\cdot 2^{i}(2^{i}-1)B_{i}\binom{p+1}{i} 2^{(p-i)l+(i-1)(j-1)}\nonumber\\
&\qquad\qquad+ \frac{1}{p+1}\cdot 2B_{1}(p+1) 2^{(p-1)l}\Big)z_{j,0}\nonumber\\
& =2^{l-1}\sum_{j=1}^{l-1}\Big(2^{(p-1)l}-\sum_{i=1}^{p} \frac{1}{p+1}\cdot 2^{i}(2^{i}-1)B_{i}\binom{p+1}{i} 2^{(p-i)l+(i-1)(j-1)}- 2^{(p-1)l}\Big)z_{j,0}\nonumber\\
& =-\frac{2^{l}}{p+1}\sum_{j=1}^{l-1}\Big(\sum_{i=1}^{p}(2^{i}-1)\binom{p+1}{i} 2^{(p-i)l+(i-1)j}B_{i}\Big)z_{j,0}
\label{eqn:conjecturaz}
\end{align}
and we proceed to prove this equality by induction on $l$.

The case when $l=1$ is true because  the right hand sum of (\ref{eqn:conjecturaz}) is empty and, by  (\ref{eqn:recz^p}),
$z_{1,p}=0$.
Assume now that the equality (\ref{eqn:conjecturaz}) is true 
for $z_{1,p},\ldots,z_{l-1,p}$ and every $p> 0$, and let us prove it for $z_{l,p}$. By (\ref{eqn:recz^p}) and the induction hypothesis,
\begin{align*}
& 2^{1-l}z_{l,p}  = \sum_{q=0}^{p-1}\binom{p}{q}\sum_{k=1}^{l-1} 2^{(p-q-1)k}z_{k,q}\\
&\quad =\sum_{k=1}^{l-1} 2^{(p-1)k}z_{k,0}\\
&\qquad -
\sum_{q=1}^{p-1}\binom{p}{q}\sum_{k=1}^{l-1}2^{(p-q-1)k}\cdot 2^{k-1}\sum_{j=1}^{k-1}\Big(\sum_{i=1}^{q} \frac{2^{i}(2^{i}-1)B_{i}}{q+1}\binom{q+1}{i} 2^{(q-i)k+(i-1)(j-1)}\Big)z_{j,0}\\
& \quad\text{(by the induction hypothesis)}\\
&\quad =\sum_{j=1}^{l-1} 2^{(p-1)j}z_{j,0}\\
&\qquad -\sum_{j=1}^{l-2}\Bigg[
\sum_{q=1}^{p-1}\binom{p}{q}\sum_{k=j+1}^{l-1}2^{(p-q)k-1}\sum_{i=1}^{q} \frac{2^{i}(2^{i}-1)B_{i}}{q+1}\binom{q+1}{i} 2^{(q-i)k+(i-1)(j-1)}\Bigg]z_{j,0}\\
&\quad =\sum_{j=1}^{l-1}\Bigg[2^{(p-1)j}-
\sum_{q=1}^{p-1}\binom{p}{q}\sum_{k=j+1}^{l-1}2^{(p-q)k-1}\sum_{i=1}^{q} \frac{2^{i}(2^{i}-1)B_{i}}{q+1}\binom{q+1}{i} 2^{(q-i)k+(i-1)(j-1)}\Bigg]z_{j,0}
\end{align*}
and then the coefficient of $z_{j,0}$, for $j\leq l-1$, in $2^{-l+1}z_{l,p}$,
which we want to prove to be
$$
-\frac{2}{p+1}\sum_{i=1}^{p}(2^{i}-1)B_{i}\binom{p+1}{i} 2^{(p-i)l+(i-1)j}
$$
is
\begin{align*}
& 2^{(p-1)j}-
\sum_{q=1}^{p-1}\binom{p}{q}\sum_{k=j+1}^{l-1}2^{(p-q)k-1}\sum_{i=1}^{q} \frac{2^{i}(2^{i}-1)B_{i}}{q+1}\binom{q+1}{i} 2^{(q-i)k+(i-1)(j-1)}\nonumber\\
&\quad =2^{(p-1)j}-
\sum_{i=1}^{p-1}\sum_{q=i}^{p-1}\sum_{k=j+1}^{l-1} \frac{(2^{i}-1)B_{i}}{q+1}\binom{p}{q}\binom{q+1}{i} 2^{(p-i)k+(i-1)j}\nonumber\\
&\quad =2^{(p-1)j}-
\sum_{i=1}^{p-1}\sum_{q=i}^{p-1}\sum_{k=j+1}^{l-1}\frac{(2^{i}-1)B_{i}}{p+1}\binom{p+1}{q+1}\binom{q+1}{i} 2^{(p-i)k+(i-1)j}\nonumber\\
&\quad =2^{(p-1)j}-
\sum_{i=1}^{p-1}\sum_{q=i}^{p-1}\sum_{k=j+1}^{l-1}\frac{(2^{i}-1)B_{i}}{p+1}\binom{p+1}{i}\binom{p+1-i}{q+1-i} 2^{(p-i)k+(i-1)j}\nonumber\\
&\quad =2^{(p-1)j}-
\sum_{i=1}^{p-1}\frac{2^{(i-1)j}(2^{i}-1)B_{i}}{p+1}\binom{p+1}{i}\Big(\sum_{q=i}^{p-1}\binom{p+1-i}{q+1-i}\Big)\Big(\sum_{k=j+1}^{l-1}  2^{(p-i)k}\Big)\nonumber\\
&\quad =2^{(p-1)j}-
\sum_{i=1}^{p-1}\frac{2^{(i-1)j}(2^{i}-1)B_{i}}{p+1}\binom{p+1}{i}2(2^{p-i}-1\Big)\frac{2^{(p-i)l}-2^{(p-i)(j+1)}}{2^{p-i}-1}\nonumber\\
&\quad =2^{(p-1)j}-
\sum_{i=1}^{p-1}\frac{(2^{i}-1)B_{i}}{p+1}\binom{p+1}{i}2^{(i-1)j+1}(2^{(p-i)l}-2^{(p-i)(j+1)})\nonumber\\
&\quad =2^{(p-1)j}-
\sum_{i=1}^{p-1}\frac{2(2^{i}-1)B_{i}}{p+1}\binom{p+1}{i}2^{(p-i)l+(i-1)j}+
\sum_{i=1}^{p-1}\frac{2(2^{i}-1)B_{i}}{p+1}\binom{p+1}{i}2^{p-i+(p-1)j}\\
&\quad =2^{(p-1)j}-
\sum_{i=0}^{p}\frac{2(2^{i}-1)B_{i}}{p+1}\binom{p+1}{i}2^{(p-i)l+(i-1)j}+
\sum_{i=0}^{p}\frac{2(2^{i}-1)B_{i}}{p+1}\binom{p+1}{i}2^{p-i+(p-1)j}\nonumber\\
&\quad =2^{(p-1)j}-
\sum_{i=0}^{p}\frac{2(2^{i}-1)B_{i}}{p+1}\binom{p+1}{i}2^{(p-i)l+(i-1)j}+
\sum_{i=0}^{p}\frac{2(2^{i}-1)B_{i}}{p+1}\binom{p+1}{i}2^{p-i+(p-1)j}\nonumber\\
&\quad =2^{(p-1)j}-
\sum_{i=0}^{p}\frac{2(2^{i}-1)B_{i}}{p+1}\binom{p+1}{i}2^{(p-i)l+(i-1)j}
+
\frac{2^{p+1+(p-1)j}}{p+1}\sum_{i=0}^{p}\binom{p+1}{i}B_{i}\\
&\qquad\qquad
- \frac{2^{(p-1)j}}{p+1}\sum_{i=0}^{p}\binom{p+1}{i}2^{p+1-i}B_i
\\
&\quad =2^{(p-1)j}-
\sum_{i=0}^{p}\frac{2(2^{i}-1)B_{i}}{p+1}\binom{p+1}{i}2^{(p-i)l+(i-1)j} - \frac{2^{(p-1)j}}{p+1}\big(B_{p+1}(2)-B_{p+1})\\
&\quad =-
\sum_{i=0}^{p}\frac{2(2^{i}-1)B_{i}}{p+1}\binom{p+1}{i}2^{(p-i)l+(i-1)j}\quad \text{(by Eqn.~(\ref{eqn:B_n(2)}))}
\end{align*}
 as we wanted to prove. 
\end{proof}

For every $l,m,p,d\in\NN$, with $p<m$, and for every given $a\in \RR\setminus\{0\}$, let
$$
\gamma^{(d,p,m)}_l=\sum_{k=1}^{2^l-1}  \sum_{i=1}^{s_{k}} q_i(k)^d(2^{-m}a)^{q_i(k)}M_{i+1}(k)^p
$$
In the last part of our computations we shall use an alternative expression for  $\gamma^{(d,p,m)}_l$ which we derive now as an application of the last lemma.

We have that $\gamma^{(d,p,m)}_0=0$ by definition, and, by Eqn.~(\ref{eqn:bdosl}),
\begin{align}
\gamma^{(d,0,m)}_l &=\sum_{k=1}^{2^l-1}  \sum_{i=1}^{s_{k}}q_i(k)^d(a2^{-m})^{q_i(k)}= 2^{l-1}T(d,l,a2^{-m})
\label{eqn:varphid^0}
\end{align}

The case when $l,p>0$ is covered by the next lemma.

\begin{lemma}\label{lem:varphid}
If $l,p>0$, then
 $$
\gamma^{(d,p,m)}_l =\frac{a^{l-1}2^{-(m-1)(l-1)+pl}}{p+1}\sum_{t=1}^{l-1} (l-t-1)^d(a^{-1}2^{m-p-1})^{t}(B_{p+1}(2^{t})-B_{p+1})
$$
\end{lemma}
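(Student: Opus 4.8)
The plan is to recognize that, for fixed $d$ and $m$, the sequence $(\gamma^{(d,p,m)}_l)_{l,p\geq 0}$ satisfies exactly the recurrence hypotheses of Lemma \ref{lem:z}, to apply that lemma with the base values $\gamma^{(d,0,m)}_j=2^{j-1}T(d,j,a2^{-m})$ supplied by Eqn.~(\ref{eqn:varphid^0}), and then to collapse the resulting double sum into the single sum in the statement.

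First I would verify the recurrence. Condition (a) of Lemma \ref{lem:z} is immediate, since $\gamma^{(d,p,m)}_0$ is an empty sum. For condition (b), I would split the range $1\leq k\leq 2^l-1$ into $1\leq k\leq 2^{l-1}-1$, whose contribution is $\gamma^{(d,p,m)}_{l-1}$, and $2^{l-1}\leq k\leq 2^l-1$. For the upper half I write $k=2^{l-1}+k'$ with $0\leq k'\leq 2^{l-1}-1$; by Eqn.~(\ref{eqn:q_j(M+p)}) applied with $M=2^{l-1}$, the binary digits of $k$ are those of $k'$ together with one extra bit in position $l-1$, so that $q_i(k)=q_i(k')$ and $M_{i+1}(k)=M_{i+1}(k')+2^{l-1}$ for $i\leq s_{k'}$, while the new leading bit contributes a term carrying the factor $M_{s_{k'}+2}(k)^p=0^p=0$ (this is where $p>0$ is essential, and it also disposes of the $k'=0$ summand). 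Expanding $(M_{i+1}(k')+2^{l-1})^p$ by the binomial theorem and summing over $k'$ then produces $\sum_{q=0}^p\binom{p}{q}2^{(p-q)(l-1)}\gamma^{(d,q,m)}_{l-1}$; isolating the $q=p$ term and adding the lower-half contribution yields precisely $\gamma^{(d,p,m)}_l=2\gamma^{(d,p,m)}_{l-1}+\sum_{q=0}^{p-1}\binom{p}{q}2^{(p-q)(l-1)}\gamma^{(d,q,m)}_{l-1}$, which is hypothesis (b).

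Next I would invoke Lemma \ref{lem:z} and substitute $\gamma^{(d,0,m)}_j=2^{j-1}\sum_{k=0}^{j-1}k^d(a2^{-m})^k$. After interchanging the order of summation so that $k$ is summed last and $j$ runs from $k+1$ to $l-1$, the $j$-dependence combines into $2^{ij}$ and the inner sum becomes the geometric sum $\sum_{j=k+1}^{l-1}2^{ij}=(2^{il}-2^{i(k+1)})/(2^i-1)$, cancelling the factor $2^i-1$. Writing $t=l-1-k$, the coefficient of $(l-t-1)^d(a2^{-m})^{k}$ reduces to $2^{pl}\sum_{i=1}^p\binom{p+1}{i}B_i(1-2^{-it})$. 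Evaluating $\sum_{i=1}^p\binom{p+1}{i}B_i=-1$ by the Bernoulli recurrence (\ref{eqn:recurrenceB}), and using the expansion of $B_{p+1}(2^t)$ (whose $i=p+1$ term cancels $B_{p+1}$) to identify $\sum_{i=0}^p\binom{p+1}{i}B_i2^{-it}=2^{-t(p+1)}(B_{p+1}(2^t)-B_{p+1})$, this coefficient collapses to $-2^{pl-t(p+1)}(B_{p+1}(2^t)-B_{p+1})$. Collecting the powers of $a$ and $2$ and reindexing by $t$ then gives the claimed formula.

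The main obstacle is the first step: correctly tracking how $q_i$ and $M_{i+1}$ transform under $k\mapsto 2^{l-1}+k'$, and in particular seeing that the new leading bit contributes nothing precisely because $p>0$, so that the binomial expansion reproduces exactly the convolution in hypothesis (b) of Lemma \ref{lem:z}. Once the recurrence is in hand, the remaining work is a routine (if lengthy) rearrangement of sums combined with the two standard Bernoulli identities above.
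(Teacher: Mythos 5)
Your proposal is correct and follows essentially the same route as the paper: the paper likewise splits the range $2^{l-1}\leq k\leq 2^l-1$ via $k=2^{l-1}+k'$ (using that the leading-bit term vanishes because $p>0$), expands binomially to obtain the recurrence $\gamma^{(d,p,m)}_l=2\gamma^{(d,p,m)}_{l-1}+\sum_{q=0}^{p-1}\binom{p}{q}2^{(p-q)(l-1)}\gamma^{(d,q,m)}_{l-1}$, applies Lemma \ref{lem:z} with the base values from Eqn.~(\ref{eqn:varphid^0}), and then simplifies via the geometric sum, the Bernoulli recurrence (\ref{eqn:recurrenceB}), and the expansion of $B_{p+1}(2^t)$. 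Your handling of the coefficient $2^{pl}\sum_{i=1}^p\binom{p+1}{i}B_i(1-2^{-it})=-2^{pl-t(p+1)}(B_{p+1}(2^t)-B_{p+1})$ is a minor cosmetic rearrangement of the same algebra, so nothing is missing.
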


\begin{proof}
If $l,p>0$,
\begin{align*}
\gamma^{(d,p,m)}_l & =\gamma^{(d,p,m)}_{l-1}+ \sum_{k=2^{l-1}}^{2^l-1}\sum_{i=1}^{s_{k}}q_i(k)^d(a2^{-m})^{q_i(k)}M_{i+1}(k)^p\\
& =\gamma^{(d,p,m)}_{l-1}+ \sum_{k=2^{l-1}}^{2^l-1}\sum_{i=1}^{s_{k}-1} q_i(k)^d(a2^{-m})^{q_i(k)}\Big(M_{i+1}(k-2^{l-1})+2^{l-1}\Big)^p\\
& =\gamma^{(d,p,m)}_{l-1}+ \sum_{k=0}^{2^{l-1}-1}\sum_{i=1}^{s_{k}}q_i(k)^d(a2^{-m})^{q_i(k)}\Big(M_{i+1}(k)+2^{l-1}\Big)^p\\
& =\gamma^{(d,p,m)}_{l-1}+ \sum_{k=1}^{2^{l-1}-1}\sum_{i=1}^{s_{k}}q_i(k)^d(a2^{-m})^{q_i(k)}\sum_{q=0}^p\binom{p}{q}M_{i+1}(k)^q 2^{(p-q)(l-1)}\\
& =\gamma^{(d,p,m)}_{l-1}+ \sum_{q=0}^p\binom{p}{q}2^{(p-q)(l-1)}\sum_{k=1}^{2^{l-1}-1}\sum_{i=1}^{s_{k}}q_i(k)^d(a2^{-m})^{q_i(k)}M_{i+1}(k)^q \\
& =\gamma^{(d,p,m)}_{l-1}+ \sum_{q=0}^p\binom{p}{q}2^{(p-q)(l-1)}\gamma^{(d,q,m)}_{l-1} =2\gamma^{(d,p,m)}_{l-1}+ \sum_{q=0}^{p-1}\binom{p}{q}2^{(p-q)(l-1)}\gamma^{(d,q,m)}_{l-1}
\end{align*}
Then, by Lemma \ref{lem:z} and Eqn.~(\ref{eqn:varphid^0}),
\begin{align*}
&\gamma_l^{(d,p,m)}  =-\frac{2^{l}}{p+1}\sum_{j=1}^{l-1}\Big(\sum_{i=1}^{p} \binom{p+1}{i}(2^{i}-1) 2^{(p-i)l+(i-1)j}B_{i}\Big)\gamma_j^{(d,0,m)}\\
&\quad=-\frac{2^{l}}{p+1}\sum_{j=1}^{l-1}\Bigg[\sum_{i=1}^{p} \binom{p+1}{i}(2^{i}-1) 2^{(p-i)l+(i-1)j}B_{i} 2^{j-1}\sum_{t=0}^{j-1}t^d(a2^{-m})^t\Bigg]\\
&\quad=-\frac{2^{pl+l-1}}{p+1}\sum_{i=1}^{p} \Bigg[\binom{p+1}{i} (2^{i}-1)2^{-il}\Big(\sum_{t=0}^{l-2}t^d(a2^{-m})^t\sum_{j=t+1}^{l-1}2^{ij}\Big)\Bigg]B_i\\
&\quad=-\frac{2^{pl+l-1}}{p+1}\sum_{i=1}^{p} \Bigg[\binom{p+1}{i} (2^{i}-1)2^{-il}\Big(\sum_{t=0}^{l-2}t^d(a2^{-m})^t\cdot\frac{2^{il}-2^{i(t+1)}}{2^i-1}\Big)\Bigg]B_i\\
&\quad=-\frac{2^{pl+l-1}}{p+1}\sum_{i=1}^{p} \binom{p+1}{i} B_i\cdot  \Big(\sum_{t=0}^{l-2}t^d(a2^{-m})^t\Big)\\
&\qquad\qquad +\frac{2^{pl+l-1}}{p+1}\sum_{i=1}^{p} \binom{p+1}{i}2^{-i(l-1)}\Big(\sum_{t=0}^{l-2}t^d(a2^{i-m})^t\Big)B_i\\
&\quad=\frac{2^{pl+l-1}}{p+1}B_0\cdot  \Big(\sum_{t=0}^{l-2}t^d(a2^{-m})^t\Big) +\frac{2^{pl+l-1}}{p+1}\sum_{i=1}^{p} \binom{p+1}{i}2^{-i(l-1)}\Big(\sum_{t=0}^{l-2}t^d(a2^{i-m})^t\Big)B_i\\
&\quad=\frac{2^{pl+l-1}}{p+1}\sum_{i=0}^{p} \binom{p+1}{i}2^{-i(l-1)}\Big(\sum_{t=0}^{l-2}t^d(a2^{i-m})^t\Big)B_i\\
&\quad=\frac{2^{pl+l-1}}{p+1}\sum_{i=0}^{p} \binom{p+1}{i}2^{-i(l-1)}\Big(\sum_{t=1}^{l-1}(l-1-t)^d(a2^{i-m})^{l-1-t}\Big)B_i\\
&\quad=\frac{a^{l-1}2^{-(m-1)(l-1)+pl}}{p+1}\sum_{t=1}^{l-1} (l-t-1)^d(a^{-1}2^{m-p-1})^{t}\Big(\sum_{i=0}^{p} \binom{p+1}{i}2^{(p+1-i)t}B_i\Big)\\
&\quad=\frac{a^{l-1}2^{-(m-1)(l-1)+pl}}{p+1}\sum_{t=1}^{l-1} (l-t-1)^d(a^{-1}2^{m-p-1})^{t}(B_{p+1}(2^{t})-B_{p+1})
\end{align*}
as we claimed.
\end{proof}

\begin{cor}\label{cor:varphi^0alt}
For every $l,m,p,d\in\NN$, with $p<m$,
\begin{align*}
\gamma^{(d,p,m)}_l &=\frac{a^{l-1}2^{-(m-1)(l-1)+pl}}{p+1}\sum_{t=1}^{l-1} (l-t-1)^d(a^{-1}2^{m-p-1})^{t}(B_{p+1}(2^{t})-B_{p+1})\nonumber\\
&\qquad\qquad+
(a2^{-(m-1)})^{l-1}(l-1)^d\cdot \delta_{p=0,l>0}
\end{align*}
where $\delta_{p=0,l>0}=1$ if $p=0$ and $l>0$, and $\delta_{p=0,l>0}=0$ otherwise.
\end{cor}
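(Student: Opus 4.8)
The plan is to prove the stated identity by splitting into the two regimes $p\geq 1$ and $p=0$, since the displayed expression is designed to unify the conclusion of Lemma \ref{lem:varphid} (valid only for $l,p>0$) with the separate evaluation \eqref{eqn:varphid^0} of $\gamma^{(d,0,m)}_l$, with the Kronecker correction term bridging the gap.

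First I would dispose of the case $p\geq 1$. Here the correction term $(a2^{-(m-1)})^{l-1}(l-1)^d\cdot\delta_{p=0,l>0}$ vanishes, so the claimed formula is literally the conclusion of Lemma \ref{lem:varphid} whenever $l>0$; and when $l=0$ both sides are zero, since the $t$-sum is empty and $\gamma^{(d,p,m)}_0=0$ by definition. Thus this regime needs no further computation.

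The substance is the case $p=0$, where Lemma \ref{lem:varphid} does not apply but Eqn.~\eqref{eqn:varphid^0} gives $\gamma^{(d,0,m)}_l=2^{l-1}T(d,l,a2^{-m})$, and my goal is to show the right-hand side of the corollary specializes to exactly this. The enabling observation is the evaluation $B_{p+1}(2^{t})-B_{p+1}=B_1(2^{t})-B_1=(2^{t}-\tfrac12)-(-\tfrac12)=2^{t}$, using $B_1(x)=x-1/2$. Feeding this into the first term (where now $p+1=1$, so the prefactor denominator is $1$) turns it into $a^{l-1}2^{-(m-1)(l-1)}\sum_{t=1}^{l-1}(l-t-1)^d a^{-t}2^{mt}$. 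Reindexing by $k=l-1-t$, so that $k$ runs from $0$ to $l-2$, and collecting the powers of $a$ and $2$ collapses this to $2^{l-1}\sum_{k=0}^{l-2}k^d(a2^{-m})^k$, which is precisely $2^{l-1}T(d,l,a2^{-m})$ with its top ($k=l-1$) summand deleted.

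The final and only genuinely delicate step is verifying that the correction term restores exactly that missing summand. One checks that $(a2^{-(m-1)})^{l-1}(l-1)^d = 2^{l-1}(l-1)^d(a2^{-m})^{l-1}$, i.e.\ the dropped $k=l-1$ term, so adding it reconstitutes the full sum $2^{l-1}T(d,l,a2^{-m})=\gamma^{(d,0,m)}_l$. A short boundary check then closes the argument: at $l=0$ both sides are $0$, and at $l=1$ the $t$-sum and $k$-sum are empty while the correction term equals $0^d$, matching $\gamma^{(d,0,m)}_1=0^d$. I expect the main obstacle to be purely this bookkeeping, namely tracking the exponent shift $-(m-1)(l-1)\mapsto (l-1)-m(l-1)$ so that the correction term lands squarely on the $k=l-1$ term of $T$ rather than off by a factor of $2^{\pm(l-1)}$.
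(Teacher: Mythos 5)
Your proof is correct and takes essentially the same route as the paper: the identical case split ($l=0$ trivial, $p\geq 1$ read off from Lemma~\ref{lem:varphid}, and $p=0$, $l>0$ handled via Eqn.~(\ref{eqn:varphid^0})), with the same two key ingredients, namely the evaluation $B_{1}(2^{t})-B_{1}=2^{t}$ and the reindexing $t\leftrightarrow l-1-t$ that identifies the correction term $(a2^{-(m-1)})^{l-1}(l-1)^d$ as the top summand of $2^{l-1}T(d,l,a2^{-m})$. The only (cosmetic) difference is direction: you expand the corollary's right-hand side down to $2^{l-1}T(d,l,a2^{-m})$, whereas the paper rewrites $2^{l-1}T(d,l,a2^{-m})$ into that form.
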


\begin{proof}
When $l=0$, both sides of the equality in the statement are equal to 0, 
and the last lemma covers the case when $p>0$ and $l>0$. Finally, when $p=0$ and $l>0$, by (\ref{eqn:varphid^0}),
\begin{align*}
&\gamma^{(d,0,m)}_l  =2^{l-1}\sum_{t=0}^{l-1} t^d(a2^{-m})^t=2^{l-1}\sum_{t=0}^{l-1} (l-1-t)^d(a2^{-m})^{l-1-t}\\
&\quad = a^{l-1}2^{-(m-1)(l-1)}\Big(\sum_{t=1}^{l-1} (l-1-t)^d(a^{-1}2^{m})^{t}+(l-1)^d\Big)\\
&\quad = a^{l-1}2^{-(m-1)(l-1)}\Big(\sum_{t=1}^{l-1} (l-1-t)^d(a^{-1}2^{m-1})^{t}(B_{1}(2^{t})-B_{1})+(l-1)^d\Big)\\
&\quad  =\frac{a^{l-1}2^{-(m-1)(l-1)+0\cdot l}}{0+1}\sum_{t=1}^{l-1} (l-t-1)^d(a^{-1}2^{m-0-1})^{t}(B_{0+1}(2^{t})-B_{0+1})+
(a2^{-(m-1)})^{l-1}(l-1)^d
\end{align*}
\end{proof}

\subsection{Statement of the problem}

\noindent Let $P(x,y)=\sum_{r,t\geq 0} b_{r,t}x^ry^t\in \RR[x,y]$ be a bivariate polynomial and $a\in \RR\setminus\{0\}$, and consider the recurrence equation
\begin{equation}
x_n =a\cdot x_{\left\lceil{n}/{2}\right\rceil}+a\cdot x_{\left\lfloor{n}/{2}\right\rfloor}+P\big(\lceil{n}/{2}\rceil,\lfloor{n}/{2}\rfloor\big),\quad n\geq 2.
\label{eqn:gral}
\end{equation}

\begin{lemma}\label{lem:x1neq0}
If $(x^{0}_n)_n$ is the solution of (\ref{eqn:gral}) with initial condition $x_1^{0}=0$, then
$$
\overline{x}_n=x^{0}_n+\big((2a)^{q_{s_n}(n)}+(2a-1)(na^{q_{s_n}(n)}-(2a)^{q_{s_n}(n)})\big)x_1
$$
is its solution with initial condition $x_1$
\end{lemma}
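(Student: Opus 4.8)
The plan is to exploit the linearity of (\ref{eqn:gral}). Write $\overline{x}_n=x^0_n+f(n)x_1$, where
$$
f(n)=(2a)^{q_{s_n}(n)}+(2a-1)\big(na^{q_{s_n}(n)}-(2a)^{q_{s_n}(n)}\big).
$$
Since $\lceil n/2\rceil,\lfloor n/2\rfloor<n$ for $n\geq 2$, a solution of (\ref{eqn:gral}) is uniquely determined by the value of $x_1$ (by strong induction on $n$). Hence it suffices to check that $\overline{x}_n$ satisfies both $\overline{x}_1=x_1$ and the recurrence (\ref{eqn:gral}). The initial condition is immediate: as $q_{s_1}(1)=0$ we get $f(1)=1$, so $\overline{x}_1=x^0_1+x_1=x_1$. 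For the recurrence, substituting $\overline{x}_n=x^0_n+f(n)x_1$ and using $x^0_n=a\,x^0_{\lceil n/2\rceil}+a\,x^0_{\lfloor n/2\rfloor}+P(\lceil n/2\rceil,\lfloor n/2\rfloor)$, the independent term $P$ and the entire $x^0$ contribution cancel, and the statement reduces to the combinatorial identity
$$
f(n)=a\,f(\lceil n/2\rceil)+a\,f(\lfloor n/2\rfloor),\qquad n\geq 2.
$$

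First I would rewrite $f$ in the more convenient form $f(n)=a^{m}\big(2^{m+1}(1-a)+(2a-1)n\big)$, where $m=q_{s_n}(n)=\lfloor\log_2 n\rfloor$; this follows by expanding $(2a)^m=a^m2^m$ and collecting terms, and it also makes transparent the sanity check $f(2^m)=(2a)^m$. The reduced identity is then verified by a short case analysis on the parity of $n$ and on the leading-bit position. The key elementary observation, valid for $m=\lfloor\log_2 n\rfloor\geq 1$, is that $q_{s_{\lfloor n/2\rfloor}}(\lfloor n/2\rfloor)=m-1$ always, whereas $q_{s_{\lceil n/2\rceil}}(\lceil n/2\rceil)=m-1$ as well, \emph{except} when $n=2^{m+1}-1$, in which case $\lceil n/2\rceil=2^m$ and the leading-bit exponent jumps to $m$.

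In the even case $n=2k$ and in the odd case $n=2k+1$ with $n\neq 2^{m+1}-1$, both halves carry leading-bit exponent $m-1$; plugging the rewritten form of $f$ into $a\big(f(\lceil n/2\rceil)+f(\lfloor n/2\rfloor)\big)$ and invoking $\lceil n/2\rceil+\lfloor n/2\rfloor=n$ collapses everything to $a^{m}\big(2^{m+1}(1-a)+(2a-1)n\big)=f(n)$ at once. The only genuinely separate computation is the boundary case $n=2^{m+1}-1$: here one evaluates $f(2^m-1)=a^{m-1}(a2^m-2a+1)$ and $f(2^m)=(2a)^m$ separately, adds them, multiplies by $a$, and checks that the outcome equals $f(2^{m+1}-1)=a^{m}(a2^{m+1}-2a+1)$.

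I expect the only mild obstacle to be the bookkeeping in this final boundary case, as it is the sole situation where the two subproblems have different leading-bit exponents and the uniform cancellation via $\lceil n/2\rceil+\lfloor n/2\rfloor=n$ is unavailable; everywhere else the verification is a one-line consequence of that linear relation once $f$ has been rewritten.
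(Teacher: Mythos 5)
Your proposal is correct and follows essentially the same route as the paper's own proof: reduce the lemma to the identity $f(n)=a\,f(\lceil n/2\rceil)+a\,f(\lfloor n/2\rfloor)$ and verify it by distinguishing the generic case, where both halves have leading-bit exponent $q_{s_n}(n)-1$ and the relation $\lceil n/2\rceil+\lfloor n/2\rfloor=n$ collapses everything, from the exceptional case $n=2^{q_{s_n}(n)+1}-1$, which is checked separately. Your rewriting $f(n)=a^{m}\bigl(2^{m+1}(1-a)+(2a-1)n\bigr)$ and the explicit uniqueness-by-induction remark are minor presentational improvements, not a different argument.
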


\begin{proof}
Since $\overline{x}_1=x_1$, we must prove that $(\overline{x}_n)_n$ satisfies (\ref{eqn:gral}). Since $(x^0_n)_n$ already satisfies it,
it is enough to check that, for every $n\geq 2$,
\begin{align*}
& (2a)^{q_{s_n}(n)}+(2a-1)(na^{q_{s_n}(n)}-(2a)^{q_{s_n}(n)})\\
&\qquad=
a\big((2a)^{q_{s_{\lceil n/2\rceil}}(\lceil n/2\rceil)}+(2a-1)(\lceil n/2\rceil a^{q_{s_{\lceil n/2\rceil }}(\lceil n/2\rceil)}-(2a)^{q_{s_{\lceil n/2\rceil }}(\lceil n/2\rceil)})\big)\\
&\qquad\qquad\qquad +
a\big((2a)^{q_{s_{\lfloor n/2\rfloor}}(\lfloor n/2\rfloor)}+(2a-1)(\lfloor n/2\rfloor a^{q_{s_{\lfloor n/2\rfloor }}(\lfloor n/2\rfloor)}-(2a)^{q_{s_{\lfloor n/2\rfloor }}(\lfloor n/2\rfloor)})\big)
\end{align*}
The simplest way to do it is by distinguishing two cases; to simplify the notations, we denote in the rest of this proof $q_{s_n}(n)$ by $q$.
\begin{enumerate}[(i)]
\item If $n\neq 2^{q+1}-1$, then $q_{s_{\lceil n/2\rceil }}(\lceil n/2\rceil)=q_{s_{\lfloor n/2\rfloor}}(\lfloor n/2\rfloor)=q-1$ and then 
\begin{align*}
& a\big((2a)^{q-1}+(2a-1)(\lceil n/2\rceil a^{q-1}-(2a)^{q-1})+(2a)^{q-1}+(2a-1)(\lfloor n/2\rfloor a^{q-1}-(2a)^{q-1})\big)\\
&\qquad= a\big(2(2a)^{q-1}+(2a-1)(na^{q-1}-2(2a)^{q-1})\big)=(2a)^{q}+(2a-1)(na^{q}-(2a)^{q})
\end{align*}
\item If $n= 2^{q+1}-1$,  then $\lceil n/2\rceil =2^q$ and $\lfloor n/2\rfloor=2^q-1$, and then
\begin{align*}
& a\big((2a)^{q}+(2a-1)(2^q a^{q}-(2a)^{q})+(2a)^{q-1}+(2a-1)((2^q-1)a^{q-1}-(2a)^{q-1})\big)\\
&\qquad= 2^{q+1}a^{q+1}-(2a-1)a^q= (2a)^{q}+(2a-1)((2^{q+1}-1)a^{q}-(2a)^{q})
\end{align*}
\end{enumerate}
\end{proof}

Therefore, in the rest of this paper we shall only be concerned with the solution of (\ref{eqn:gral}) with initial condition $x_1=0$.
Now, if, for every $(r,t)\in \NN^2$ such that $b_{r,t}\neq 0$, we know the solution $(x_n^{(r,t)})_n$ of the recurrence
\begin{equation}
x_n^{(r,t)} =a\cdot x_{\left\lceil{n}/{2}\right\rceil}^{(r,t)}+a\cdot x_{\left\lfloor{n}/{2}\right\rfloor}^{(r,t)}+\Big\lceil\frac{n}{2}\Big\rceil^r\cdot \Big\lfloor\frac{n}{2}\Big\rfloor^t,\quad n\geq 2,
\label{eqn:gralrt}
\end{equation}
with initial condition $x_1^{(r,t)}=0$, then the solution of (\ref{eqn:gral}) with  initial condition $x_1=0$ will be
$$
x_n=\sum_{r,t}b_{r,t} x_n^{(r,t)}.
$$
So, we are finally led to consider, for any $r,t\in \NN$,  the equation (\ref{eqn:gralrt}). Let $x_n^{(r,t)}$ be its solution with $x_1^{(r,t)}=0$.

\subsection{The sequence of differences of consecutive terms $y^{(r,t)}_n$}

\noindent Let  $y^{(r,t)}_n=x^{(r,t)}_n-x^{(r,t)}_{n-1}$, with $y^{(r,t)}_1=0$, and
\begin{equation}
x^{(r,t)}_n=\sum_{k=1}^n y^{(r,t)}_k.
\label{eqn:xsumadey}
\end{equation}
Notice that
$$
y^{(r,t)}_2=x^{(r,t)}_2-x^{(r,t)}_1=a\cdot x^{(r,t)}_1+a\cdot x^{(r,t)}_1+1-x^{(r,t)}_1=1.
$$
Let us derive  a recurrence for $y^{(r,t)}_n$, for $n\geq 3$. Taking into account the parity of $n$, we have:
\begin{itemize}
    \item For every $m\geq 2$,
\begin{align*}
& y^{(r,t)}_{2m}  = x^{(r,t)}_{2m}-x^{(r,t)}_{2m-1} \\   &\quad = 2a\cdot x^{(r,t)}_{m}+m^{r+t}-a\cdot x^{(r,t)}_m-a\cdot x^{(r,t)}_{m-1}-m^r(m-1)^t = a\cdot y^{(r,t)}_m+m^{r+t}-m^r(m-1)^t\\
&\quad =a\cdot y^{(r,t)}_{\lceil (2m)/2\rceil} +\lceil (2m)/2\rceil^r\lfloor (2m)/2\rfloor^t-\lfloor (2m)/2\rfloor^r(\lceil (2m)/2\rceil-1)^t
\end{align*}

\item For every $m\geq 1$,
\begin{align*}
& y^{(r,t)}_{2m+1}  =x^{(r,t)}_{2m+1} -x^{(r,t)}_{2m}\\ &\quad = a\cdot x^{(r,t)}_{m+1}+a\cdot x^{(r,t)}_{m}+(m+1)^rm^t-2a\cdot x^{(r,t)}_m-m^{r+t} = a\cdot y^{(r,t)}_{m+1}+(m+1)^rm^t-m^{r+t}\\ &\quad =a\cdot y^{(r,t)}_{\lceil (2m+1)/2\rceil} +\lceil (2m+1)/2\rceil^r\lfloor (2m+1)/2\rfloor^t-\lfloor (2m+1)/2\rfloor^r(\lceil (2m+1)/2\rceil-1)^t
\end{align*}
\end{itemize}
So, in summary, $y^{(r,t)}_n$ satisfies the recurrence
$$
y^{(r,t)}_n=a\cdot y^{(r,t)}_{\varphi_1(n)} +\varphi_1(n)^r\varphi_0(n)^t-\varphi_0(n)^r(\varphi_1(n)-1)^t,\quad n\geq 3.
$$
Therefore, for every $n\geq 3$ and  $m\geq 1$,
$$
y^{(r,t)}_n=a^m\cdot y^{(r,t)}_{\varphi_{\scriptsize\underbrace{1\ldots 1}_m}(n)}+\sum_{k=0}^{m-1} a^k\Big(\varphi_{\scriptsize1\underbrace{1\ldots 1}_k}(n)^r\varphi_{\scriptsize0\underbrace{1\ldots 1}_k}(n)^t-\varphi_{\scriptsize0\underbrace{1\ldots 1}_k}(n)^r(\varphi_{\scriptsize1\underbrace{1\ldots 1}_k}(n)-1)^t\Big)
$$
where, by Eqn.~(\ref{eqn:thompson}), for every $k\geq 0$
\begin{align*}
& \varphi_{\scriptsize 1\underbrace{1\ldots 1}_k}(n)=\left\lfloor\frac{n+2^{k+1}-1}{2^{k+1}}\right\rfloor =\left\lfloor\frac{n-1}{2^{k+1}}\right\rfloor+1\\
& \varphi_{\scriptsize 0\underbrace{1\ldots 1}_k}(n)=\left\lfloor\frac{n+2^{k}-1}{2^{k+1}}\right\rfloor=\left\lfloor\frac{n-1}{2^{k+1}}+\frac{1}{2}\right\rfloor
\end{align*}
This implies that, for every $n\geq 3$, 
$$
y^{(r,t)}_n=a^{L_n}y^{(r,t)}_2+\sum_{k=1}^{L_n} a^{k-1}\Bigg(\Big(1+\left\lfloor\frac{n-1}{2^{k}}\right\rfloor\Big)^r\left\lfloor\frac{n-1}{2^{k}}+\frac{1}{2}\right\rfloor^t-\left\lfloor\frac{n-1}{2^{k}}+\frac{1}{2}\right\rfloor^r\left\lfloor\frac{n-1}{2^{k}}\right\rfloor^t\Bigg)
$$
where $L_n$ is such that 
$2=\varphi_{\scriptsize\underbrace{1\ldots 1}_{L_n}}(n)=1+\lfloor(n-1)/2^{L_n}\rfloor$
that is,
$L_n=q_{s_{n-1}}(n-1)$. Thus, in summary:
\begin{equation}
y^{(r,t)}_n  =a^{q_{s_{n-1}}(n-1)}+\hspace*{-2ex}\sum_{k=1}^{q_{s_{n-1}}(n-1)} a^{k-1}\Bigg[\Big(1+\left\lfloor\frac{n-1}{2^{k}}\right\rfloor\Big)^r\left\lfloor\frac{n-1}{2^{k}}+\frac{1}{2}\right\rfloor^t-\left\lfloor\frac{n-1}{2^{k}}+\frac{1}{2}\right\rfloor^r\left\lfloor\frac{n-1}{2^{k}}\right\rfloor^t\Bigg]
\label{eqn:sumy2}
\end{equation}

Next result gives an expression for $y^{(r,t)}_n$ that will be more suitable to our purposes. To simplify the notations, in its statement and henceforth, for every $d,m,n\in \NN$, we set
$$
S^{(d,m)}_n=\sum_{j=1}^{s_{n}-1}q_j(n)^d(2^{-m}a)^{q_j(n)}M_{j+1}(n)^m.
$$
Moreover, we shall let $\ell=\log_2(a)-t$ when $a>0$,  and we shall use the following Kronecker delta:
$$
\delta_{\ell}=\left\{\begin{array}{ll}
1 & \text{ if $a>0$, $r>0$, and $\ell\in \{0,\ldots,r-1\}$}\\
0 & \text{ otherwise}
\end{array}\right.
$$

\begin{proposition}\label{lem:qib1}
For every $n\geq 2$, 
\begin{align*}
& y^{(r,t)}_n   =a^{q_{s_{n-1}}(n-1)}+\sum_{l=0\atop l\neq\ell}^{r-1}\frac{\binom{r}{l}}{2^{t+l}-a}\big((n-1)^{t+l}-a^{q_{s_{n-1}}(n-1)}\big)\\
&\qquad\qquad  
 +\sum_{i=0}^{r+t-1}\Big(2^{-i}\binom{r+t}{i}-2^{-i+1}\binom{r}{i-t}-\sum_{l=i-t+1\atop l\neq \ell}^{r-1}\frac{\binom{r}{l}\binom{t+l}{i}}{2^{t+l}-a}\Big)S_{n-1}^{(0,i)}\\
&\qquad\qquad  
+\delta_{\ell}\cdot\frac{1}{a}\binom{r}{\ell}\sum_{j=0}^{s_{n-1}-1}M_{j+1}(n-1)^{t+\ell}(q_{j+1}(n-1)-q_{j}(n-1))\\
 \end{align*}
\end{proposition}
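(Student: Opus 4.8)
The plan is to start from the closed form (\ref{eqn:sumy2}) for $y^{(r,t)}_n$ and transform it, writing throughout $N=n-1$ and $q=q_{s_N}(N)$ (the empty sums make (\ref{eqn:sumy2}) valid already at $n=2$, where it returns $a^0=1=y^{(r,t)}_2$). Put $u_k=\lfloor N/2^k\rfloor$ and $v_k=\lfloor N/2^k+1/2\rfloor$. By Hermite's identity $u_k+v_k=\lfloor N/2^{k-1}\rfloor=2u_k+\varepsilon_k$, so $v_k=u_k+\varepsilon_k$, where $\varepsilon_k\in\{0,1\}$ is the bit of $N$ in position $k-1$; in particular $\varepsilon_k=1$ exactly when $k-1=q_j(N)$ for some $j$. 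The first step is to expand the bracket
$$
B_k=(1+u_k)^r v_k^t-v_k^r u_k^t
$$
with the binomial theorem in powers of $u_k$ and $\varepsilon_k$ (using $\varepsilon_k^p=\varepsilon_k$ for $p\geq 1$). The leading term $u_k^{r+t}$ cancels, leaving
$$
B_k=\sum_{l=0}^{r-1}\binom{r}{l}u_k^{t+l}+\varepsilon_k\, C_k,\qquad C_k=\sum_{i=0}^{r+t-1}\Big(\binom{r+t}{i}-2\binom{r}{i-t}\Big)u_k^i,
$$
where the coefficient of $u_k^i$ in $C_k$ is obtained by Vandermonde's identity after discarding the excluded $p=0$ contribution.

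The second step converts $\sum_{k=1}^q a^{k-1}u_k^p$ into sums over the binary decomposition of $N$. The key observation is that $u_k=M_{j+1}(N)/2^k$ whenever $q_j(N)<k\leq q_{j+1}(N)$, because then $N\bmod 2^k=N-M_{j+1}(N)$. This splits the sum into blocks, each a geometric series in $\rho=a2^{-p}$. When $\rho\neq 1$ (that is, $a\neq 2^p$), a short telescoping with boundary terms at $j=0$ and $j=s_N$ gives
$$
\sum_{k=1}^q a^{k-1}u_k^{p}=\frac{1}{2^{p}-a}\Big(N^{p}-a^{q}-\sum_{i=0}^{p-1}\binom{p}{i}S^{(0,i)}_{N}\Big),
$$
using $M_1(N)=N$, $M_{s_N}(N)=2^{q}$ and $M_j(N)^p-M_{j+1}(N)^p=\sum_{i=0}^{p-1}\binom{p}{i}M_{j+1}(N)^i2^{(p-i)q_j(N)}$ to produce the $S^{(0,i)}_N$ terms. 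When $\rho=1$, i.e. $a=2^{t+\ell}$, each block instead contributes $a^{-1}(q_{j+1}(N)-q_j(N))$, yielding $\frac{1}{a}\sum_{j=0}^{s_N-1}M_{j+1}(N)^{t+\ell}(q_{j+1}(N)-q_j(N))$.

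The third step handles the $\varepsilon_k$ part. Since $\varepsilon_k=1$ precisely when $k-1=q_j(N)$ with $1\leq j\leq s_N-1$ (for $k\leq q$), and then $u_{q_j+1}=M_{j+1}(N)/2^{q_j(N)+1}$, one gets $\sum_k a^{k-1}\varepsilon_k C_k=\sum_i 2^{-i}\big(\binom{r+t}{i}-2\binom{r}{i-t}\big)S^{(0,i)}_N$, which supplies the first two pieces $2^{-i}\binom{r+t}{i}-2^{-i+1}\binom{r}{i-t}$ of the stated $S^{(0,i)}_N$ coefficient. Finally I would collect everything: the standalone $a^{q}$; the terms $\frac{\binom{r}{l}}{2^{t+l}-a}(N^{t+l}-a^q)$ for $l\neq\ell$; the resonant contribution at $l=\ell$, present exactly when $\delta_\ell=1$; and the full $S^{(0,i)}_N$ coefficients, whose $-\sum_{l}\frac{\binom{r}{l}\binom{t+l}{i}}{2^{t+l}-a}$ part comes from the non-resonant blocks (with constraint $i\leq t+l-1$, i.e. $l\geq i-t+1$) and whose remaining part comes from the $\varepsilon_k$ expansion.

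The main obstacle is the bookkeeping that matches the coefficient of each $S^{(0,i)}_N$: one must align the index ranges from three distinct sources (the Vandermonde coefficient in $C_k$, the discarded $p=0$ term, and the geometric-block corrections) and verify that the resonant index $l=\ell$ is removed consistently from every sum. Care is likewise needed with the degenerate conventions $\binom{r}{i-t}=0$ outside $0\leq i-t\leq r$ and the empty sums occurring when $s_N=1$, which is exactly what makes the stated formula hold down to $n=2$.
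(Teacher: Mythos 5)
Your proposal is correct and follows essentially the same route as the paper's own proof: both start from Eqn.~(\ref{eqn:sumy2}), reduce the floor functions via the binary decomposition of $n-1$, expand the bracket binomially so that the top-degree term cancels, evaluate the resulting block-wise geometric series with the resonant case $a=2^{t+\ell}$ isolated by $\delta_\ell$, and invoke the same summation-by-parts identity (the paper's Eqn.~(\ref{eqn:simpl1})) to convert $\sum_j M_{j+1}^{t+l}\big((2^{-(t+l)}a)^{q_j}-(2^{-(t+l)}a)^{q_{j+1}}\big)$ into $(n-1)^{t+l}-a^{q_{s_{n-1}}(n-1)}$ minus a combination of the $S^{(0,i)}_{n-1}$. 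The only difference is organizational: you unify the paper's three-case analysis of $k$ by means of the bit indicator $\varepsilon_k$ obtained from Hermite's identity, whereas the paper splits the range of $k$ into blocks $q_j(n-1)<k\leq q_{j+1}(n-1)$ and handles the boundary positions $k=q_j(n-1)+1$ as a separate case, which leads to identical sums.
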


\begin{proof}
Let $n\in \NN_{geq 2}$.  To simplify the notations, in this proof we denote $s_{n-1}$ by $s$ and each $q_i(n-1)$ and $M_i(n-1)$, $i=1,\ldots,s$, by $q_i$ and $M_i$, respectively.  Recall that $\ell=\log_2(a)-t$ if $a>0$. This number will play a role in sums of geometric sequences of the form $\sum_{k=1}^{m} (2^{-(t+l)}a)^k$ with $l\in \NN$, which yield two different expressions depending on whether $l=\ell$, that is, $2^{-(t+l)}a=1$, or not.

It is easy to check that when $n=2$ the right-hand side in the formula for $y^{(r,t)}_n$ given in the statement is $1$, using that $s_1=1$ and $q_{s_1}(1)=0$. Assume now that $n\geq 3$. In this case, 
by (\ref{eqn:sumy2}), 
$$
y^{(r,t)}_n=\widehat{y}_n+a^{q_s}
$$
where
$$
\widehat{y}_n =\sum_{k=1}^{q_s} a^{k-1}\Bigg(\Big(1+\left\lfloor\frac{n-1}{2^{k}}\right\rfloor\Big)^r\left\lfloor\frac{n-1}{2^{k}}+\frac{1}{2}\right\rfloor^t-\left\lfloor\frac{n-1}{2^{k}}+\frac{1}{2}\right\rfloor^r\left\lfloor\frac{n-1}{2^{k}}\right\rfloor^t\Bigg)
$$

Let us compute this sum. To begin with, notice that:
\begin{itemize}
\item If $k\leq q_1$,
\begin{align*}
& \Big(1+\left\lfloor\frac{n-1}{2^{k}}\right\rfloor\Big)^r\left\lfloor\frac{n-1}{2^{k}}+\frac{1}{2}\right\rfloor^t-\left\lfloor\frac{n-1}{2^{k}}+\frac{1}{2}\right\rfloor^r\left\lfloor\frac{n-1}{2^{k}}\right\rfloor^t\\
&\quad =
\Big(1+\frac{n-1}{2^{k}}\Big)^r\Big(\frac{n-1}{2^{k}}\Big)^t-\Big(\frac{n-1}{2^{k}}\Big)^r\Big(\frac{n-1}{2^{k}}\Big)^t\\
&\quad =(1+2^{-k}M_1)^r(2^{-k}M_1)^t-(2^{-k}M_1)^{r+t}=\sum_{l=0}^{r-1}\binom{r}{l}(2^{-k}M_1)^{t+l}
\end{align*}

\item If $q_j+1<k\leq q_{j+1}$, for some $j=1,\ldots,s-1$,
\begin{align*}
& \Big(1+\left\lfloor\frac{n-1}{2^{k}}\right\rfloor\Big)^r\left\lfloor\frac{n-1}{2^{k}}+\frac{1}{2}\right\rfloor^t-\left\lfloor\frac{n-1}{2^{k}}+\frac{1}{2}\right\rfloor^r\left\lfloor\frac{n-1}{2^{k}}\right\rfloor^t\\
&\quad =\Big(1+ \left\lfloor\sum_{i=1}^s 2^{q_i-k}\right\rfloor\Big)^r\left\lfloor\sum_{i=1}^s 2^{q_i-k}+\frac{1}{2}\right\rfloor^t -\left\lfloor\sum_{i=1}^s 2^{q_i-k}+\frac{1}{2}\right\rfloor^r\left\lfloor\sum_{i=1}^s 2^{q_i-k}\right\rfloor^t\\
&\quad =\Big(1+ \sum_{i=j+1}^s 2^{q_i-k}\Big)^r\Big(\sum_{i=j+1}^s 2^{q_i-k}\Big)^t -\Big(\sum_{i=j+1}^s 2^{q_i-k}\Big)^r\Big(\sum_{i=j+1}^s 2^{q_i-k}\Big)^t\\
&\quad =(1+ 2^{-k}M_{j+1})^r(2^{-k}M_{j+1})^t-(2^{-k}M_{j+1})^{r+t}=\sum_{l=0}^{r-1}\binom{r}{l}(2^{-k}M_{j+1})^{t+l}
\end{align*}

\item If $k=q_j+1$, for some $j=1,\ldots,s-1$,
\begin{align*}
& \Big(1+\left\lfloor\frac{n-1}{2^{k}}\right\rfloor\Big)^r\left\lfloor\frac{n-1}{2^{k}}+\frac{1}{2}\right\rfloor^t-\left\lfloor\frac{n-1}{2^{k}}+\frac{1}{2}\right\rfloor^r\left\lfloor\frac{n-1}{2^{k}}\right\rfloor^t\\
&\quad =\Big(1+ \left\lfloor\sum_{i=1}^s 2^{q_i-q_j-1}\right\rfloor\Big)^r\left\lfloor\sum_{i=1}^s 2^{q_i-q_j-1}+\frac{1}{2}\right\rfloor^t-\left\lfloor\sum_{i=1}^s 2^{q_i-q_j-1}+\frac{1}{2}\right\rfloor^r\left\lfloor\sum_{i=1}^s 2^{q_i-q_j-1}\right\rfloor^t\\
&\quad =\Big(1+ \sum_{i=j+1}^s 2^{q_i-q_j-1}\Big)^r\Big(\sum_{i=j+1}^s 2^{q_i-q_j-1}+1\Big)^t -\Big(\sum_{i=j+1}^s 2^{q_i-q_j-1}+1\Big)^r\Big(\sum_{i=j+1}^s 2^{q_i-q_j-1}\Big)^t\\
&\quad =(1+ 2^{-q_j-1}M_{j+1})^{r+t}-(1+2^{-q_j-1}M_{j+1})^r(2^{-q_j-1}M_{j+1})^t\\
    &\quad = \sum_{i=0}^{r+t-1}\binom{r+t}{i}(2^{-q_j-1}M_{j+1})^i - \sum_{l=0}^{r-1}\binom{r}{l}(2^{-q_j-1}M_{j+1})^{t+l}\\
\end{align*}
\end{itemize}

Then,
\begin{align*}
& \sum_{k=1}^{q_1} a^{k-1}\Bigg(\Big(1+\left\lfloor\frac{n-1}{2^{k}}\right\rfloor\Big)^r\left\lfloor\frac{n-1}{2^{k}}+\frac{1}{2}\right\rfloor^t-\left\lfloor\frac{n-1}{2^{k}}+\frac{1}{2}\right\rfloor^r\left\lfloor\frac{n-1}{2^{k}}\right\rfloor^t\Bigg)\\
&\quad =\sum_{k=1}^{q_1}a^{k-1}\sum_{l=0}^{r-1}\binom{r}{l}(2^{-k}M_1)^{t+l}=\frac{1}{a}\sum_{l=0}^{r-1}\binom{r}{l}M_1^{t+l}\sum_{k=1}^{q_1}(2^{-(t+l)}a)^k=(*)
\end{align*}

Now, if $\ell\notin \{0,\ldots,r-1\}$, then $2^{-(t+l)}a\neq 1$ for every $l=0,\ldots,r-1$ and hence
$$
(*)=\sum_{l=0}^{r-1}\binom{r}{l}M_1^{t+l}\cdot \frac{1-(2^{-(t+l)}a)^{q_1}}{2^{t+l}-a}
$$
while if $\ell\in \{0,\ldots,r-1\}$, then
\begin{align*}
(*) &=\frac{1}{a}\sum_{l=0\atop l\neq \ell}^{r-1}\binom{r}{l}M_1^{t+l}\sum_{k=1}^{q_1}(2^{-(t+l)}a)^k+
\frac{1}{a}\binom{r}{\ell}M_1^{t+\ell}\sum_{k=1}^{q_1}1^k
\\ & =
\sum_{l=0\atop l\neq \ell}^{r-1}\binom{r}{l}M_1^{t+l}\cdot \frac{1-(2^{-(t+l)}a)^{q_1}}{2^{t+l}-a}+\frac{1}{a}\binom{r}{\ell}M_1^{t+\ell}q_1
\end{align*}

We shall summarize these two cases in a single expression by writing
\begin{equation}
(*)=\sum_{l=0\atop l\neq \ell}^{r-1}\binom{r}{l}M_1^{t+l}\cdot \frac{1-(2^{-(t+l)}a)^{q_1}}{2^{t+l}-a}+ \delta_{\ell}\cdot \frac{1}{a}\binom{r}{\ell}M_1^{t+\ell}q_1.
\label{eqn:estrelleta1}
\end{equation}

Notice that if $r=0$, then $(*)=0$ and, since in this case $\delta_{\ell}=0$, the right-hand side expression of (\ref{eqn:estrelleta1}) is indeed 0. Without making use of $\delta_\ell$, (\ref{eqn:estrelleta1}) would not be true when $r=0$. In the next sum we shall apply a similar argument to split each sum $\sum\limits_{l=0}^{r-1}$ into a sum $\sum\limits_{l=0\atop l\neq \ell}^{r-1}$ and a term that must only be considered when $r>0$ and  $\ell\in \{0,\ldots,r-1\}$, and hence it appears multiplied by $ \delta_{\ell}$.

For every $j=1,\ldots,s-1$:
\begin{align*}
& \sum_{k=q_j+1}^{q_{j+1}}a^{k-1}\Bigg(\Big(1+\left\lfloor\frac{n-1}{2^{k}}\right\rfloor\Big)^r\left\lfloor\frac{n-1}{2^{k}}+\frac{1}{2}\right\rfloor^t-\left\lfloor\frac{n-1}{2^{k}}+\frac{1}{2}\right\rfloor^r\left\lfloor\frac{n-1}{2^{k}}\right\rfloor^t\Bigg)\\
&\quad=\sum_{k=q_j+2}^{q_{j+1}}a^{k-1}\sum_{l=0}^{r-1}\binom{r}{l}(2^{-k}M_{j+1})^{t+l}+a^{q_j}\sum_{i=0}^{r+t-1}\binom{r+t}{i}(2^{-q_j-1}M_{j+1})^i  -a^{q_j} \sum_{l=0}^{r-1}\binom{r}{l}(2^{-q_j-1}M_{j+1})^{t+l}\\
&\quad=\frac{1}{a}\sum_{l=0}^{r-1}\binom{r}{l}M_{j+1}^{t+l}\sum_{k=q_j+1}^{q_{j+1}}(2^{-(t+l)}a)^k+\frac{1}{a}\sum_{i=0}^{r+t-1}\binom{r+t}{i}M_{j+1}^i(2^{-i}a)^{q_j+1}  -\frac{2}{a} \sum_{l=0}^{r-1}\binom{r}{l}M_{j+1}^{t+l}(2^{-(t+l)}a)^{q_j+1}\\
&\quad=\frac{1}{a}\sum_{l=0\atop l\neq \ell}^{r-1}\binom{r}{l}M_{j+1}^{t+l}\sum_{k=q_j+1}^{q_{j+1}}(2^{-(t+l)}a)^k +\delta_{\ell}\frac{1}{a}\binom{r}{\ell}M_{j+1}^{t+\ell}\sum_{k=q_j+1}^{q_{j+1}}1^k +\frac{1}{a}\sum_{i=0}^{r+t-1}\binom{r+t}{i}M_{j+1}^i(2^{-i}a)^{q_j+1}\\
&\qquad    -\frac{2}{a} \sum_{l=0}^{r-1}\binom{r}{l}M_{j+1}^{t+l}(2^{-(t+l)}a)^{q_j+1}\\
&\quad=\sum_{l=0\atop l\neq \ell}^{r-1}\binom{r}{l}M_{j+1}^{t+l}\cdot  \frac{(2^{-(t+l)}a)^{q_{j}}-(2^{-(t+l)}a)^{q_{j+1}}}{2^{t+l}-a}+\delta_{\ell}\frac{1}{a}\binom{r}{\ell}M_{j+1}^{t+\ell}(q_{j+1}-q_{j})\\
&\quad\qquad +\frac{1}{a}\sum_{i=0}^{r+t-1}\binom{r+t}{i}M_{j+1}^i(2^{-i}a)^{q_j+1}-\frac{2}{a} \sum_{l=0}^{r-1}\binom{r}{l}M_{j+1}^{t+l}(2^{-(t+l)}a)^{q_j+1}
\end{align*}

Then (and using $q_0=0$)
\begin{align}
\widehat{y}_n & = \sum_{k=1}^{q_s} a^{k-1}\Bigg(\Big(1+\left\lfloor\frac{n-1}{2^{k}}\right\rfloor\Big)^r\left\lfloor\frac{n-1}{2^{k}}+\frac{1}{2}\right\rfloor^t-\left\lfloor\frac{n-1}{2^{k}}+\frac{1}{2}\right\rfloor^r\left\lfloor\frac{n-1}{2^{k}}\right\rfloor^t\Bigg)\nonumber\\ 
&  =
\sum_{j=0}^{s-1}\sum_{k=q_j+1}^{q_{j+1}} a^{k-1}\Bigg(\Big(1+\left\lfloor\frac{n-1}{2^{k}}\right\rfloor\Big)^r\left\lfloor\frac{n-1}{2^{k}}+\frac{1}{2}\right\rfloor^t-\left\lfloor\frac{n-1}{2^{k}}+\frac{1}{2}\right\rfloor^r\left\lfloor\frac{n-1}{2^{k}}\right\rfloor^t\Bigg)\nonumber\\
&=\delta_{\ell}\cdot\frac{1}{a}\binom{r}{\ell}M_1^{t+\ell}q_1+\sum_{l=0\atop l\neq \ell}^{r-1}\binom{r}{l}M_1^{t+l}\cdot \frac{1-(2^{-(t+l)}a)^{q_1}}{2^{t+l}-a}\nonumber\\
&\quad+\sum_{j=1}^{s-1}\Bigg[\delta_{\ell}\cdot\frac{1}{a}\binom{r}{\ell}M_{j+1}^{t+\ell}(q_{j+1}-q_{j})+\frac{1}{a}\sum_{i=0}^{r+t-1}\binom{r+t}{i}M_{j+1}^i(2^{-i}a)^{q_j+1}\nonumber \\
&\qquad +\sum_{l=0\atop l\neq \ell}^{r-1}\binom{r}{l}M_{j+1}^{t+l}\cdot \frac{(2^{-(t+l)}a)^{q_{j}}-(2^{-(t+l)}a)^{q_{j+1}}}{2^{t+l}-a}-\frac{2}{a}\sum_{l=0}^{r-1}\binom{r}{l}M_{j+1}^{t+l}(2^{-(t+l)}a)^{q_j+1}\Bigg]\nonumber\\
&=\delta_{\ell}\cdot\frac{1}{a}\binom{r}{\ell}\sum_{j=0}^{s-1}M_{j+1}^{t+\ell}(q_{j+1}-q_{j})+\sum_{l=0\atop l\neq \ell}^{r-1}\frac{\binom{r}{l}}{2^{t+l}-a}\sum_{j=0}^{s-1}M_{j+1}^{t+l}\cdot \big((2^{-(t+l)}a)^{q_{j}}-(2^{-(t+l)}a)^{q_{j+1}}\big)\nonumber\\
&\quad+\sum_{i=0}^{r+t-1}\binom{r+t}{i}\sum_{j=1}^{s-1}2^{-i}M_{j+1}^i(2^{-i}a)^{q_j}-2\sum_{i=t}^{r+t-1}\binom{r}{i-t}\sum_{j=1}^{s-1}2^{-i}M_{j+1}^{i}(2^{-i}a)^{q_j}
\label{eqn:ynCesc1}
\end{align}

Now, 
\begin{align}
 & \sum_{j=0}^{s-1} ((2^{-(t+l)}a)^{q_{j}}-(2^{-(t+l)}a)^{q_{j+1}})M_{j+1}^{t+l}\nonumber\\
 & \qquad =M_1^{t+l}+\sum_{j=1}^{s-1}(2^{-(t+l)}a)^{q_{j}}(M_{j+1}^{t+l}-M_{j}^{t+l})-(2^{-(t+l)}a)^{q_{s}}M_s^{t+l}\nonumber\\
 & \qquad =(n-1)^{t+l}+\sum_{j=1}^{s-1}(2^{-(t+l)}a)^{q_{j}}(M_{j+1}^{t+l}-(M_{j+1}+2^{q_j})^{t+l})-(2^{-(t+l)}a)^{q_{s}}(2^{q_s})^{t+l}\nonumber\\
 & \qquad =(n-1)^{t+l}-a^{q_{s}}-\sum_{j=1}^{s-1}(2^{-(t+l)}a)^{q_{j}}\sum_{i=0}^{t+l-1}\binom{t+l}{i}M_{j+1}^{i}2^{(t+l-i)q_j}\nonumber\\
 & \qquad =(n-1)^{t+l}-a^{q_{s}}-\sum_{i=0}^{t+l-1}\binom{t+l}{i}\sum_{j=1}^{s-1}M_{j+1}^{i}(2^{-i}a)^{q_j}
 \label{eqn:simpl1}
\end{align}

Using this in Eqn.~(\ref{eqn:ynCesc1}), we obtain
\begin{align*}
\widehat{y}_n & =\delta_{\ell}\cdot\frac{1}{a}\binom{r}{\ell}\sum_{j=0}^{s-1}M_{j+1}^{t+\ell}(q_{j+1}-q_{j})\\
&\quad+\sum_{l=0\atop l\neq \ell}^{r-1}\frac{\binom{r}{l}}{2^{t+l}-a}\Big((n-1)^{t+l}-a^{q_{s}}-\sum_{i=0}^{t+l-1}\binom{t+l}{i}\sum_{j=1}^{s-1}M_{j+1}^{i}(2^{-i}a)^{q_j}\Big)\\
&\quad+\sum_{i=0}^{r+t-1}\binom{r+t}{i}\sum_{j=1}^{s-1}2^{-i}M_{j+1}^i(2^{-i}a)^{q_j}-2\sum_{i=t}^{r+t-1}\binom{r}{i-t}\sum_{j=1}^{s-1}2^{-i}M_{j+1}^{i}(2^{-i}a)^{q_j}\\
& =\delta_{\ell}\cdot\frac{1}{a}\binom{r}{\ell}\sum_{j=0}^{s-1}M_{j+1}^{t+\ell}(q_{j+1}-q_{j})+\sum_{l=0\atop l\neq \ell}^{r-1}\frac{\binom{r}{l}}{2^{t+l}-a}\big((n-1)^{t+l}-a^{q_{s}}\big)\\
&\quad -\sum_{l=0\atop l\neq \ell}^{r-1}\sum_{i=0}^{t+l-1}\frac{\binom{r}{l}\binom{t+l}{i}}{2^{t+l}-a}S_{n-1}^{(0,i)}+\sum_{i=0}^{r+t-1}\binom{r+t}{i}2^{-i}S_{n-1}^{(0,i)}-2\sum_{i=t}^{r+t-1}\binom{r}{i-t}2^{-i}S_{n-1}^{(0,i)}\\
& =\delta_{\ell}\cdot\frac{1}{a}\binom{r}{\ell}\sum_{j=0}^{s-1}M_{j+1}^{t+\ell}(q_{j+1}-q_{j})+\sum_{l=0\atop l\neq \ell}^{r-1}\frac{\binom{r}{l}}{2^{t+l}-a}\big((n-1)^{t+l}-a^{q_{s}}\big)\\
&\quad-\sum_{i=0}^{r+t-1}\sum_{l=i-t+1\atop l\neq \ell}^{r-1}\frac{\binom{r}{l}\binom{t+l}{i}}{2^{t+l}-a}S_{n-1}^{(0,i)}+\sum_{i=0}^{r+t-1}\binom{r+t}{i}2^{-i}S_{n-1}^{(0,i)}-2\sum_{i=0}^{r+t-1}\binom{r}{i-t}2^{-i}S_{n-1}^{(0,i)}\\
& =\delta_{\ell}\cdot\frac{1}{a}\binom{r}{\ell}\sum_{j=0}^{s-1}M_{j+1}^{t+\ell}(q_{j+1}-q_{j})+\sum_{l=0\atop l\neq \ell}^{r-1}\frac{\binom{r}{l}}{2^{t+l}-a}\big((n-1)^{t+l}-a^{q_{s}}\big)\\
&\quad+\sum_{i=0}^{r+t-1}\Big(2^{-i}\binom{r+t}{i}-2^{-i+1}\binom{r}{i-t}-\sum_{l=i-t+1\atop l\neq \ell}^{r-1}\frac{\binom{r}{l}\binom{t+l}{i}}{2^{t+l}-a}\Big)S_{n-1}^{(0,i)}
\end{align*}
as we claimed.
\end{proof}

\begin{rem}
When $r=t=0$, the formula in the last proposition simply says
$y_{n}^{(r,t)}=a^{q_{s_{n-1}}(n-1)}$.
\end{rem}

In the expression for $y_n^{(r,t)}$ given in the last proposition, the term
$$
\delta_{\ell}\cdot\frac{1}{a}\sum_{j=0}^{s_{n-1}-1}M_{j+1}(n-1)^{t+\ell}(q_{j+1}(n-1)-q_{j}(n-1))
$$
is different from 0 only when $r>0$ and $a\in\{2^t,\ldots,2^{r+t-1}\}$. When $t+\ell=0$, that is, when $a=1$, the sum in it simplifies to
\begin{equation}
\sum_{j=0}^{s_{n-1}-1}M_{j+1}(n-1)^{0}(q_{j+1}(n-1)-q_{j}(n-1))=q_{s_{n-1}}(n-1)
\label{eqn:whent=0}
\end{equation}
And notice that this sum contributes in this $a=1$ case  to $y_n^{(r,t)}$ only when $r>0$ and $a=1\in\{2^t,\ldots,2^{r+t-1}\}$, that is, when $r>0$ and $t=0$. When $t+\ell>0$, and recalling that $2^{t+\ell}=a$, this sum is
\begin{align}
& \sum_{j=0}^{s_{n-1}-1}M_{j+1}(n-1)^{t+\ell}(q_{j+1}(n-1)-q_{j}(n-1))\\ &\quad = \sum_{j=1}^{s_{n-1}-1}q_{j}(n-1)(M_{j}(n-1)^{t+\ell}-M_{j+1}(n-1)^{t+\ell})+2^{(t+\ell)q_{s_{n-1}}(n-1)}q_{s_{n-1}}(n-1)\nonumber\\
&\quad = \sum_{j=1}^{s_{n-1}-1}q_{j}(n-1)((M_{j+1}(n-1)+2^{q_j(n-1)})^{t+\ell}-M_{j+1}(n-1)^{t+\ell})+a^{q_{s_{n-1}}(n-1)}q_{s_{n-1}}(n-1)\nonumber\\
&\quad = a^{q_{s_{n-1}}(n-1)}q_{s_{n-1}}(n-1)+\sum_{j=1}^{s_{n-1}-1}q_{j}(n-1)\sum_{i=0}^{t+\ell-1}\binom{t+\ell}{i}M_{j+1}(n-1)^i2^{q_j(n-1)(t+\ell-i)}\nonumber\\
&\quad = a^{q_{s_{n-1}}(n-1)}q_{s_{n-1}}(n-1)+\sum_{i=0}^{t+\ell-1}\binom{t+\ell}{i}\sum_{j=1}^{s_{n-1}-1}q_{j}(n-1)M_{j+1}(n-1)^i(a2^{-i})^{q_j(n-1)}\nonumber\\
&\quad = a^{q_{s_{n-1}}(n-1)}q_{s_{n-1}}(n-1)+\sum_{i=0}^{t+\ell-1}\binom{t+\ell}{i}S^{(1,i)}_{n-1}
\label{eqn:whent>0}
\end{align}
where, for every $n,i\in \NN$,
$$
S^{(1,i)}_n=\sum_{j=1}^{s_{n}-1}q_j(n)(2^{-i}a)^{q_j(n)}M_{j+1}(n)^i.
$$
Notice that when $\delta_{\ell}=1$ and $a=1$, Eqn.~(\ref{eqn:whent>0}) becomes Eqn.~(\ref{eqn:whent=0}).

\subsection{Proof of the main result, up to computing the $\alpha$'s}

\noindent For every $d,n,m\in \NN$,
$$
\alpha^{(d,m)}_n = \sum_{k=1}^{n-1}S^{(d,m)}_{k}=\sum_{k=1}^{n-1}\sum_{j=1}^{s_{k}-1}q_j(k)^d(2^{-m}a)^{q_j(k)}M_{j+1}(k)^m
$$
By Proposition \ref{lem:qib1} and Eqn.~(\ref{eqn:xsumadey}),
we have
\begin{align}
x^{(r,t)}_n & =\sum_{k=2}^n y^{(r,t)}_k= \sum_{k=1}^{n-1}\Bigg[a^{q_{s_k}(k)}+\delta_{\ell}\cdot\frac{1}{a}\binom{r}{\ell}\sum_{j=0}^{s_k-1}M_{j+1}(k)^{t+\ell}(q_{j+1}(k)-q_{j}(k))+\sum_{l=0\atop l\neq \ell}^{r-1}\frac{\binom{r}{l}}{2^{t+l}-a}\cdot k^{t+l}\nonumber\\
 &\quad -\sum_{l=0\atop l\neq \ell}^{r-1}\frac{\binom{r}{l}}{2^{t+l}-a}\cdot a^{q_{s_k}(k)} +\sum_{i=0}^{r+t-1}\Big(2^{-i}\binom{r+t}{i}-2^{-i+1}\binom{r}{i-t}-\sum_{l=i-t+1\atop l\neq \ell}^{r-1}\frac{\binom{r}{l}\binom{t+l}{i}}{2^{t+l}-a}\Big)S_{k}^{(0,i)}\Bigg]\nonumber\\
 &=\delta_{\ell}\cdot\frac{1}{a}\binom{r}{\ell}\sum_{k=1}^{n-1}\sum_{j=0}^{s_k-1}M_{j+1}(k)^{t+\ell}(q_{j+1}(k)-q_{j}(k)) +\Big(1-\sum_{l=0\atop l\neq \ell}^{r-1}\frac{\binom{r}{l}}{2^{t+l}-a}\Big)\Big(\sum_{k=1}^{n-1} a^{q_{s_k}(k)}\Big)\nonumber\\
 &\quad  +\sum_{l=0\atop l\neq \ell}^{r-1}\frac{\binom{r}{l}}{2^{t+l}-a}\sum_{k=1}^{n-1}k^{t+l}+\sum_{i=0}^{r+t-1}\Big(2^{-i}\binom{r+t}{i}-2^{-i+1}\binom{r}{i-t}-\sum_{l=i-t+1\atop l\neq \ell}^{r-1}\frac{\binom{r}{l}\binom{t+l}{i}}{2^{t+l}-a}\Big)\sum_{k=1}^{n-1}S_{k}^{(0,i)}\nonumber\\
 &=\delta_{\ell}\cdot\frac{1}{a}\binom{r}{\ell}\sum_{k=1}^{n-1}\sum_{j=0}^{s_k-1}M_{j+1}(k)^{t+\ell}(q_{j+1}(k)-q_{j}(k)) +\Big(1-\sum_{l=0\atop l\neq \ell}^{r-1}\frac{\binom{r}{l}}{2^{t+l}-a}\Big)\Big(\sum_{k=1}^{n-1} a^{q_{s_k}(k)}\Big)\nonumber\\
 &\quad  +\sum_{l=0\atop l\neq \ell}^{r-1}\frac{\binom{r}{l}}{(t+l+1)(2^{t+l}-a)}\Big(\sum_{j=0}^{t+l+1} \binom{t+l+1}{j}B_jn^{t+l+1-j}+(-1)^{t+l}B_{t+l+1}\Big)
\nonumber\\
 &\quad +\sum_{i=0}^{r+t-1}\Big(2^{-i}\binom{r+t}{i}-2^{-i+1}\binom{r}{i-t}-\sum_{l=i-t+1\atop l\neq \ell}^{r-1}\frac{\binom{r}{l}\binom{t+l}{i}}{2^{t+l}-a}\Big)\alpha^{(0,i)}_{n}\nonumber\\
& \text{(by Faulhaber’s Formula (\ref{eqn:sumpots}))}\nonumber\\
&=\delta_{\ell}\cdot\frac{1}{a}\binom{r}{\ell}\sum_{k=1}^{n-1}\sum_{j=0}^{s_k-1}M_{j+1}(k)^{t+\ell}(q_{j+1}(k)-q_{j}(k)) +\Big(1-\sum_{l=0\atop l\neq \ell}^{r-1}\frac{\binom{r}{l}}{2^{t+l}-a}\Big)\Big(\sum_{k=1}^{n-1} a^{q_{s_k}(k)}\Big)\nonumber\\
 &\quad  +\sum_{k=1}^{r+t}\Big(\sum_{i=k\atop i\neq t+\ell+1}^{r+t}\frac{\binom{r}{i-t-1}\binom{i}{k}B_{i-k}}{i(2^{i-1}-a)}\Big)n^k
+\sum_{i=t+1\atop i\neq t+\ell+1}^{r+t}\frac{\binom{r}{i-t-1}}{i(2^{i-1}-a)}(B_i-(-1)^{i}B_{i})\nonumber \\
&\quad +\sum_{i=0}^{r+t-1}\Big(2^{-i}\binom{r+t}{i}-2^{-i+1}\binom{r}{i-t}-\sum_{l=i-t+1\atop l\neq \ell}^{r-1}\frac{\binom{r}{l}\binom{t+l}{i}}{2^{t+l}-a}\Big)\alpha^{(0,i)}_{n}
 \label{eqn:sumycapk}
 \end{align}
because
\begin{align*}
& \sum_{l=0\atop l\neq \ell}^{r-1}\frac{\binom{r}{l}}{(t+l+1)(2^{t+l}-a)}\Big(\sum_{j=0}^{t+l+1} \binom{t+l+1}{j}B_jn^{t+l+1-j}+(-1)^{t+l}B_{t+l+1}\Big)\\
&\qquad =\sum_{l=0\atop l\neq \ell}^{r-1}\frac{\binom{r}{l}}{(t+l+1)(2^{t+l}-a)}\Big(\sum_{k=0}^{t+l+1} \binom{t+l+1}{k}B_{t+l+1-k}n^k+(-1)^{t+l}B_{t+l+1}\Big)\\
&\qquad =\sum_{i=t+1\atop i\neq t+\ell+1}^{r+t}\frac{\binom{r}{i-t-1}}{i(2^{i-1}-a)}\Big(\sum_{k=0}^{i} \binom{i}{k}B_{i-k}n^k-(-1)^{i}B_{i}\Big)\\
& \qquad =\sum_{k=0}^{r+t}\Big(\sum_{i=k\atop i\neq t+\ell+1}^{r+t}\frac{\binom{r}{i-t-1}\binom{i}{k}B_{i-k}}{i(2^{i-1}-a)}\Big)n^k
-\sum_{i=t+1\atop i\neq t+\ell+1}^{r+t}\frac{\binom{r}{i-t-1}}{i(2^{i-1}-a)}(-1)^{i}B_{i}\\
& \qquad =\sum_{i=0\atop i\neq t+\ell+1}^{r+t}\frac{\binom{r}{i-t-1}}{i(2^{i-1}-a)}B_{i}
+\sum_{k=1}^{r+t}\Big(\sum_{i=k\atop i\neq t+\ell+1}^{r+t}\frac{\binom{r}{i-t-1}\binom{i}{k}B_{i-k}}{i(2^{i-1}-a)}\Big)n^k-\sum_{i=t+1\atop i\neq t+\ell+1}^{r+t}\frac{\binom{r}{i-t-1}}{i(2^{i-1}-a)}(-1)^{i}B_{i}\\
& \qquad =\sum_{k=1}^{r+t}\Big(\sum_{i=k\atop i\neq t+\ell+1}^{r+t}\frac{\binom{r}{i-t-1}\binom{i}{k}B_{i-k}}{i(2^{i-1}-a)}\Big)n^k
+\sum_{i=t+1\atop i\neq t+\ell+1}^{r+t}\frac{\binom{r}{i-t-1}}{i(2^{i-1}-a)}(B_i-(-1)^{i}B_{i})
\end{align*}
Now, in Eqn.~(\ref{eqn:sumycapk}) we have that, by Lemma \ref{lem:sum_a^q},
\begin{align*}
\sum_{k=1}^{n-1} a^{q_{s_{k}}(k)} & =T(0,q_{s_n}(n),2a)+na^{q_{s_n}(n)}-(2a)^{q_{s_n}(n)}\\
& = \left\{\begin{array}{ll}
q_{s_n}(n)+n\cdot 2^{-q_{s_n}(n)}-1 & \text{ if $a=1/2$}\\[1ex]
\dfrac{(2a)^{q_{s_n}(n)}-1}{2a-1}+n\cdot a^{q_{s_n}(n)}-(2a)^{q_{s_n}(n)}& \text{ if $a\neq 1/2$}
\end{array}\right.
\end{align*}
and, by (\ref{eqn:Bimp}),
$$
\sum_{i=t+1\atop i\neq t+\ell+1}^{r+t}\frac{\binom{r}{i-t-1}}{i(2^{i-1}-a)}(B_i-(-1)^{i}B_{i})=\left\{\begin{array}{ll}
1/(a-1)& \text{ if $r>0$, $t=0$, $a\neq 1$}\\
0 & \text{otherwise}
\end{array}\right.
$$
Concerning the first term of Eqn.~(\ref{eqn:sumycapk}), which only contributes to $x^{(r,t)}_n$ when $r>0$ and $a=2^{t+\ell}$ with $\ell\in\{0,\ldots,r-1\}$, by (\ref{eqn:whent>0}) and Lemma \ref{lem:sum_a^q} this sum can be expressed as
\begin{align*}
&\sum_{k=1}^{n-1}\sum_{j=0}^{s_k-1}M_{j+1}(k)^{t+\ell}(q_{j+1}(k)-q_{j}(k))= \sum_{k=1}^{n-1}\Big(a^{q_{s_{k}}(k)}q_{s_{k}}(k)+\sum_{i=0}^{t+\ell-1}\binom{t+\ell}{i}S^{(1,i)}_{k}\Big)\\
&\quad = \sum_{k=1}^{n-1} a^{q_{s_{k}}(k)}q_{s_{k}}(k)+\sum_{i=0}^{t+\ell-1}\binom{t+\ell}{i}\sum_{k=1}^{n-1}S^{(1,i)}_{k}\\
& \quad =T(1,q_{s_n}(n),2a)+nq_{s_{n}}(n)a^{q_{s_n}(n)}-q_{s_{n}}(n)(2a)^{q_{s_n}(n)}+\sum_{i=0}^{t+\ell-1}\binom{t+\ell}{i}\alpha^{(1,i)}_{n}
\end{align*}

In summary, this proves the following theorem. The expressions given in Section \ref{sec:mr} are obtained by rewriting the formula in this theorem for the different relevant cases of the exponents $r,t$ and the coefficient $a$.

\begin{thm}
For every $n\geq 2$,
\begin{align*}
&x^{(r,t)}_n = \Bigg(1-\sum_{l=0\atop l\neq \ell}^{r-1}\frac{\binom{r}{l}}{2^{t+l}-a} \Bigg)\big(T(0,q_{s_n}(n),2a)+na^{q_{s_n}(n)}-(2a)^{q_{s_n}(n)}\big)\nonumber\\
 &\qquad  +\sum_{k=1}^{r+t} \Bigg(\sum_{i=k\atop i\neq t+\ell+1}^{r+t}\frac{\binom{r}{i-t-1}\binom{i}{k}B_{i-k}}{i(2^{i-1}-a)} \Bigg)n^k
+\frac{1}{a-1}\cdot\delta_{r>0,t=0,a\neq 1}\nonumber \\
&\qquad +\sum_{i=0}^{r+t-1} \Bigg(2^{-i}\binom{r+t}{i}-2^{-i+1}\binom{r}{i-t}-\sum_{l=i-t+1\atop l\neq \ell}^{r-1}\frac{\binom{r}{l}\binom{t+l}{i}}{2^{t+l}-a} \Bigg)\alpha^{(0,i)}_{n}\\
&\qquad +\delta_{\ell}\cdot\frac{1}{a}\binom{r}{\ell}\Bigg(T(1,q_{s_n}(n),2a)+nq_{s_{n}}(n)a^{q_{s_n}(n)}-q_{s_{n}}(n)(2a)^{q_{s_n}(n)}+\sum_{i=0}^{t+\ell-1}\binom{t+\ell}{i}\alpha^{(1,i)}_{n}\Bigg)
\end{align*}
where $\delta_{r>0,t=0,a\neq 1}=1$ if $r>0$, $t=0$, and $a\neq 1$, and it is 0 otherwise.
\end{thm}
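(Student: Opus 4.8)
The plan is to prove the formula by passing to the sequence of first differences and summing it term by term. Since $x_1^{(r,t)}=0$, equation (\ref{eqn:xsumadey}) gives $x_n^{(r,t)}=\sum_{k=2}^n y_k^{(r,t)}$, so it suffices to sum the closed form for $y_n^{(r,t)}$ supplied by Proposition \ref{lem:qib1}. Reindexing $\sum_{k=2}^n y_k^{(r,t)}=\sum_{k=1}^{n-1}y_{k+1}^{(r,t)}$ and observing that the Proposition expresses $y_{k+1}^{(r,t)}$ entirely through the binary data $q_j(k)$, $M_j(k)$ and the partial sums $S_k^{(0,i)}$ attached to $k$, I would rewrite $x_n^{(r,t)}$ as a single sum over $k=1,\dots,n-1$ of four kinds of terms: a geometric-type contribution $a^{q_{s_k}(k)}$, the monomials $k^{t+l}$, the quantities $S_k^{(0,i)}$, and, only in the resonance case, the extra $\delta_\ell$ term.

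First I would interchange the order of summation and treat each kind of term with a tool already in hand. The sum $\sum_{k=1}^{n-1}a^{q_{s_k}(k)}$ is exactly the $d=0$ instance of Lemma \ref{lem:sum_a^q} with $x=a$, producing $T(0,q_{s_n}(n),2a)+na^{q_{s_n}(n)}-(2a)^{q_{s_n}(n)}$, which becomes the first line of the statement. The polynomial sums $\sum_{k=1}^{n-1}k^{t+l}$ are handled by Faulhaber's formula (\ref{eqn:sumpots}); after shifting the index to $i=t+l+1$ and regrouping the powers of $n$, these yield the double sum $\sum_k\big(\sum_i\cdots\big)n^k$ of the theorem together with a residual Bernoulli constant. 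The sums $\sum_{k=1}^{n-1}S_k^{(0,i)}$ are, by definition, precisely the coefficients $\alpha_n^{(0,i)}$, so they pass through untouched and assemble into the $\alpha^{(0,i)}_n$ line.

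Next I would dispose of the $\delta_\ell$ term, which is present only when $r>0$ and $a=2^{t+\ell}$ for some $\ell\in\{0,\dots,r-1\}$. Using identity (\ref{eqn:whent>0}) I would rewrite its inner sum $\sum_{j}M_{j+1}(k)^{t+\ell}\big(q_{j+1}(k)-q_j(k)\big)$ as $a^{q_{s_k}(k)}q_{s_k}(k)+\sum_{i=0}^{t+\ell-1}\binom{t+\ell}{i}S_k^{(1,i)}$. Summing the first piece over $k$ is the $d=1$ case of Lemma \ref{lem:sum_a^q}, giving $T(1,q_{s_n}(n),2a)+nq_{s_n}(n)a^{q_{s_n}(n)}-q_{s_n}(n)(2a)^{q_{s_n}(n)}$, while $\sum_{k=1}^{n-1}S_k^{(1,i)}=\alpha_n^{(1,i)}$ by definition; multiplying by $\tfrac1a\binom{r}{\ell}$ reproduces the final line of the theorem.

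Collecting these four evaluations yields the claimed expression, and the only remaining bookkeeping is to simplify the leftover Bernoulli constant. By (\ref{eqn:Bimp}) the combination $B_i-(-1)^iB_i$ vanishes unless $i$ is odd, and the only surviving odd index in the relevant range is $i=1$ (since $B_{2k+1}=0$ for $k\ge1$), which forces $t=0$ and produces exactly the correction $\tfrac{1}{a-1}\,\delta_{r>0,t=0,a\neq1}$. I expect the genuine difficulty to lie not in this final assembly but upstream, in Proposition \ref{lem:qib1} itself: converting the floor-function expression (\ref{eqn:sumy2}) for $y_n^{(r,t)}$ into binary-decomposition form requires splitting the range of $k$ into the blocks between consecutive binary exponents and summing geometric series whose ratio $2^{-(t+l)}a$ equals $1$ exactly when $l=\ell$. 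Isolating that resonance case through the Kronecker delta $\delta_\ell$ is the delicate point that makes all the constants line up, and threading it consistently through the summation is where the care is needed.
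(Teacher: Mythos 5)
Your proposal is correct and follows essentially the same route as the paper's own proof: both sum the differences $y^{(r,t)}_k$ given by Proposition \ref{lem:qib1}, interchange the order of summation, evaluate the pieces via Lemma \ref{lem:sum_a^q} (the $d=0$ and $d=1$ cases), Faulhaber's formula (\ref{eqn:sumpots}), the identity (\ref{eqn:whent>0}), and the definition of the $\alpha^{(d,i)}_n$, and then reduce the leftover Bernoulli constant to $\frac{1}{a-1}\,\delta_{r>0,t=0,a\neq 1}$ using (\ref{eqn:Bimp}). Your closing observation is also accurate: the substantive work lives in Proposition \ref{lem:qib1} and in the resonance bookkeeping via $\delta_\ell$, while the assembly you describe is exactly the paper's Eqn.~(\ref{eqn:sumycapk}) computation.
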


So, this proves our main result, up to the computation of the terms 
$$
\alpha^{(d,i)}_n  =\sum_{k=1}^{n-1}\sum_{j=1}^{s_{k}-1}q_{j}(k)^d(2^{-i}a)^{q_j(k)}M_{j+1}(k)^i,\qquad d=0,1.
$$
And notice that we only need to compute the $\alpha^{(1,i)}$'s when $r\geq 1$ and $a=2^{t+\ell}$, for some $\ell\in\{0,\ldots,r-1\}$ with $t+\ell>0$. In particular, in the computation of $\alpha^{(1,i)}$ we can assume that $a\neq 1,1/2$, in order to avoid discussing cases that are unrelevant for our main result.

\subsection{Computation of the $\alpha$'s}

\noindent Recall that, for every $d,m\in \NN$ and $n\in \NN_{\geq 1}$, we set
$$
S^{(d,m)}_n=\sum_{j=1}^{s_{n}-1}q_j(n)^d(2^{-m}a)^{q_j(n)}M_{j+1}(n)^m
$$
and
$$
\alpha^{(d,m)}_n=\sum_{k=1}^{n-1}S^{(d,m)}_k=\sum_{k=1}^{n-1}\sum_{j=1}^{s_{k}-1}q_j(k)^d(a2^{-m})^{q_j(k)}M_{j+1}(k)^m
$$
We are interested in closed formulas for $\alpha^{(d,m)}_n$ when $d=0$ as well as when $d=1$ and specific values of $a$ which in particular exclude the cases $a=1,1/2$.

When $m=0$, by Lemmas \ref{lem:sum_a^q}  and \ref{lem:sumx^t} we have that
\begin{align}
\alpha^{(d,0)}_n & =\sum_{k=1}^{n-1}\sum_{j=1}^{s_{k}-1}q_j(k)^da^{q_j(k)}=\sum_{k=1}^{n-1}\sum_{j=1}^{s_{k}}q_j(k)^da^{q_j(k)}-\sum_{k=1}^{n-1}q_{s_k}(k)^da^{q_{s_k}(k)}\nonumber\\
& =\sum_{i=1}^{s_n} 2^{q_i(n)-1}T(d,q_{i}(n),a) +\sum_{i=1}^{s_n}q_i(n)^da^{q_i(n)}(n-M_i(n))\nonumber\\
&\qquad\qquad-T(d,q_{s_n}(n),2a)-(n-2^{q_{s_n}(n)})q_{s_{n}}(n)^da^{q_{s_n}(n)}\nonumber\\
& = \sum_{i=1}^{s_n} 2^{q_i(n)-1}T(d,q_{i}(n),a) +\sum_{i=1}^{s_n-1}q_i(n)^da^{q_i(n)}(n-M_i(n))-T(d,q_{s_n}(n),2a)
\label{eqn:alphad0na}
 \end{align}
and then, in particular,
\begin{align*}
\alpha^{(0,0)}_n& =\sum_{i=1}^{s_n} 2^{q_i(n)-1}T(0,q_{i}(n),a) +\sum_{i=1}^{s_n-1}a^{q_i(n)}(n-M_i(n))-T(0,q_{s_n}(n),2a)
\\
&=\left\{\begin{array}{ll}
 \displaystyle \sum_{j=1}^{s_n}2^{q_j(n)-1}(q_j(n)-2j)+(s_n-1)n+1 &  \text{ if $a= 1$}\\[2ex]
 \displaystyle\sum_{j=1}^{s_n-1} 2^{-q_j(n)}(n-M_j(n))+n-q_{s_n}-s_n & \text{ if $a=1/2$}\\[2ex]
\displaystyle \sum_{j=1}^{s_n-1} a^{q_j(n)}(n-M_j(n))+\frac{1}{2(a-1)}\Big(\sum_{j=1}^{s_n} (2a)^{q_j(n)}-n\Big) - \frac{(2a)^{q_{s_n}(n)}-1}{2a-1} &  \text{ if $a\neq 1,1/2$}
\end{array}\right.
\end{align*}
and, when $a\neq 1,1/2$,
\begin{align*}
\alpha^{(1,0)}_n & =\sum_{i=1}^{s_n} 2^{q_i(n)-1}T(1,q_{i}(n),a) +\sum_{i=1}^{s_n-1}q_i(n)a^{q_i(n)}(n-M_i(n))-T(1,q_{s_n}(n),2a)\\
&= \sum_{i=1}^{s_n-1}q_i(n)a^{q_i(n)}(n-M_i(n))+\frac{1}{2(a-1)}\sum_{i=1}^{s_n}(2a)^{q_i(n)} q_i(n) -\frac{a}{2(a-1)^2}\sum_{i=1}^{s_n}(2a)^{q_i(n)}\\ 
&\qquad\qquad -\frac{1}{2a-1}(2a)^{q_{s_n}(n)}q_{s_n}(n)+\frac{2a}{(2a-1)^2}(2a)^{q_{s_n}(n)}+\frac{a}{2(a-1)^2}n-\frac{2a}{(2a-1)^2}
\end{align*}
Therefore, in the sequel we only consider the case $m>0$.

\begin{lemma}\label{lem:prealpha1}
For every $m\in \NN_{\geq 1}$, $d\in\NN$, $M\in \NN_{\geq 1}$, and $0\leq l \leq q_1(M)$,
$$
\aadm_{M+2^l}=\aadm_M+\aadm_{2^l}+\sum_{p=0}^{m-1}\binom{m}{p}M^{m-p}\gamma^{(d,p,m)}_l +2^lS_M^{(d,m)}
$$
\end{lemma}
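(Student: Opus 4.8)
The plan is to split the defining sum $\alpha^{(d,m)}_{M+2^l}=\sum_{k=1}^{M+2^l-1}S_k^{(d,m)}$ at $k=M$, writing it as $\alpha^{(d,m)}_M+\sum_{v=0}^{2^l-1}S_{M+v}^{(d,m)}$, and then to analyse each block $S_{M+v}^{(d,m)}$ using the description of the binary decomposition of $M+v$ provided by Eqn.~(\ref{eqn:q_j(M+p)}). The hypothesis $0\le l\le q_1(M)$ is exactly what guarantees $v\le 2^l-1<2^{q_1(M)}$, so that Eqn.~(\ref{eqn:q_j(M+p)}) applies: the low $s_v$ bits of $M+v$ come from $v$ and the high $s_M$ bits come from $M$, with $s_{M+v}=s_v+s_M$.

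First I would record how $M_{j+1}(M+v)$ behaves under this decomposition. For the low indices $j=1,\dots,s_v$ one has $q_j(M+v)=q_j(v)$ and, summing the tail, $M_{j+1}(M+v)=M_{j+1}(v)+M$ (using the convention $M_{s_v+1}(v)=0$, which correctly gives $M$ at the boundary $j=s_v$). For the high indices $j=s_v+1,\dots,s_{M+v}-1$ one has $q_j(M+v)=q_{j-s_v}(M)$ and $M_{j+1}(M+v)=M_{j+1-s_v}(M)$. Splitting the sum $S_{M+v}^{(d,m)}=\sum_{j=1}^{s_{M+v}-1}$ accordingly, the high block, after the reindexing $j'=j-s_v$, is exactly $\sum_{j'=1}^{s_M-1}q_{j'}(M)^d(2^{-m}a)^{q_{j'}(M)}M_{j'+1}(M)^m=S_M^{(d,m)}$, independently of $v$.

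Summing the high blocks over $v=1,\dots,2^l-1$ then yields $(2^l-1)S_M^{(d,m)}$, and adding the $v=0$ term $S_M^{(d,m)}$ (which has no low part) produces the claimed summand $2^lS_M^{(d,m)}$. For the low blocks I would expand $(M_{j+1}(v)+M)^m=\sum_{p=0}^m\binom{m}{p}M^{m-p}M_{j+1}(v)^p$ and interchange the order of summation, so that, directly by the definition of $\gamma^{(d,p,m)}_l$,
$$
\sum_{v=1}^{2^l-1}\sum_{j=1}^{s_v}q_j(v)^d(2^{-m}a)^{q_j(v)}(M_{j+1}(v)+M)^m=\sum_{p=0}^m\binom{m}{p}M^{m-p}\gamma^{(d,p,m)}_l.
$$

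It remains to isolate the $p=m$ term. Here lies the one delicate point: $\gamma^{(d,m,m)}_l$ runs its inner sum up to $j=s_k$, whereas $\alpha^{(d,m)}_{2^l}$ runs it only up to $s_k-1$. Since $m\ge1$, the extra boundary term carries the factor $M_{s_k+1}(k)^m=0^m=0$ and hence vanishes, so $\gamma^{(d,m,m)}_l=\sum_{k=1}^{2^l-1}S_k^{(d,m)}=\alpha^{(d,m)}_{2^l}$. Collecting the $p=m$ term as $\alpha^{(d,m)}_{2^l}$, the terms $p=0,\dots,m-1$ as $\sum_{p=0}^{m-1}\binom{m}{p}M^{m-p}\gamma^{(d,p,m)}_l$, the high blocks as $2^lS_M^{(d,m)}$, and the initial segment as $\alpha^{(d,m)}_M$ gives the stated identity. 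The main obstacle is the bookkeeping of the index shift across the low/high boundary at $j=s_v$ and the correct invocation of the $0^0=1$ convention, so that the $p=m$ contribution collapses onto $\alpha^{(d,m)}_{2^l}$ while the $p=0$ boundary term is retained inside $\gamma^{(d,0,m)}_l$.
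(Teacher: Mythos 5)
Your proposal is correct and follows essentially the same route as the paper's own proof: split $\alpha^{(d,m)}_{M+2^l}$ at $k=M$, use Eqn.~(\ref{eqn:q_j(M+p)}) to separate each block $S^{(d,m)}_{M+v}$ into its low-bit and high-bit parts, collect the ($v$-independent) high parts into $2^l S_M^{(d,m)}$, binomially expand $(M_{j+1}(v)+M)^m$ in the low parts to produce the $\gamma^{(d,p,m)}_l$ terms, and identify the $p=m$ contribution with $\alpha^{(d,m)}_{2^l}$. Your explicit remark that this last identification works because the $i=s_k$ boundary term carries $M_{s_k+1}(k)^m=0^m=0$ (valid precisely since $m\geq 1$) is a point the paper leaves implicit, and is a welcome clarification rather than a deviation.
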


\begin{proof}
Let $M=\sum_{i=1}^s 2^{q_i}$, with $s\geq 1$ and  $q_s>\cdots>q_1>0$, and let $0\leq l \leq q_1$, so that $s_M=s$ and $q_i(M)=q_i$ for each $i=1,\ldots,s$.
Then
\begin{align*}
\aadm_{M+2^l} & =
\sum_{k=1}^{M-1}\sum_{i=1}^{s_{k}-1}q_i(k)^d(a2^{-m})^{q_i(k)} M_{i+1}(k)^m+\sum_{k=M}^{M+2^l-1}\sum_{i=1}^{s_{k}-1}q_i(k)^d(a2^{-m})^{q_i(k)} M_{i+1}(k)^m\\
& =
\aadm_{M}+\sum_{k=M}^{M+2^l-1}\sum_{i=1}^{s_{k}-1}q_i(k)^d(a2^{-m})^{q_i(k)} M_{i+1}(k)^m
\end{align*}
where
\begin{align*}
& \sum_{k=M}^{M+2^l-1}\sum_{i=1}^{s_{k}-1}q_i(k)^d(a2^{-m})^{q_i(k)} M_{i+1}(k)^m\nonumber\\
&\quad =
\sum_{k=M}^{M+2^l-1}  \sum_{i=1}^{s_{k}-s}q_i(k)^d(a2^{-m})^{q_i(k)} M_{i+1}(k)^m +\sum_{k=M}^{M+2^l-1}  \sum_{i=s_k-s+1}^{s_k-1}q_i(k)^d(a2^{-m})^{q_i(k)}M_{i+1}(k)^m\nonumber\\
&\quad =
\sum_{k=M}^{M+2^l-1}  \sum_{i=1}^{s_{k}-s}q_i(k)^d(a2^{-m})^{q_i(k)} (M_{i+1}(k-M)+M)^m +\sum_{k=M}^{M+2^l-1}  \sum_{i=1}^{s-1} q_i^d (a2^{-m})^{q_i}M_{i+1}(M)^m\nonumber\\
&\quad =
\sum_{k=0}^{2^l-1}  \sum_{i=1}^{s_{k}} q_i(k)^d(a2^{-m})^{q_i(k)} (M_{i+1}(k)+M)^m +2^l  \sum_{i=1}^{s-1} q_i^d(a2^{-m})^{q_i} M_{i+1}(M)^m\nonumber\\
&\quad =
\sum_{k=1}^{2^l-1}  \sum_{i=1}^{s_{k}} q_i(k)^d(a2^{-m})^{q_i(k)}\sum_{p=0}^m\binom{m}{p}M_{i+1}(k)^pM^{m-p}+2^l  S_{M}^{(d,m)}\nonumber\\
&\quad= \sum_{p=0}^{m-1}\binom{m}{p}M^{m-p}\sum_{k=1}^{2^l-1}  \sum_{i=1}^{s_{k}} q_i(k)^d(a2^{-m})^{q_i(k)}M_{i+1}(k)^p
+\sum_{k=1}^{2^l-1}  \sum_{i=1}^{s_{k}} q_i(k)^d(a2^{-m})^{q_i(k)} M_{i+1}(k)^m\\
&\qquad\qquad+2^l  S_{M}^{(d,m)}\nonumber\\
&\quad =\sum_{p=0}^{m-1}\binom{m}{p}M^{m-p}\gamma^{(d,p,m)}_l +\aadm_{2^l}+2^l  S_{M}^{(d,m)}
\end{align*}
yielding the expression in the statement.
\end{proof}

\begin{cor}\label{lem:alphad2l}
For every $m\in \NN_{\geq 1}$ and for every $l\geq 0$,
$$
\aadm_{2^l}=\frac{2^{ml+l-1}}{m+1}\sum_{p=0}^{m}\binom{m+1}{p} 2^{-p(l-1)} B_p\cdot T(d,l-1,2^{p-m}a)
$$
\end{cor}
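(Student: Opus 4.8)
The plan is to identify $\aadm_{2^l}$ with a sum already evaluated in the proof of Lemma~\ref{lem:varphid}. Writing out the definitions side by side, $\aadm_{2^l}=\sum_{k=1}^{2^l-1}\sum_{j=1}^{s_k-1}q_j(k)^d(2^{-m}a)^{q_j(k)}M_{j+1}(k)^m$ differs from the quantity $\gamma^{(d,m,m)}_l=\sum_{k=1}^{2^l-1}\sum_{i=1}^{s_k}q_i(k)^d(2^{-m}a)^{q_i(k)}M_{i+1}(k)^m$ (the expression defining $\gamma^{(d,p,m)}_l$ extended to the boundary value $p=m$, for which the sum is still perfectly meaningful) only in the summands with $i=s_k$. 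Each such summand carries the factor $M_{s_k+1}(k)^m=0^m$, which vanishes because $m\geq 1$. Hence $\aadm_{2^l}=\gamma^{(d,m,m)}_l$ for every $l\geq 0$, and it remains only to evaluate the latter.

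Second, I would reuse the computation inside the proof of Lemma~\ref{lem:varphid}. That argument relies exclusively on the recurrence $\gamma^{(d,p,m)}_l=2\gamma^{(d,p,m)}_{l-1}+\sum_{q=0}^{p-1}\binom{p}{q}2^{(p-q)(l-1)}\gamma^{(d,q,m)}_{l-1}$, on the base case (\ref{eqn:varphid^0}), and on Lemma~\ref{lem:z}; none of these invokes the standing hypothesis $p<m$, and the recurrence for $\gamma^{(d,m,m)}_l$ only feeds on the values $\gamma^{(d,q,m)}_l$ with $q\leq m-1$. Thus the intermediate identity reached there is valid at $p=m$ and reads
$$
\gamma^{(d,m,m)}_l=\frac{2^{ml+l-1}}{m+1}\sum_{p=0}^{m}\binom{m+1}{p}2^{-p(l-1)}\Big(\sum_{t=0}^{l-2}t^d(a2^{p-m})^t\Big)B_p.
$$
Recognizing the inner sum as $\sum_{t=0}^{l-2}t^d(a2^{p-m})^t=T(d,l-1,2^{p-m}a)$ directly from the definition of $T$, and substituting $\aadm_{2^l}=\gamma^{(d,m,m)}_l$, yields precisely the claimed formula. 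I would then dispose of the endpoints: for $l=0$ both sides are empty sums equal to $0$ (here $T(d,-1,x)=0$), and for $l=1$ every factor $T(d,0,x)=0$ forces the right-hand side to vanish, matching $\aadm_2=0$.

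The only genuinely delicate point is the transition to the boundary index $p=m$, since $\gamma^{(d,m,m)}_l$ lies outside the range $p<m$ for which $\gamma$ was introduced; I must therefore argue that the derivation of Lemma~\ref{lem:varphid} carries over verbatim rather than simply cite its statement. To keep the argument self-contained I would, as an alternative, specialize Lemma~\ref{lem:prealpha1} to $M=2^l$: since $s_{2^l}=1$ gives $S^{(d,m)}_{2^l}=0$, this produces the clean recurrence $\aadm_{2^{l+1}}=2\aadm_{2^l}+\sum_{p=0}^{m-1}\binom{m}{p}2^{l(m-p)}\gamma^{(d,p,m)}_l$, in which all middle indices satisfy $p<m$, and then verify by induction on $l$ that the proposed closed form satisfies it using Corollary~\ref{cor:varphi^0alt}. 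I expect this inductive verification, although entirely mechanical, to be the most computation-heavy step.
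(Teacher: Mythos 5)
Your proposal is correct, but it follows a genuinely different route from the paper's proof. The paper specializes Lemma~\ref{lem:prealpha1} to $M=2^l$, solves the resulting recurrence $\aadm_{2^{l+1}}=2\aadm_{2^l}+\sum_{p=0}^{m-1}\binom{m}{p}2^{l(m-p)}\gamma^{(d,p,m)}_l$ explicitly, and then substitutes Corollary~\ref{cor:varphi^0alt} and carries out a full Bernoulli-polynomial computation from scratch to reach the stated formula. You instead identify $\aadm_{2^l}$ with the boundary value $\gamma^{(d,m,m)}_l$ (valid precisely because $m\geq 1$ makes the discarded $i=s_k$ summands vanish through the factor $M_{s_k+1}(k)^m=0^m$; for $m=0$ the convention $0^0=1$ breaks this identification, which is why the paper treats $m=0$ separately), and then argue that the whole derivation of Lemma~\ref{lem:varphid} is valid at $p=m$: the recurrence $\gamma^{(d,p,m)}_l=2\gamma^{(d,p,m)}_{l-1}+\sum_{q=0}^{p-1}\binom{p}{q}2^{(p-q)(l-1)}\gamma^{(d,q,m)}_{l-1}$ only needs $p>0$ (to kill the $i=s_k$ terms), and neither Lemma~\ref{lem:z} nor Eqn.~(\ref{eqn:varphid^0}) invokes the bound $p<m$. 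Reading off the intermediate identity $\gamma^{(d,p,m)}_l=\frac{2^{pl+l-1}}{p+1}\sum_{i=0}^{p}\binom{p+1}{i}2^{-i(l-1)}B_i\big(\sum_{t=0}^{l-2}t^d(a2^{i-m})^t\big)$ from that proof at $p=m$, and recognizing the inner sum as $T(d,l-1,2^{i-m}a)$, gives the corollary at once; your treatment of the endpoints $l=0,1$ is also fine. What your route buys is economy and a structural insight: no computation is repeated, and it exposes the paper's recurrence for $\aadm_{2^l}$ as exactly the $p=m$ instance of the $\gamma$-recurrence, so the corollary is just the boundary case of machinery already in place. What it costs is that it rests on inspecting another proof rather than citing a stated result (the identity you need is an unlabeled display inside the proof of Lemma~\ref{lem:varphid}, established there under the standing hypothesis $p<m$), which is exactly the delicacy you flag; your fallback---verifying the closed form by induction against the recurrence from Lemma~\ref{lem:prealpha1} via Corollary~\ref{cor:varphi^0alt}---is essentially the paper's own argument and would make the write-up self-contained, at the price of redoing that computation.
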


\begin{proof}
Taking $M=2^l$ in the last lemma (and recalling that $S_{2^l}^{(d,m)}=0$ because $s_{2^{l}}=1$), we obtain
$$
\aadm_{2^{l+1}}=2 \aadm_{2^l}+\sum_{p=0}^{m-1}\binom{m}{p}2^{l(m-p)}\gamma^{(d,p,m)}_l
$$
The solution of this recurrence with $\aadm_{2^{0}}=\aadm_{1}=0$ is
$$
\aadm_{2^{l}} =\sum_{k=1}^{l-1}2^{l-k-1}\sum_{p=0}^{m-1}\binom{m}{p}2^{k(m-p)}\gamma^{(d,p,m)}_k
$$

Then, using Corollary \ref{cor:varphi^0alt},
\begin{align*}
& \aadm_{2^{l}} =\sum_{k=1}^{l-1}2^{l-k-1}\sum_{p=0}^{m-1}\binom{m}{p}2^{k(m-p)}\frac{a^{k-1}2^{-(m-1)(k-1)+pk}}{p+1}\sum_{t=1}^{k-1} (k-t-1)^d(a^{-1}2^{m-p-1})^{t}(B_{p+1}(2^{t})-B_{p+1})\nonumber\\
&\qquad\qquad+
\sum_{k=1}^{l-1}2^{l-k-1}2^{km} (a2^{-(m-1)})^{k-1}(k-1)^d\nonumber\\
&\quad= \frac{2^{m+l-2}}{m+1}\sum_{t=1}^{l-2}\Bigg[\Big(\sum_{p=0}^{m-1}\binom{m+1}{p+1}2^{(m-p-1)t}(B_{p+1}(2^t)-B_{p+1})\Big)\Big(\sum_{k=t+1}^{l-1}(k-t-1)^da^{k-t-1}\Big)\Bigg]\nonumber\\
&\qquad\qquad +
2^{m+l-2}\sum_{k=1}^{l-1}(k-1)^da^{k-1} \nonumber\\
  &=\frac{2^{m+l-2}}{m+1}\sum_{t=1}^{l-2}\Bigg[\Big(\sum_{p=1}^{m}\binom{m+1}{p}2^{(m-p)t}(B_{p}(2^t)-B_{p})\Big)\Big(\sum_{k=0}^{l-t-2}k^da^{k}\Big)\Bigg]+2^{m+l-2}\sum_{k=0}^{l-2} k^da^{k} \nonumber\\
&=\frac{2^{m+l-2}}{m+1}\sum_{t=0}^{l-2}\Bigg[\Big(\sum_{p=0}^{m}\binom{m+1}{p}2^{(m-p)t}(B_{p}(2^t)-B_{p})\Big)T(d,l-t-1,a)\Bigg]
\end{align*}
by Eqn. (\ref{eqn:Bn(0)}). Now,
\begin{align*}
&\sum_{p=0}^{m}\binom{m+1}{p}2^{(m-p)t}(B_{p}(2^t)-B_{p})\\
&\qquad =
2^{-t}\sum_{p=0}^{m}\binom{m+1}{p}2^{t(m+1-p)}B_{p}(2^t)-\sum_{p=0}^{m}\binom{m+1}{p}2^{(m-p)t}B_{p}\\
&\qquad =
2^{-t}(B_{m+1}(2^{t+1})-B_{m+1}(2^t))-\sum_{p=0}^{m}\binom{m+1}{p}2^{(m-p)t}B_{p}\\
&\qquad =
2^{-t}\sum_{p=0}^{m+1}\binom{m+1}{p}2^{(m+1-p)(t+1)}B_{p}-2^{-t}\sum_{p=0}^{m+1}\binom{m+1}{p}2^{(m+1-p)t}B_{p}-\sum_{p=0}^{m}\binom{m+1}{p}2^{(m-p)t}B_{p}
\\
&\qquad =2\sum_{p=0}^{m}\binom{m+1}{p} (2^{(m-p)(t+1)}-2^{(m-p)t})B_p
\end{align*}
and therefore 
\begin{align*}
&\aadm_{2^{l}}  = \frac{2^{m+l-2}}{m+1}\sum_{t=0}^{l-2}\Bigg[\Big(2\sum_{p=0}^{m}\binom{m+1}{p} (2^{(m-p)(t+1)}-2^{(m-p)t})B_p\Big)T(d,l-t-1,a)\Bigg]\\
&\quad = \frac{2^{m+l-1}}{m+1}\sum_{p=0}^{m}\binom{m+1}{p} B_p\cdot \Bigg[\sum_{t=0}^{l-2} 2^{(m-p)(t+1)}T(d,l-t-1,a)-\sum_{t=0}^{l-2}2^{(m-p)t}T(d,l-t-1,a)\Bigg]\\
&\quad = \frac{2^{m+l-1}}{m+1}\sum_{p=0}^{m}\binom{m+1}{p} B_p\cdot \Bigg[\sum_{t=1}^{l-1} 2^{(m-p)t}T(d,l-t,a)-\sum_{t=0}^{l-2}2^{(m-p)t}T(d,l-t-1,a)\Bigg]\\
&\quad = \frac{2^{m+l-1}}{m+1}\sum_{p=0}^{m}\binom{m+1}{p} B_p\cdot \Bigg[\sum_{t=1}^{l-1} 2^{(m-p)t}(l-t-1)^da^{l-t-1}+\sum_{t=1}^{l-2} 2^{(m-p)t}T(d,l-t-1,a)\\
&\quad\qquad\hphantom{= \frac{2^{m+l-1}}{m+1}\sum_{p=0}^{m}\binom{m+1}{p} B_p\cdot \Bigg[}-\sum_{t=0}^{l-2}2^{(m-p)t}T(d,l-t-1,a)\Bigg]\\
&\quad = \frac{2^{m+l-1}}{m+1}\sum_{p=0}^{m}\binom{m+1}{p} B_p\cdot \Bigg[\sum_{t=1}^{l-1} 2^{(m-p)t}(l-t-1)^da^{l-t-1}-T(d,l-1,a)\Bigg]\\
&\quad = \frac{2^{m+l-1}}{m+1}\sum_{p=0}^{m}\binom{m+1}{p} B_p\cdot \Big(\sum_{t=1}^{l-1} 2^{(m-p)t}(l-t-1)^da^{l-t-1}\Big) \qquad \text{(by Eqn. (\ref{eqn:recurrenceB}))}\\
&\quad = \frac{2^{ml+l-1}}{m+1}\sum_{p=0}^{m}\binom{m+1}{p} 2^{-p(l-1)} B_p\cdot \Big(\sum_{t=1}^{l-1} (l-t-1)^d(2^{p-m}a)^{l-t-1}\Big)\\
&\quad = \frac{2^{ml+l-1}}{m+1}\sum_{p=0}^{m}\binom{m+1}{p} 2^{-p(l-1)} B_p\cdot \Big(\sum_{k=0}^{l-2} k^d(2^{p-m}a)^k\Big)\\
&\quad = \frac{2^{ml+l-1}}{m+1}\sum_{p=0}^{m}\binom{m+1}{p} 2^{-p(l-1)} B_p\cdot T(d,l-1,2^{p-m}a)
\end{align*}
as we claimed.
\end{proof}

\begin{proposition}\label{lem:experiment} 
For every $n,m\geq 1$, 
\begin{align*}
\aadm_{n} & = \frac{1}{2(m+1)}\sum_{i=1}^{s_n}\sum_{j=0}^{m}\binom{m+1}{j}B_j 2^{j}M_{i}(n)^{m+1-j}\big(T(d,q_i(n),a2^{j-m})-T(d,q_{i-1}(n),a2^{j-m})\big)\nonumber\\
&\qquad\qquad +\sum_{i=1}^{s_n-1}q_i(n)^d(a2^{-m})^{q_i(n)} (n-M_{i}(n))M_{i+1}(n)^m 
\end{align*}
\end{proposition}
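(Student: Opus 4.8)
The plan is to prove the formula by induction on the binary weight $s_n$, peeling off the lowest bit of $n$ by means of the recurrence in Lemma~\ref{lem:prealpha1}. For the inductive step with $s_n\geq 2$ I would apply that lemma with $M=M_2(n)$ and $l=q_1(n)$: since $q_1(n)<q_2(n)=q_1(M_2(n))$ its hypothesis $0\leq l\leq q_1(M)$ holds, and $M_2(n)$ has weight $s_n-1$, so the induction hypothesis applies to it. The base case $s_n=1$, i.e. $n=2^{q_1(n)}$, is Corollary~\ref{lem:alphad2l}: in the claimed formula the second sum is empty and $T(d,q_0(n),\cdot)=T(d,0,\cdot)=0$, so it collapses to a single term which must be reconciled with Corollary~\ref{lem:alphad2l}. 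Throughout I would use the reindexings valid for $N=M_2(n)$, namely $s_N=s_n-1$, $q_i(N)=q_{i+1}(n)$ and $M_i(N)=M_{i+1}(n)$ for $i\geq1$, together with $q_0(N)=q_0(n)=0$.

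Feeding the induction hypothesis for $\alpha^{(d,m)}_{M_2(n)}$ through these substitutions and comparing with the target formula for $n$, the two ``Bernoulli'' sums built from $\sum_{j}\binom{m+1}{j}B_j2^jM_i^{m+1-j}\big(T(d,q_i,\cdot)-T(d,q_{i-1},\cdot)\big)$ agree term by term for indices $i=3,\dots,s_n$, while indices $i=1,2$ leave the residue
$$
R=\frac{1}{2(m+1)}\sum_{j=0}^m\binom{m+1}{j}B_j2^j\big(n^{m+1-j}-M_2(n)^{m+1-j}\big)T\big(d,q_1(n),a2^{j-m}\big).
$$
In parallel I would check that the difference (target minus hypothesis) of the ``plain'' sums $\sum_i q_i^d(a2^{-m})^{q_i}(n-M_i)M_{i+1}^m$ collapses, using $n-M_i=(n-M_2)+(M_2-M_i)$ and $n-M_2=2^{q_1(n)}$, to exactly $2^{q_1(n)}S^{(d,m)}_{M_2(n)}$, which cancels the final term of the Lemma~\ref{lem:prealpha1} recurrence. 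The inductive step thus reduces to the single identity $R=\alpha^{(d,m)}_{2^{q_1(n)}}+\sum_{p=0}^{m-1}\binom{m}{p}M_2(n)^{m-p}\gamma^{(d,p,m)}_{q_1(n)}$.

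To prove this I would expand $n^{m+1-j}-M_2(n)^{m+1-j}=(M_2(n)+2^{q_1(n)})^{m+1-j}-M_2(n)^{m+1-j}$ by the binomial theorem and collect powers of $M_2(n)$. The coefficient of $M_2(n)^0$ must match $\alpha^{(d,m)}_{2^{q_1(n)}}$, and that of $M_2(n)^{m-p}$ must match $\binom{m}{p}\gamma^{(d,p,m)}_{q_1(n)}$. Using $\binom{m+1}{j}\binom{m+1-j}{r}=\binom{m+1}{r}\binom{m+1-r}{j}$ and $\binom{m+1}{r}/(m+1)=\binom{m}{r}/(m+1-r)$, each of these matchings becomes the single clean reformulation: for every $l\geq0$ and $0\leq p\leq m$,
$$
\frac{1}{2(p+1)}\sum_{j=0}^{p}\binom{p+1}{j}B_j\,2^{\,j+l(p+1-j)}\,T(d,l,a2^{j-m})=
\begin{cases}\gamma^{(d,p,m)}_{l}, & p<m,\\[0.5ex]\alpha^{(d,m)}_{2^l}, & p=m.\end{cases}
$$

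Verifying this reformulation is the heart of the argument and the step I expect to be the main obstacle, as it reconciles the $T$-based closed forms in the statement with the Bernoulli-polynomial closed forms produced by Corollaries~\ref{cor:varphi^0alt} and~\ref{lem:alphad2l}. I would expand $T(d,l,a2^{j-m})=\sum_{k=0}^{l-1}k^da^k2^{(j-m)k}$, exchange the $j$ and $k$ sums, and evaluate the inner $j$-sum through $\sum_{j=0}^{p+1}\binom{p+1}{j}B_jy^{j}=y^{p+1}B_{p+1}(1/y)$ applied with $y=2^{1-l+k}$; subtracting the absent $j=p+1$ term yields the factor $B_{p+1}(2^{l-k-1})-B_{p+1}$. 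The substitution $t=l-1-k$ then rewrites the sum as the right-hand side of Corollary~\ref{cor:varphi^0alt} when $p<m$, the boundary term $t=0$ vanishing because $B_{p+1}(1)-B_{p+1}=0$ for $p\geq1$ and being absorbed into the $\delta_{p=0,l>0}$ term for $p=0$ via $B_1(2^t)-B_1=2^t$ (cf.~(\ref{eqn:varphid^0})), and as Corollary~\ref{lem:alphad2l} when $p=m$. For the $p=m$ matching (equivalently $r=0$, which also settles the base case) I would argue directly instead: the discrepancy between $T(d,l,\cdot)$ and the $T(d,l-1,\cdot)$ of Corollary~\ref{lem:alphad2l} is $(l-1)^d(a2^{j-m})^{l-1}$, and the induced correction is proportional to $\sum_{j=0}^m\binom{m+1}{j}B_j$, which vanishes for $m\geq1$ by~(\ref{eqn:recurrenceB}). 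Assembling these pieces completes the induction.
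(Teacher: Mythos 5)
Your proposal is correct, and it runs on exactly the same machinery as the paper's proof: the bit-peeling recurrence of Lemma~\ref{lem:prealpha1}, the closed forms of Corollary~\ref{cor:varphi^0alt} and Corollary~\ref{lem:alphad2l}, and the Bernoulli identities, in particular Eqn.~(\ref{eqn:recurrenceB}). The difference is one of organization rather than of ideas. The paper unfolds Lemma~\ref{lem:prealpha1} completely over all bits of $n$ (its Eqn.~(\ref{eqn:prealpham1})), substitutes the closed forms for every $\alpha^{(d,m)}_{2^{q_k}}$ and $\gamma^{(d,p,m)}_{q_i}$ simultaneously, and then carries out one global simplification of the resulting multiple sums (including a case split on $q_1\geq 1$ versus $q_1=0$), so that the formula is \emph{derived} rather than assumed. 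You instead \emph{verify} the stated formula by induction on the binary weight, and all of the Bernoulli-polynomial algebra is concentrated in your single ``clean reformulation'' identity in $(l,p)$: its $p=m$ case is precisely the equivalence that the paper only records in the remark following Proposition~\ref{lem:experiment}, and its $p<m$ case is the same computation the paper performs inside its global simplification. I checked the reduction of the inductive step --- the residue $R$ coming from the indices $i=1,2$, the collapse of the plain sums to $2^{q_1(n)}S^{(d,m)}_{M_2(n)}$ via $n-M_2(n)=2^{q_1(n)}$, and the binomial identities $\binom{m+1}{j}\binom{m+1-j}{m-p}=\binom{m+1}{m-p}\binom{p+1}{j}$ and $\binom{m+1}{p+1}/(m+1)=\binom{m}{p}/(p+1)$ --- as well as the verification of the identity itself (the $t=0$ boundary term behaving as you describe, and the $T(d,l,\cdot)$ versus $T(d,l-1,\cdot)$ discrepancy killed by $\sum_{j=0}^{m}\binom{m+1}{j}B_j=0$ for $m\geq 1$); all steps are sound. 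What your packaging buys is locality: the paper's nested-sum manipulation is replaced by a one-parameter identity that settles the base case and the inductive step at once, and it handles $q_1(n)=0$ uniformly (everything vanishes when $l=0$) instead of by a separate case; the price is that the target formula must be known in advance, which is harmless here since it is the statement being proved.
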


\begin{proof}
To simplify the notations, let $s=s_n$ and $q_i=q_i(n)$  and $M_i=M_i(n)=\sum_{h=i}^s 2^{q_h}$, for every $i=1,\ldots,s$.
Then, Lemma \ref{lem:prealpha1} and a simple argument by induction implies that
\begin{align}
&\aadm_{n} 
= \sum_{k=1}^{s}\aadm_{2^{q_k}}+
\sum_{p=0}^{m-1}\binom{m}{p}\sum_{i=1}^{s-1} M_{i+1}^{m-p}\gamma^{(d,p,m)}_{q_i}  +\sum_{k=1}^{s-1}2^{q_k}S_{M_{k+1}}^{(d,m)}
\nonumber\\
&\ = \sum_{k=1}^{s}\aadm_{2^{q_k}}+
\sum_{p=0}^{m-1}\binom{m}{p}\sum_{i=1}^{s-1} M_{i+1}^{m-p}\gamma^{(d,p,m)}_{q_i}  +\sum_{k=1}^{s-1}2^{q_k}\sum_{i=k+1}^{s-1}q_i^d(a2^{-m})^{q_i}M_{i+1}^m\nonumber\\
&\  = \sum_{k=1}^{s}\aadm_{2^{q_k}}+
\sum_{p=0}^{m-1}\binom{m}{p}\sum_{i=1}^{s-1} M_{i+1}^{m-p}\gamma^{(d,p,m)}_{q_i}  +\sum_{i=2}^{s-1}q_i^d(a2^{-m})^{q_i} M_{i+1}^m\sum_{k=1}^{i-1}2^{q_k}\nonumber \\
&\  = \sum_{k=1}^{s}\aadm_{2^{q_k}}+
\sum_{p=0}^{m-1}\binom{m}{p}\sum_{i=1}^{s-1} M_{i+1}^{m-p}\gamma^{(d,p,m)}_{q_i}  +\sum_{i=1}^{s-1}q_i^d(a2^{-m})^{q_i} (n-M_{i})M_{i+1}^m
\label{eqn:prealpham1}
\end{align}
where, with the convention that $M_{s+1}^0=1$, 
\begin{align*}
& \sum_{k=1}^{s}\aadm_{2^{q_k}}=   \sum_{k=1}^{s}\frac{2^{mq_k+q_k-1}}{m+1}\sum_{p=0}^{m}\binom{m+1}{p} 2^{-p(q_k-1)} B_p\cdot T(d,q_k-1,2^{p-m}a)\\
&\quad=   \sum_{k=1}^{s}\frac{2^{mq_k+q_k-1}}{m+1}\sum_{p=0}^{m}\binom{m+1}{p} 2^{-p(q_k-1)} B_p\cdot \Big(\sum_{t=0}^{q_k-2}t^d(2^{p-m}a)^{t}\Big)\\
&\quad=   \frac{2^m}{m+1}\sum_{p=0}^{m}\binom{m+1}{p}B_p\cdot\Big(\sum_{k=1}^{s} 2^{(m-p+1)(q_k-1)}\sum_{t=0}^{q_k-2}t^d(2^{p-m}a)^{t}\Big)\\
&\quad=   \frac{2^m}{m+1}\sum_{p=0}^{m-1}\binom{m+1}{m+1}\binom{m+1}{p}B_p\cdot\Big(\sum_{k=1}^{s} M_{k+1}^{m-m}2^{(m-p+1)(q_k-1)}\sum_{t=0}^{q_k-2}t^d(2^{p-m}a)^{t}\Big)\\
&\qquad\qquad + 2^{m-1}B_m\cdot\Big(\sum_{k=1}^{s} 2^{q_k}\sum_{t=0}^{q_k-2}t^da^{t}\Big)
\end{align*}
and, by Corollary \ref{cor:varphi^0alt}, setting $\delta_{q_1\geq 1}=\min\{q_1,1\}$
\begin{align*}
& \sum_{p=0}^{m-1}\binom{m}{p}\sum_{i=1}^{s-1} M_{i+1}^{m-p}\gamma^{(d,p,m)}_{q_i}=\sum_{p=0}^{m-1}\binom{m}{p}\sum_{i=1}^{s} M_{i+1}^{m-p}\gamma^{(d,p,m)}_{q_i}\\
&\quad  = \sum_{p=0}^{m-1}\binom{m}{p}\sum_{i=1}^{s} M_{i+1}^{m-p}\frac{a^{q_i-1}2^{-(m-1)(q_i-1)+pq_i}}{p+1}\sum_{t=1}^{q_i-1} (q_i-t-1)^d(a^{-1}2^{m-p-1})^{t}(B_{p+1}(2^{t})-B_{p+1})\\
&\qquad\qquad+{M_{2}^{m}(a2^{-(m-1)})^{q_1-1}(q_1-1)^d\delta_{q_1 \geq 1}+\sum_{i=2}^{s} M_{i+1}^{m}(a2^{-(m-1)})^{q_i-1}(q_i-1)^d}\\
&\quad  = \frac{1}{m+1}\sum_{p=0}^{m-1}\binom{m+1}{p+1}2^p\sum_{i=1}^{s} M_{i+1}^{m-p}\sum_{t=1}^{q_i-1} (q_i-t-1)^d(2^{p-m+1}a)^{q_i-t-1}(B_{p+1}(2^{t})-B_{p+1})\\
&\qquad\qquad+M_{2}^{m}(a2^{-(m-1)})^{q_1-1}(q_1-1)^d\delta_{q_1 \geq 1}+\sum_{i=2}^{s} M_{i+1}^{m}(a2^{-(m-1)})^{q_i-1}(q_i-1)^d\\
&\quad  = \frac{1}{m+1}\sum_{p=0}^{m-1}\binom{m+1}{p+1}2^p\sum_{i=1}^{s} M_{i+1}^{m-p}\sum_{t=0}^{q_i-2} t^d(2^{p-m+1}a)^{t}(B_{p+1}(2^{q_i-t-1})-B_{p+1})\\
&\qquad\qquad+M_{2}^{m}(a2^{-(m-1)})^{q_1-1}(q_1-1)^d\delta_{q_1 \geq 1}+\sum_{i=2}^{s} M_{i+1}^{m}(a2^{-(m-1)})^{q_i-1}(q_i-1)^d\\
&\quad  = \frac{1}{m+1}\sum_{p=0}^{m-1}\binom{m+1}{p+1}2^p\sum_{i=1}^{s} M_{i+1}^{m-p}\sum_{t=0}^{q_i-2} t^d2^{(p-m+1)t}a^{t}\sum_{j=0}^p\binom{p+1}{j}2^{(p+1-j)(q_i-t-1)}B_j\\
&\qquad\qquad+M_{2}^{m}(a2^{-(m-1)})^{q_1-1}(q_1-1)^d\delta_{q_1 \geq 1}+\sum_{i=2}^{s} M_{i+1}^{m}(a2^{-(m-1)})^{q_i-1}(q_i-1)^d\\
&\quad  =
\frac{1}{m+1}\sum_{j=0}^{m-1}\sum_{p=j}^{m-1}\binom{m+1}{p+1}\binom{p+1}{j}2^p\sum_{i=1}^{s} M_{i+1}^{m-p}2^{(p+1-j)(q_i-1)}\sum_{t=0}^{q_i-2} t^d(a2^{-(m-j)})^tB_j\\
&\qquad\qquad+M_{2}^{m}(a2^{-(m-1)})^{q_1-1}(q_1-1)^d\delta_{q_1 \geq 1}+\sum_{i=2}^{s} M_{i+1}^{m}(a2^{-(m-1)})^{q_i-1}(q_i-1)^d
\end{align*}
Thus, returning back to (\ref{eqn:prealpham1}) (and still with the convention that $M_{s+1}^0=1$), we have
\begin{align}
&\aadm_{n}=\frac{2^m}{m+1}\sum_{j=0}^{m-1}\binom{m+1}{m+1}\binom{m+1}{j}\Big(\sum_{i=1}^{s} M_{i+1}^{m-m}2^{(m-j+1)(q_i-1)}\sum_{t=0}^{q_i-2}t^d(2^{j-m}a)^{t}\Big)B_j\nonumber\\
&\qquad\qquad + 2^{m-1}B_m\cdot\Big(\sum_{i=1}^{s} 2^{q_i}\sum_{t=0}^{q_i-2}t^da^{t}\Big) \nonumber\\
&\qquad\qquad +\frac{1}{m+1}\sum_{j=0}^{m-1}\sum_{p=j}^{m-1}\binom{m+1}{p+1}\binom{p+1}{j}2^p\sum_{i=1}^{s} M_{i+1}^{m-p}2^{(p+1-j)(q_i-1)}\sum_{t=0}^{q_i-2} t^d(a2^{-(m-j)})^tB_j \nonumber\\
&\qquad\qquad+M_{2}^{m}(a2^{-(m-1)})^{q_1-1}(q_1-1)^d\delta_{q_1 \geq 1}+\sum_{i=2}^{s} M_{i+1}^{m}(a2^{-(m-1)})^{q_i-1}(q_i-1)^d\nonumber\\ &\qquad\qquad+\sum_{i=1}^{s-1}q_i^d(a2^{-m})^{q_i} (n-M_{i})M_{i+1}^m \nonumber\\
&\quad= \frac{1}{m+1}\sum_{j=0}^{m-1}\sum_{p=j}^{m}\binom{m+1}{p+1}\binom{p+1}{j}2^p\sum_{i=1}^{s} M_{i+1}^{m-p}2^{(p+1-j)(q_i-1)}\sum_{t=0}^{q_i-2} t^d(a2^{-(m-j)})^tB_j \nonumber\\
&\qquad\qquad + 2^{m-1}B_m\cdot\Big(\sum_{i=1}^{s} 2^{q_i}\sum_{t=0}^{q_i-2}t^da^{t}\Big) \nonumber\\
&\qquad\qquad+M_{2}^{m}(a2^{-(m-1)})^{q_1-1}(q_1-1)^d\delta_{q_1 \geq 1}+\sum_{i=2}^{s} M_{i+1}^{m}(a2^{-(m-1)})^{q_i-1}(q_i-1)^d\nonumber\\ &\qquad\qquad+\sum_{i=1}^{s-1}q_i^d(a2^{-m})^{q_i} (n-M_{i})M_{i+1}^m \nonumber\\
&\quad= \frac{1}{m+1}\sum_{j=0}^{m}\sum_{p=j}^{m}\binom{m+1}{p+1}\binom{p+1}{j}2^p\sum_{i=1}^{s} M_{i+1}^{m-p}2^{(p+1-j)(q_i-1)}\sum_{t=0}^{q_i-2} t^d(a2^{-(m-j)})^tB_j \nonumber\\
&\qquad\qquad+M_{2}^{m}(a2^{-(m-1)})^{q_1-1}(q_1-1)^d\delta_{q_1 \geq 1}+\sum_{i=2}^{s} M_{i+1}^{m}(a2^{-(m-1)})^{q_i-1}(q_i-1)^d\nonumber\\ &\qquad\qquad+\sum_{i=1}^{s-1}q_i^d(a2^{-m})^{q_i} (n-M_{i})M_{i+1}^m
\label{eqn:adm-sl}
\end{align}
Now notice that, for every $i=1,\ldots,s$
\begin{align*}
& \frac{1}{m+1}\sum_{j=0}^{m}\sum_{p=j}^{m}\binom{m+1}{p+1}\binom{p+1}{j}2^pM_{i+1}^{m-p}2^{(p+1-j)(q_i-1)}(q_i-1)^d(a2^{-(m-j)})^{q_i-1}B_j\\
&\qquad =\frac{1}{m+1}(q_i-1)^da^{q_i-1}\sum_{p=0}^{m}\binom{m+1}{p+1}M_{i+1}^{m-p}2^{(p+1-m)(q_i-1)+p}\sum_{j=0}^{p}\binom{p+1}{j}B_j\\
&\qquad =\frac{1}{m+1}(q_i-1)^da^{q_i-1}(m+1)M_{i+1}^{m}2^{-(m-1)(q_i-1)}\qquad \text{(by Eqn.~(\ref{eqn:recurrenceB}))}\\
&\qquad =(q_i-1)^dM_{i+1}^{m}(a2^{-(m-1)})^{q_i-1}
\end{align*}
Therefore, if $q_1\geq 1$, (\ref{eqn:adm-sl}) becomes
\begin{align*}
\aadm_{n} & =\frac{1}{m+1}\sum_{j=0}^{m}\sum_{p=j}^{m}\binom{m+1}{p+1}\binom{p+1}{j}2^p\sum_{i=1}^{s} M_{i+1}^{m-p}2^{(p+1-j)(q_i-1)}\sum_{t=0}^{q_i-2} t^d(a2^{-(m-j)})^tB_j \nonumber\\
&\qquad\qquad+\sum_{i=1}^{s} M_{i+1}^{m}(a2^{-(m-1)})^{q_i-1}(q_i-1)^d+\sum_{i=1}^{s-1}q_i^d(a2^{-m})^{q_i} (n-M_{i})M_{i+1}^m\\
&=\frac{1}{m+1}\sum_{j=0}^{m}\sum_{p=j}^{m}\binom{m+1}{p+1}\binom{p+1}{j}2^p\sum_{i=1}^{s} M_{i+1}^{m-p}2^{(p+1-j)(q_i-1)}\sum_{t=0}^{q_i-1} t^d(a2^{-(m-j)})^tB_j\\
&\qquad\qquad +\sum_{i=1}^{s-1}q_i^d(a2^{-m})^{q_i} (n-M_{i})M_{i+1}^m
\end{align*}
while, if $q_1=0$,
\begin{align*}
\aadm_{n} & =\frac{1}{m+1}\sum_{j=0}^{m}\sum_{p=j}^{m}\binom{m+1}{p+1}\binom{p+1}{j}2^p\sum_{i=1}^{s} M_{i+1}^{m-p}2^{(p+1-j)(q_i-1)}\sum_{t=0}^{q_i-2} t^d(a2^{-(m-j)})^tB_j \nonumber\\
&\qquad\qquad+\sum_{i=2}^{s} M_{i+1}^{m}(a2^{-(m-1)})^{q_i-1}(q_i-1)^d+\sum_{i=1}^{s-1}q_i^d(a2^{-m})^{q_i} (n-M_{i})M_{i+1}^m\\
& =\frac{1}{m+1}\sum_{j=0}^{m}\sum_{p=j}^{m}\binom{m+1}{p+1}\binom{p+1}{j}2^p\sum_{i=2}^{s} M_{i+1}^{m-p}2^{(p+1-j)(q_i-1)}\sum_{t=0}^{q_i-2} t^d(a2^{-(m-j)})^tB_j \nonumber\\
&\qquad\qquad+\sum_{i=2}^{s} M_{i+1}^{m}(a2^{-(m-1)})^{q_i-1}(q_i-1)^d+\sum_{i=1}^{s-1}q_i^d(a2^{-m})^{q_i} (n-M_{i})M_{i+1}^m\\&=\frac{1}{m+1}\sum_{j=0}^{m}\sum_{p=j}^{m}\binom{m+1}{p+1}\binom{p+1}{j}2^p\sum_{i=2}^{s} M_{i+1}^{m-p}2^{(p+1-j)(q_i-1)}\sum_{t=0}^{q_i-1} t^d(a2^{-(m-j)})^tB_j\\
&\qquad\qquad +\sum_{i=1}^{s-1}q_i^d(a2^{-m})^{q_i} (n-M_{i})M_{i+1}^m\\
&=\frac{1}{m+1}\sum_{j=0}^{m}\sum_{p=j}^{m}\binom{m+1}{p+1}\binom{p+1}{j}2^p\sum_{i=1}^{s} M_{i+1}^{m-p}2^{(p+1-j)(q_i-1)}\sum_{t=0}^{q_i-1} t^d(a2^{-(m-j)})^tB_j\\
&\qquad\qquad +\sum_{i=1}^{s-1}q_i^d(a2^{-m})^{q_i} (n-M_{i})M_{i+1}^m
\end{align*}
So, in summary, for every $n\geq 1$,
\begin{align}
\aadm_{n} & =\frac{1}{m+1}\sum_{j=0}^{m}\sum_{p=j}^{m}\binom{m+1}{p+1}\binom{p+1}{j}2^p\sum_{i=1}^{s} M_{i+1}^{m-p}2^{(p+1-j)(q_i-1)}\sum_{t=0}^{q_i-1} t^d(a2^{-(m-j)})^tB_j\nonumber\\
&\qquad\qquad +\sum_{i=1}^{s-1}q_i^d(a2^{-m})^{q_i} (n-M_{i})M_{i+1}^m\label{eqn:aadmprov}
\end{align}
It remains to simplify the first term in the right-hand side expression:
\begin{align*}
& \sum_{j=0}^{m}\sum_{p=j}^{m}\binom{m+1}{p+1}\binom{p+1}{j}2^p\sum_{i=1}^{s} M_{i+1}^{m-p}2^{(p+1-j)(q_i-1)}\sum_{t=0}^{q_i-1} t^d(a2^{-(m-j)})^tB_j \\
&\quad =\sum_{i=1}^{s}\sum_{t=0}^{q_i-1} t^da^t\sum_{j=0}^{m}\sum_{p=j}^{m}\binom{m+1}{j}\binom{m+1-j}{m-p}M_{i+1}^{m-p}2^{(p+1-j)q_i+j-1-(m-j)t}B_j \\
&\quad =\sum_{i=1}^{s}\sum_{t=0}^{q_i-1} t^da^t\sum_{j=0}^{m}\sum_{p=0}^{m-j}\binom{m+1}{j}\binom{m+1-j}{p}M_{i+1}^{p}2^{(m-p+1-j)q_i+j-1-(m-j)t}B_j \\
&\quad =\sum_{i=1}^{s}\sum_{t=0}^{q_i-1} t^da^t\sum_{j=0}^{m}\Big(\sum_{p=0}^{m-j}\binom{m+1-j}{p}M_{i+1}^{p}2^{(m+1-j-p)q_i}\Big)\binom{m+1}{j}2^{j-1-(m-j)t}B_j \\
&\quad =\sum_{i=1}^{s}\sum_{t=0}^{q_i-1} t^da^t\sum_{j=0}^{m}(M_{i}^{m+1-j}-M_{i+1}^{m+1-j})\binom{m+1}{j}2^{j-1-(m-j)t}B_j \\
&\quad =\frac{1}{2}\sum_{j=0}^{m}\binom{m+1}{j}B_j 2^{j}\sum_{i=1}^{s}(M_{i}^{m+1-j}-M_{i+1}^{m+1-j})T(d,q_i,a2^{j-m})\\
&\quad =\frac{1}{2}\sum_{j=0}^{m}\binom{m+1}{j}B_j 2^{j}\sum_{i=1}^{s}M_{i}^{m+1-j}\big(T(d,q_i,a2^{j-m})-T(d,q_{i-1},a2^{j-m})\big)
\end{align*}
So, returning back to Eqn.~(\ref{eqn:aadmprov}), we finally obtain
\begin{align*}
\aadm_{n} & =\frac{1}{2(m+1)}\sum_{i=1}^{s}\sum_{j=0}^{m}\binom{m+1}{j}B_j 2^{j}M_{i}^{m+1-j}\big(T(d,q_i,a2^{j-m})-T(d,q_{i-1},a2^{j-m})\big)\nonumber\\
&\qquad\qquad +\sum_{i=1}^{s-1}q_i^d(a2^{-m})^{q_i} (n-M_{i})M_{i+1}^m 
\end{align*}
as we claimed.
\end{proof}

\begin{rem}
Recall from Eqn. (\ref{eqn:alphad0na}) that
$$
\alpha^{(d,0)}_n = \sum_{i=1}^{s_n} 2^{q_i(n)-1}T(d,q_{i}(n),a) +\sum_{i=1}^{s_n-1}q_i(n)^da^{q_i(n)}(n-M_i(n))-T(d,q_{s_n}(n),2a)
$$
Now, if we set $m=0$ in the right-hand side term of the expression for $\alpha^{(d,m)}_n$, when $m>0$, established in the last proposition, we obtain
\begin{align*}
& \frac{1}{2}\sum_{i=1}^{s_n}M_{i}(n)\big(T(d,q_i(n),a)-T(d,q_{i-1}(n),a)\big)+\sum_{i=1}^{s_n-1}q_i(n)^da^{q_i(n)} (n-M_{i}(n)) \\
&\qquad =\frac{1}{2}\sum_{i=1}^{s_n}T(d,q_i(n),a)(M_{i}(n)-M_{i+1}(n))\big)+\sum_{i=1}^{s_n-1}q_i(n)^da^{q_i(n)} (n-M_{i}(n))\\
&\qquad =\sum_{i=1}^{s_n}T(d,q_i(n),a)2^{q_i(n)-1}+\sum_{i=1}^{s_n-1}q_i(n)^da^{q_i(n)} (n-M_{i}(n))\\
&\qquad =\alpha^{(d,0)}_n+T(d,q_{s_n}(n),2a)
\end{align*}
Therefore, we can combine Eqn. (\ref{eqn:alphad0na}) and Proposition \ref{lem:experiment} into a single formula:
\begin{align*}
\aadm_{n} & =\frac{1}{2(m+1)}\sum_{i=1}^{s}\sum_{j=0}^{m}\binom{m+1}{j}B_j 2^{j}M_{i}^{m+1-j}\big(T(d,q_i,a2^{j-m})-T(d,q_{i-1},a2^{j-m})\big)\nonumber\\
&\qquad\qquad +\sum_{i=1}^{s-1}q_i^d(a2^{-m})^{q_i} (n-M_{i})M_{i+1}^m -T(d,q_{s_n}(n),2a)\cdot \delta_{m=0}
\end{align*}
with $\delta_{m=0}=1$ if $m=0$ and $\delta_{m=0}=0$ if $m>0$.
\end{rem}

\begin{rem}
When $n=2^l$, the formula in Proposition \ref{lem:experiment}  says
\begin{align*}
\aadm_{2^l} & =\frac{1}{2(m+1)}\sum_{j=0}^{m}\binom{m+1}{j}B_j 2^{j+(m+1-j)l}T(d,l,a2^{j-m})\\
& =\frac{2^{(m+1)l-1}}{m+1}\sum_{j=0}^{m}\binom{m+1}{j}B_j 2^{-j(l-1)}T(d,l,a2^{j-m})
\end{align*}
and this expression is equivalent to the one established in Corollary \ref{lem:alphad2l},
$$
\aadm_{2^l} =\frac{2^{(m+1)l-1}}{m+1}\sum_{j=0}^{m}\binom{m+1}{j}B_j 2^{-j(l-1)}T(d,l-1,a2^{j-m}),
$$
because
$$
\sum_{j=0}^{m}\binom{m+1}{j} B_j 2^{-j(l-1)}(a2^{j-m})^{l-1}=
2^{-m(l-1)}\sum_{j=0}^{m}\binom{m+1}{j} B_j =0
$$
\end{rem}


\end{document}